\documentclass{article}
\usepackage[english]{babel}
\usepackage{graphicx} 
\usepackage[letterpaper,top=2cm,bottom=2cm,left=3cm,right=3cm,marginparwidth=1.75cm]{geometry}
\usepackage[colorlinks,linkcolor=blue,citecolor=red,urlcolor=blue]{hyperref}
\usepackage{upgreek}
\usepackage{xspace}
\usepackage{amsmath, amssymb, amsthm}
\usepackage[ruled]{algorithm}
\usepackage{algorithmic}
\usepackage{authblk}
\usepackage{natbib}
\usepackage{multibib}
\newcites{app}{References for Supplementary Material}

\def\real{\mathbb{R}}
\def\cP{{\mathcal P}}

\def\N{{\mathcal N}}

\def\cG{{\mathcal G}}

\DeclareMathOperator*{\dom}{dom}
\newcommand{\KL}{\mathop{\mathrm{KL}}\nolimits}
\newcommand{\cF}{\mathcal F}
\newcommand{\X}{\mathsf E}
\newcommand{\calX}{\mathcal X}
\newcommand{\ssp}[1]{\langle #1 \rangle}
\DeclareMathOperator*{\argmin}{argmin}
\newcommand{\norm}[1]{\ensuremath{\Vert #1 \Vert}}

\newcommand{\bounded}[1][\calX]{\ensuremath{\mathcal{B}_b(#1)}}
\def\weight{G_n}
\def\weightN{G_n^N}

\def\predictiveN{\eta^N_n}
\def\predictive{\eta_n}
\def\updateN{\hat{\eta}_n^N}
\def\update{\hat{\eta}_n}
\newcommand{\bgN}{\ensuremath{\Psi_{G_n^N}}\xspace}
\newcommand{\bg}{\ensuremath{\Psi_{G_n}}\xspace}

\DeclareMathOperator{\Exp}{\mathbb{E}}
\newcommand{\supnorm}[1]{\norm{#1}_{\infty}}
\newcommand{\testfn}{\ensuremath{\varphi} }
\def\sfresampling{\mathcal{F}_{n-1}^N}
\def\sfmutation{\mathcal{G}_{n-1}^N}

\newtheorem{definition}{Definition}
\newtheorem{example}{Example}
\newtheorem{prop}{Proposition}
\newtheorem{cor}{Corollary}
\newtheorem{lemma}{Lemma}
\newtheorem{remark}{Remark}
\newtheorem{assumption}{Assumption}

\usepackage[textwidth=1.8cm, textsize=scriptsize]{todonotes}

\setlength{\marginparwidth}{2cm}

\title{A mirror descent approach to maximum likelihood estimation in latent variable models}
\author[1]{Francesca Romana Crucinio\thanks{ \href{mailto:francescaromana.crucinio@unito.it}{francescaromana.crucinio@unito.it}
    The author gratefully acknowledges the ``de Castro" Statistics Initative at the \textit{Collegio Carlo Alberto} and the \textit{Fondazione Franca e Diego de Castro}. The author is  supported by the Gruppo
Nazionale per l'Analisi Matematica, la Probabilità e le loro Applicazioni (GNAMPA-INdAM).}}
\date{ }

\affil[1]{ESOMAS, University of Turin, Italy \& Collegio Carlo Alberto, Turin, Italy}
\begin{document}

\maketitle

\graphicspath{{Images/}}

\begin{abstract}
    We introduce an approach based on mirror descent and sequential Monte Carlo (SMC) to perform joint parameter inference and posterior estimation in latent variable models. This approach is based on minimisation of a functional over the parameter space and the space of probability distributions and, contrary to other popular approaches, can be implemented when the latent variable takes values in discrete spaces. We provide a detailed theoretical analysis of both the mirror descent algorithm and its approximation via SMC. We experimentally show that the proposed algorithm outperforms standard expectation maximisation algorithms and is competitive with other popular methods for real-valued latent variables.
\end{abstract}

\section{Introduction}

Parameter inference for latent variable models (LVMs) is a classical task in statistical learning. These models are flexible and can describe the hidden structure of complex data such as images \citep{bishop2006pattern}, text \citep{blei2003latent}, audio \citep{smaragdis2006probabilistic}, and graphs \citep{hoff2002latent}.
LVMs are probabilistic models with observed data $y$ and likelihood
$p_\theta(x, y)$ parametrised by $\theta\in \real^{d_\theta}$, where $x\in\calX$ is a latent variable which cannot be observed.
In the frequentist inference framework, the interest is in estimating the parameter through maximisation of the marginal likelihood of the observed data 
\begin{align}
\label{eq:mmle}
\theta^\star \in \arg \max_{\theta \in \real^{d_\theta}} p_\theta(y) = \arg \max_{\theta \in \real^{d_\theta}} \int p_\theta(x, y) d x,
\end{align}
an approach often called maximum marginal likelihood estimation (MMLE).
A pragmatic compromise between frequentist
and Bayesian approaches, is the empirical Bayes paradigm in which the MMLE is complemented by uncertainty estimation over the latent variables $x$ via the posterior $p_\theta(x|y)= p_\theta(x, y)/p_\theta(y)$. These two tasks are intertwined, and often estimation of $\theta^\star$ and of the corresponding posterior need to be performed simultaneously.

The standard approach for solving~\eqref{eq:mmle} is the expectation-maximisation (EM) algorithm \citep{dempster1977maximum}, which was first proposed in the context of missing data. EM proceeds iteratively by alternating an expectation step with respect to the latent variables and a maximisation step with respect to the parameter. The expectation step requires knowledge of the posterior $p_\theta(\cdot|y)$ while the maximisation step assumes that a surrogate of $p_\theta(y)$ can be maximised analytically.
The wide use of the EM algorithm is due to the fact that it can be implemented using approximations for both steps: analytic maximisation can be replaced by numerical optimisation \citep{meng1993maximum, liu1994ecme} and the expectation step can be approximated via Monte Carlo sampling from $p_\theta(\cdot|y)$ \citep{wei1990monte, celeux1985sem}. When exact sampling from the posterior is unfeasible, approximate samples can be drawn via Markov chain Monte Carlo (MCMC; \cite{de2021efficient, delyon1999convergence}) leading to stochastic approximation EM (SAEM). 

Recently, \cite{kuntz2023particle} explored an approach based on \cite{neal1998view} which shows that the EM algorithm is equivalent to performing coordinate descent of a free energy functional, whose minimum is the maximum likelihood estimate. They propose several interacting particle algorithms to address the optimisation problem. 
An alternative approach to MMLE is to define a distribution which concentrates on $\theta^\star$; this can be achieved borrowing techniques from simulated annealing (see, e.g., \cite{van1987simulated}). Any Monte Carlo method sampling from such a distribution would then approximate $\theta^\star$: \cite{gaetan2003multiple, jacquier2007mcmc, doucet2002marginal} consider MCMC algorithms, \cite{johansen2008particle} sequential Monte Carlo algorithms and \cite{akyildiz2023interacting} unadjusted Langevin algorithms.

Our work also stems from the observation that maximising $p_\theta(y)$ is equivalent to minimising a certain functional $\cF$ over the product space $\real^{d_\theta}\times \cP(\calX)$, but, contrary to \cite{kuntz2023particle} which apply a gradient descent strategy to minimise $\cF$ and obtain an algorithm based on a pair of stochastic differential equations, we consider a mirror descent approach which does not require knowledge of $\nabla_x \log p_\theta(x, y)$. As a consequence our method can be applied in settings in which the joint likelihood is not differentiable in $x$ and in cases in which $\calX$ is a discrete space. This is often the case in mixture models and models for graphs.

Leveraging the connection between mirror descent established in \cite{chopin2023connection} and sequential Monte Carlo (SMC) algorithms we propose an SMC method to perform joint parameter inference and posterior approximation in LVMs. We also consider a second SMC approximation to speed up computation time. We provide a theoretical analysis of the developed methods and obtain precise error bounds for the parameter.
We consider a wide range of toy and challenging experiments and compare with EM and its variants as well as methods sampling from a distribution concentrating on $\theta^\star$. Compared to EM, methods based on minimisation of the functional $\cF$ (as ours) and on simulated annealing suffer less with issues related to local maximisers. Compared to other methods based on minimisation of $\cF$, our approach does not require differentiability in $x$.

This paper is organised as follows.
In Section~\ref{sec:md}, we introduce mirror descent and its adaptation for the MMLE problem. In Section~\ref{sec:smc}, we introduce the necessary background on SMC, describe two numerical approximations of mirror descent for MMLE via SMC and provide their theoretical analysis. 
In Section~\ref{sec:ex}, we show comprehensive numerical experiments comparing the results obtained with our method with EM and other competitors. We conclude with Section~\ref{sec:conclusion}. Code to reproduce all experiments is available at \url{https://github.com/FrancescaCrucinio/MD_LVM}. Proofs of all results and additional experimental details can be found in the Supplement.

\subsection*{Notation}
Let $\X$ be a topological vector space endowed with the Borel $\sigma$-field $\mathcal{B}(\X)$. We denote by $\X^*$ the dual of $\X$ and for any $z\in\X$ and $z^\star\in\X^*$ we denote the dot product by $\ssp{z^\star, z}$.
We denote by $\cP(\X)$ the set of probability measures on $\X$. The Kullback--Leibler divergence is defined as follows: for $\nu,\mu \in \cP(\X)$, $\KL(\nu|\mu)=\int \log(\frac{d\nu}{d\mu}) d\nu$ if $\nu$ is absolutely continuous with respect to $\mu$ with Radon-Nikodym density $\frac{d\nu}{d\mu}$, and $+\infty$ else.

We denote by $\mathcal{N}(x; m, \Sigma)$ the density of a multivariate Gaussian with mean $m$ and covariance $\Sigma$ and by $\textsf{Unif}(a, b)$ the uniform distribution over $[a, b]$.
Throughout the manuscript we assume $\theta\in \real^{d_\theta}$, $y\in \real^{d_y}$ and $x\in\calX$ with $\calX\subseteq \real^{d_x}$ or $\calX\subseteq \mathbb{Z}^{d_x}$.
\section{A mirror descent approach to maximum likelihood}
\label{sec:md}

\subsection{Background on mirror descent}

Let $\cF:\X\rightarrow \real^+$ be a functional on $\X$ and consider the optimisation problem $\min_{z \in \X} \cF(z)$.
Mirror descent \cite{nemirovsky1983problem} is a first-order optimisation scheme relying on the derivatives of the objective functional, and a geometry on the search space induced by Bregman divergences.

\begin{definition}[Derivative]
\label{def:first_var}
If it exists, the derivative of $\cF$ at $z_1$ is the  element $\nabla \cF(z_1) \in\X^\star $ s.t. for any $z_2 \in \X$, with $\xi = z_2-z_1$:
\begin{equation*}
\lim_{\epsilon \rightarrow 0}\frac{1}{\epsilon}(\cF(z_1+\epsilon  \xi) -\cF(z_1))
=\ssp{\nabla\cF(z_1), \xi},
\end{equation*} 
and is defined uniquely up to an additive constant.
\end{definition}
If $\X=\real$, then this notion of derivative coincides with the standard one, while if $\X=\mathcal{P}(\real^d)$ this corresponds to the first variation of  $\cF$ at $z_1$.

\begin{definition}[Bregman divergence]
Let $\phi:\X\rightarrow \real^+$ be a convex and continuously differentiable functional on $\X$. The $\phi$-Bregman divergence is defined for any $z_1,z_2 \in \X$ by:
\begin{equation}
\label{eq:breg_probas}
     B_{\phi}(z_1|z_2)=\phi(z_1)-\phi(z_2)-\ssp{\nabla \phi(z_2), z_1- z_2},
\end{equation}
where $\nabla \phi(z_2)$ denotes the derivative of $\phi$ at $z_2$.
\end{definition}

Given an initial $z_0\in \X$ and a sequence of step-sizes $(\gamma_n)_{n\ge 1}$, one can generate the sequence $(z_{n})_{n \geq 0}$  as follows
\begin{align}\label{eq:prox_algo}
    z_{n+1}=\argmin_{z \in \X}\left\{ \cF(z_n) +\ssp{ \nabla \cF(z_n), z-z_n}+ (\gamma_{n+1})^{-1} B_{\phi}(z|z_n) \right\}.
\end{align}
Writing the first order conditions of \eqref{eq:prox_algo}, we obtain the dual iteration
\begin{equation}   \label{eq:dual_iter} \nabla \phi(z_{n+1} ) -  \nabla \phi(z_n) = - \gamma_{n+1} \nabla \cF(z_n).
\end{equation}

\begin{remark}
If $\X=\real^d$ and $B_\phi = \norm{\cdot}^2/2$, then~\eqref{eq:dual_iter} is equivalent to the standard gradient descent algorithm. Let $\X=\cP(\real^d)$, $\cF(\mu)=\KL(\mu|\pi)$, \cite{chopin2023connection} shows that mirror descent with $B_{\phi}(\pi|\mu)=\KL(\pi|\mu)$ leads to the tempering (or annealing) sequence: for $\lambda_n = 1-\prod_{k=1}^n(1-\gamma_k)$,
\begin{equation}\label{eq:tempering}
  \mu_{n+1}\propto\mu_n^{(1-\gamma_{n+1})}\pi^{\gamma_{n+1}}=\mu_0^{\prod_{k=1}^n(1-\gamma_k)}\pi^{1- \prod_{k=1}^n(1-\gamma_k)} = \mu_0^{1-\lambda_n}\pi^{\lambda_n}.
\end{equation}
\end{remark}

\subsection{Mirror descent for maximum marginal likelihood estimation}
Let $y\in\real^{d_y}$ denote the observed data, $x\in\calX$ the latent variables and $\theta\in\real^{d_\theta}$ the parameter of interest. 
The goal of maximum marginal likelihood estimation (MMLE) is to find the parameter $\theta^\star$ that maximises the marginal likelihood $p_\theta(y)= \int p_\theta(x, y) d x$.

\cite{neal1998view, kuntz2023particle} show that minimisation of the functional $\mathcal{F}:\real^{d_\theta}\times\cP(\calX)\to \mathbb{R}$, defined by
\begin{align}
\label{eq:f}
    \mathcal{F}(\theta, \mu) &:= \KL(\mu| p_\theta(\cdot, y))
    = \int  U(\theta, x)\mu(x)dx + \int \log\left( \mu(x)\right)\mu( x)dx,
\end{align}
where we defined the negative log-likelihood as $U(\theta, x) := - \log p_\theta(x, y)$ for any fixed $y\in \real^{d_y}$, is equivalent to marginal maximum likelihood estimation.

In the following we assume that all probability measures considered admit a density w.r.t a dominating measure (e.g. the Lebesgue measure if $\calX\subseteq\real^{d_x}$ or the counting measure if $\calX\subseteq\mathbb{Z}^{d_x}$).
We also assume that $(\theta, x)\mapsto U(\theta, x)$ is sufficiently regular for $\nabla_\theta U(\theta, x)$ to be well-defined and that Leibniz integral rule for differentiation under the integral sign
(e.g. \cite[Theorem 16.8]{billingsley1995measure}) allows us to swap gradients with integrals.


In order to define a mirror descent scheme for $\cF$, we need an appropriate notion of derivative and Bregman divergence (see Appendix~\ref{app:ingredients_md} for a proof). 
\begin{prop} 
\label{prop:derivative}
\begin{enumerate}
\item Recall that for a functional $\cF$ the derivative $\nabla \cF$ is the element of the dual such that
$\lim_{\epsilon \rightarrow 0}\epsilon^{-1}(\cF(z_1+\epsilon  \xi) -\cF(z_1))
=\ssp{\nabla\cF(z_1), \xi}$. The derivative of $\cF$ in~\eqref{eq:f} is given by $\nabla \cF:\real^d\times\cP(\calX)\to \real^d\times\real$ where $$\nabla \cF(\theta, \mu) = \begin{pmatrix}\int \nabla_\theta  U(\theta, x)\mu( x)dx\\ \log\mu(x)+ U(\theta, x) )\end{pmatrix}.$$
\item Let $B_h$ be any Bregman divergence over $\real^{d_\theta}$ and denote $z = (\theta, \mu)$. Then $B_\phi(z_1|z_2) = B_h(\theta_1| \theta_2)+\KL(\mu_1|\mu_2)$ is a Bregman divergence over $\real^{d_\theta}\times\cP(\calX)$ given by the potential $\phi:(\theta, \mu)\mapsto \int \log (\mu(x))\mu(x)dx + h(\theta)$.
\end{enumerate}

\end{prop}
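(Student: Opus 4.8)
The plan is to prove the two parts separately, since they are nearly independent: part 1 is a direct computation of the directional derivative of $\cF$, and part 2 is a verification that the proposed $\phi$ is convex, continuously differentiable, and yields the claimed Bregman divergence.

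For part 1, I would fix $z_1 = (\theta_1, \mu_1)$ and a direction $\xi = (v, \rho)$ with $v\in\real^{d_\theta}$ and $\rho$ a signed measure of total mass zero (so that $\mu_1 + \epsilon\rho$ remains a probability measure for small $\epsilon$), and simply expand
\begin{equation*}
\cF(\theta_1+\epsilon v, \mu_1+\epsilon\rho) = \int U(\theta_1+\epsilon v, x)(\mu_1+\epsilon\rho)(x)\,dx + \int \log\big((\mu_1+\epsilon\rho)(x)\big)(\mu_1+\epsilon\rho)(x)\,dx.
\end{equation*}
Differentiating at $\epsilon=0$: the first term contributes $\int \nabla_\theta U(\theta_1,x)\cdot v\,\mu_1(x)\,dx + \int U(\theta_1,x)\rho(x)\,dx$, where swapping $\frac{d}{d\epsilon}$ with the integral is justified by the Leibniz-rule assumption stated in the text. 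The entropy term contributes $\int (\log\mu_1(x) + 1)\rho(x)\,dx$, and the additive constant $\int \rho(x)\,dx = 0$ can be dropped (this is exactly the "uniquely up to an additive constant" clause of Definition~\ref{def:first_var}). Collecting the coefficients of $v$ and of $\rho$ identifies $\nabla\cF(\theta,\mu)$ with the stated pair, noting that the $x$-component is read off as the first variation, i.e. as a function of $x$. A minor point worth stating is that one should restrict to directions $\rho$ for which $\mu_1+\epsilon\rho \geq 0$ and for which the integrals are finite; this is the routine "admissible perturbation" caveat and I would mention it but not belabour it.

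For part 2, the argument is: (i) $\phi(\theta,\mu) = \mathrm{Ent}(\mu) + h(\theta)$ is convex on $\real^{d_\theta}\times\cP(\calX)$ because the negative entropy $\mu\mapsto\int\log(\mu)\,d\mu$ is convex (a standard fact, e.g. via Jensen or the variational formula for entropy) and $h$ is convex by hypothesis, and a sum of convex functionals in separate variables is jointly convex; (ii) $\phi$ is continuously differentiable with derivative $\nabla\phi(\theta,\mu) = (\nabla h(\theta),\ \log\mu(\cdot)+1)$, obtained by the same computation as in part 1 applied to the entropy; (iii) plugging this into the definition~\eqref{eq:breg_probas} of the Bregman divergence and using $\ssp{(\nabla h(\theta_2), \log\mu_2+1),\ (\theta_1-\theta_2,\ \mu_1-\mu_2)} = \ssp{\nabla h(\theta_2),\theta_1-\theta_2} + \int(\log\mu_2+1)(\mu_1-\mu_2)\,dx$, the $\theta$-part reassembles into $B_h(\theta_1|\theta_2)$ and the $\mu$-part into
\begin{equation*}
\int \mu_1\log\mu_1 - \int\mu_2\log\mu_2 - \int(\log\mu_2+1)(\mu_1-\mu_2) = \int \mu_1\log\frac{\mu_1}{\mu_2} = \KL(\mu_1|\mu_2),
\end{equation*}
again using that $\int(\mu_1-\mu_2)\,dx=0$ kills the stray $+1$. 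Hence $B_\phi(z_1|z_2) = B_h(\theta_1|\theta_2) + \KL(\mu_1|\mu_2)$ as claimed.

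The main obstacle is not any single computation but the functional-analytic bookkeeping: making precise in which space $\X = \real^{d_\theta}\times\cP(\calX)$ lives as a topological vector space, what its dual is, which perturbation directions $\xi$ are admissible (finite entropy, mass zero, nonnegativity preserved for small $\epsilon$), and ensuring the differentiation-under-the-integral steps for both $\nabla_\theta U$ and the logarithmic entropy term are covered by the regularity assumptions already imposed. Once those conventions are fixed — and the excerpt has essentially fixed them by fiat via the Leibniz-rule assumption and the "up to an additive constant" convention — both parts are short, essentially one-line, computations. I would therefore devote most of the written proof to stating the admissibility conventions cleanly and then let the two expansions above go through.
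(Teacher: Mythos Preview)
Your proposal is correct and follows essentially the same route as the paper's proof: a direct expansion of $\cF(z_1+\epsilon\xi)-\cF(z_1)$ to read off the derivative in part 1, and a direct verification that the Bregman formula for $\phi(\theta,\mu)=\mathrm{Ent}(\mu)+h(\theta)$ reproduces $B_h+\KL$ in part 2. Your write-up is somewhat more careful than the paper's about admissible perturbations, the $+1$ constant, and the convexity of $\phi$ (the paper simply states $\nabla\phi=(\nabla h,\log\mu)$ and checks the Bregman identity without dwelling on these points), but the substance is the same.
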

Given that $\nabla\phi(\mu) = \log \mu$, plugging the above into~\eqref{eq:dual_iter} we obtain
\begin{align*}
    \nabla h(\theta_{n+1}) - \nabla h(\theta_{n}) &= -\gamma_{n+1}
        \int \nabla_\theta U(\theta_n, x)\mu_n(x)dx\\
        \log (\mu_{n+1}(x)) - \log(\mu_n(x)) &=-\gamma_{n+1}\left[\log(\mu_n(x)) + U(\theta_n, x)\right].
\end{align*}
Exponentiating the second component and since $\nabla h$ is bijective due to the convexity of $h$, we can write the following updates:
\begin{align}
\label{eq:md_update_mmle}
    \theta_{n+1} &= (\nabla h)^{-1}\left(\nabla h(\theta_{n}) - \gamma_{n+1}\int \nabla_\theta U(\theta_n, x)\mu_n(x)dx\right)\\
    \mu_{n+1}(x) &\propto \mu_n(x)^{(1-\gamma_{n+1})}p_{\theta_n}(x, y)^{\gamma_{n+1}}, \label{eq:md_update}
\end{align}
which corresponds to a standard mirror descent step in $\real^d$ for $\theta$ and an update in the space of probability measures for $\mu$. However, contrary to~\eqref{eq:tempering}, the target distribution $\pi\equiv p_{\theta_n}$ changes at every iteration.
The equations above lead to an iterative scheme to perform MMLE which only requires the derivative of $p_\theta$ with respect to $\theta$, contrary to the schemes proposed in \cite{kuntz2023particle, akyildiz2023interacting}, which minimise the same functional. In addition, as we show in Section~\ref{sec:smc}, the iteration over $\mu_n$ can be efficiently implemented via sequential Monte Carlo (see, e.g., \cite{chopin2020introduction}).

Following the approach of \cite{lu2018relatively, aubin2022mirror} we can obtain an explicit convergence result for the scheme~\eqref{eq:md_update_mmle} under the following assumptions:

\begin{assumption}
\label{ass:convex}
    Assume that $\theta \mapsto U(\theta, x)$ is $l$-relatively convex with respect to $h$ uniformly in $x$, that is, for some $l\geq 0$
\begin{align*}
     U(\theta_2, x) \geq  U(\theta_1, x)+\ssp{\nabla_\theta    U(\theta_1, x), \theta_2-\theta_1}+l B_h(\theta_2|\theta_1),
\end{align*}
and $L$-relatively smooth with respect to $h$ uniformly in $x$, for some $L>0$, that is,
\begin{align*}
     U(\theta_2, x) \leq  U(\theta_1, x)+\ssp{\nabla_\theta    U(\theta_1, x), \theta_2-\theta_1}+L B_h(\theta_2|\theta_1).
\end{align*}
\end{assumption}
Relative smoothness is a weaker condition than gradient-Lipschitz continuity and relative strong convexity implies standard strong convexity since $B_h(\theta_2|\theta_1) \geq \norm{\theta_2-\theta_1}^2/2$ \citep{lu2018relatively}.
These assumptions are similar to those of \cite{akyildiz2023interacting,caprio2024error}; however, in our case, we can limit ourselves to uniform convexity and smoothness in $\theta$ and do not need a similar assumption on the $x$ component.

The proof of the following proposition is given in Appendix~\ref{app:ingredients_md} and follows along the lines of that of \citet[Proposition 1]{chopin2023connection}.
\begin{prop}[Convergence of Mirror Descent]
\label{cor:convergence}
    Let $(\theta_0, \mu_0) \in \real^d\times\cP(\calX)$ be an initial pair of parameter and distribution. Denote by $\theta^\star$ the MMLE and by $p_{\theta^\star}(\cdot|y)$ the corresponding posterior distribution. Under Assumption
    ~\ref{ass:convex} and if $\gamma_n\leq \min(l, 1, L^{-1})^{-1}$ for all $n\geq 1$, we have
\begin{align*}
0&\leq \mathcal{F}(\theta_n, \mu_n) - \log p_\theta^\star(y)\le (\gamma_1)^{-1} \prod_{k=1}^n(1-\gamma_k\min(l, 1)) [\KL(p_{\theta^\star}(\cdot|y)|\mu_0)+B_h(\theta^\star| \theta_0)] \overset{n\to \infty}{\to} 0.
\end{align*}
\end{prop}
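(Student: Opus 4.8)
The plan is to reduce the statement to the standard telescoping estimate for relatively smooth and relatively strongly convex mirror descent, carried out on the product space $\real^{d_\theta}\times\cP(\calX)$ with the Bregman geometry of Proposition~\ref{prop:derivative}(2), essentially as in \citet[Proposition~1]{chopin2023connection} and \citet{lu2018relatively}.

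First I would record the identity $\cF(\theta,\mu)=\KL(\mu\,|\,p_\theta(\cdot\,|\,y))-\log p_\theta(y)$, obtained from~\eqref{eq:f} by adding and subtracting $\log p_\theta(y)$. Since the $\KL$ term is nonnegative and vanishes precisely at the posterior, $\mu\mapsto\cF(\theta,\mu)$ is minimised at $p_\theta(\cdot\,|\,y)$ with value $-\log p_\theta(y)$, so that $\min_{\theta,\mu}\cF(\theta,\mu)=-\log p_{\theta^\star}(y)$ is attained at $z^\star:=(\theta^\star,p_{\theta^\star}(\cdot\,|\,y))$. This gives the left inequality for free, $\cF(\theta_n,\mu_n)-\log p_{\theta^\star}(y)=\cF(z_n)-\cF(z^\star)\ge 0$, and reduces the claim to a geometrically contracting upper bound on $\cF(z_n)-\cF(z^\star)$.

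For the upper bound, write $z_n=(\theta_n,\mu_n)$ and $\phi(\theta,\mu)=h(\theta)+\int\log(\mu(x))\mu(x)\,\rmd x$. I would combine the first-order optimality of~\eqref{eq:prox_algo} (equivalently the dual iteration~\eqref{eq:dual_iter}) with the three-point identity $B_\phi(z^\star|z_n)-B_\phi(z^\star|z_{n+1})-B_\phi(z_{n+1}|z_n)=-\gamma_{n+1}\ssp{\nabla\cF(z_n),\,z^\star-z_{n+1}}$, split $\ssp{\nabla\cF(z_n),z^\star-z_{n+1}}=\ssp{\nabla\cF(z_n),z^\star-z_n}+\ssp{\nabla\cF(z_n),z_n-z_{n+1}}$, and estimate each piece using (i) for the $\mu$-coordinate the exact second-order expansion $\cF(\theta,\nu)=\cF(\theta,\mu)+\ssp{\nabla_\mu\cF(\theta,\mu),\nu-\mu}+\KL(\nu|\mu)$, i.e. relative smoothness and relative convexity with constant $1$ with respect to the negative entropy; and (ii) for the $\theta$-coordinate Assumption~\ref{ass:convex}, using that its bounds hold uniformly in $x$ and therefore survive integration against any probability measure. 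Because $\phi$ is separable, $B_\phi(z_{n+1}|z_n)=B_h(\theta_{n+1}|\theta_n)+\KL(\mu_{n+1}|\mu_n)$, and the step-size condition $\gamma_{n+1}\le\min(l,1,L^{-1})^{-1}$ makes the coefficients multiplying these divergences nonpositive, so they may be dropped. The outcome is a one-step inequality of the form $\gamma_{n+1}\big(\cF(z_{n+1})-\cF(z^\star)\big)\le\big(1-\gamma_{n+1}\min(l,1)\big)B_\phi(z^\star|z_n)-B_\phi(z^\star|z_{n+1})$. Discarding the nonnegative left-hand side yields $B_\phi(z^\star|z_{n+1})\le(1-\gamma_{n+1}\min(l,1))B_\phi(z^\star|z_n)$, hence $B_\phi(z^\star|z_n)\le\prod_{k=1}^n(1-\gamma_k\min(l,1))B_\phi(z^\star|z_0)$; substituting this back into the one-step inequality, using monotonicity of $n\mapsto\cF(z_n)$, and noting $B_\phi(z^\star|z_0)=B_h(\theta^\star|\theta_0)+\KL(p_{\theta^\star}(\cdot|y)|\mu_0)$, gives exactly the stated bound, which tends to $0$ because every factor lies in $[0,1)$.

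The step I expect to be the real obstacle is the one-step estimate: $\cF$ is only separately convex in $\theta$ and in $\mu$ (indeed it is linear in $\mu$), not jointly convex, so the second-order expansion of the joint functional carries an extra coupling term $\ssp{\int\nabla_\theta U(\theta_n,x)\,(\mu-\mu_n)\,\rmd x,\ \theta-\theta_n}$ which is not sign-definite. Showing that this term is absorbed — so that one may genuinely use the joint relative-convexity constant $\min(l,1)$ and relative-smoothness constant $\max(L,1)$ in the three-point argument — is where the uniformity in $x$ of Assumption~\ref{ass:convex}, the exactness (constant $1$) of the entropic expansion, and the step-size restriction all have to be exploited; this is the part that genuinely departs from the single-variable computation in \citet{chopin2023connection}.
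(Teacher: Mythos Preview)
Your target one-step inequality $\gamma_{n+1}\big(\cF(z_{n+1})-\cF(z^\star)\big)\le(1-\gamma_{n+1}\min(l,1))B_\phi(z^\star|z_n)-B_\phi(z^\star|z_{n+1})$ and the subsequent telescoping are exactly what the paper establishes; the final induction on the rate constant is identical. The route to the one-step estimate, however, is organised differently, and this matters for the cross term you correctly flag as the obstacle.

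The paper does \emph{not} invoke the joint three-point identity for $B_\phi$. It works coordinate-by-coordinate: first apply $L$-relative smoothness of $\theta\mapsto U(\theta,x)$ pointwise in $x$ and integrate against $\mu_{n+1}$, then the three-point inequality (Lemma~\ref{lem:three-point}) in $\theta$ alone, then $l$-relative convexity in $\theta$, to reach
\[
\cF(z_{n+1})\le\Big(\frac{1}{\gamma_{n+1}}-l\Big)B_h(\theta\,|\,\theta_n)-\frac{1}{\gamma_{n+1}}B_h(\theta\,|\,\theta_{n+1})+\KL\!\big(\mu_{n+1}\,\big|\,p_\theta\big).
\]
The residual $\KL(\mu_{n+1}|p_\theta)$ is then bounded by a second, independent pass of the same three steps (all constants equal to $1$) on the $\mu$-coordinate, exactly as in \citet{chopin2023connection}. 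Summing the two blocks yields the one-step inequality with constant $\min(l,1)$. The key structural point is that the $\theta$-block is fully processed into a pure-$\mu$ residual before the $\mu$-block is touched, so a \emph{joint} relative-convexity constant for $\cF$ is never needed and your coupling term does not appear in this ordering. Your joint three-point route could in principle reach the same inequality, but it would require an explicit argument controlling the cross term; the paper's decomposition is precisely the device that bypasses it. One caveat you may wish to scrutinise when reproducing the paper's argument: the paper applies the $\theta$-three-point inequality with $\bar z=\theta_{n+1}$ and linear part $\gamma_{n+1}\ssp{\nabla_\theta U(\theta_n,x),\cdot}$, whereas $\theta_{n+1}$ is the minimiser for the $\mu_n$-averaged linear functional, not the pointwise (nor the $\mu_{n+1}$-averaged) one---so a residual of the same cross-term type is tacitly absorbed there as well.
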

Proposition~\ref{cor:convergence} guarantees that the iterates~\eqref{eq:md_update_mmle} converge the the minimiser of $\cF$.
Due to the nature of mirror descent not requiring differentiability in $x$ of $U$, our results apply to both $\calX\subseteq \real^{d_x}$ and discrete spaces such as $\calX\subseteq \mathbb{Z}^{d_x}$. 
\section{Sequential Monte Carlo for mirror descent}
\label{sec:smc}
\subsection{Background on SMC}
Sequential Monte Carlo (SMC) samplers \citep{del2006sequential} provide particle approximations of a sequence of distributions $(\mu_n)_{n= 0}^T$ using clouds of $N$ weighted particles $\{X_n^i, W_n^i\}_{i=1}^N$. 
To build an SMC sampler one needs the sequence of distributions $(\mu_n)_{n= 0}^T$, a family of Markov kernels $(M_n)_{n= 1}^T$ and a resampling scheme.
The sequence of distributions $(\mu_n)_{n= 0}^T$ is 
chosen to bridge an easy-to-sample from $\mu_0$ (e.g. the prior) to the target of interest $\mu_T=\pi$ (e.g. the posterior). A popular choice is
~\eqref{eq:tempering} with $0=\lambda_0\leq \dots \leq\lambda_T=1$. In this case, SMC samplers provide an approximation to mirror descent.

Starting from $\{X_0^i, W_0^i\}_{i=1}^N$ approximating $\mu_0$, SMC evolves the particle cloud to approximate the sequence of distributions $(\mu_n)_{n= 0}^T$ by sequentially updating the particle locations via the Markov kernels $(M_n)_{n= 1}^T$ and reweighting the newly obtained particles using a set of weights. After reweighting the particles are resampled. Broadly speaking, a resampling scheme is a selection mechanism which given a set of weighted samples $\lbrace X_n^i, W_n^i\rbrace_{i=1}^N$ outputs a sequence of equally weighted samples $\lbrace \widetilde{X}_n^i, 1/N\rbrace_{i=1}^N$ in which $\widetilde{X}_n^i = X_n^j$ for some $j$ for all $i=1,\ldots, N$. For a review of commonly used resampling schemes see \cite{Gerber2019}. At $n=T$, the particles approximate $\mu_T=\pi$ (see Appendix~\ref{app:proof} for the algorithm).

For simplicity, we focus on the case in which the Markov kernels $M_n$ are $\mu_{n-1}$-invariant. This choice allows, under some conditions \citep[3.3.2.3]{del2006sequential}, to obtain the following expression for the importance weights to move from distribution $\mu_{n-1}$ to $\mu_n$
\begin{align}\label{eq:smc_weight}
w_{n}(x)\propto\frac{\mu_{n}(x)}{\mu_{n-1}(x)}.
\end{align}

\subsection{An SMC algorithm for MMLE}

We exploit the connection between mirror descent and SMC samplers established in \cite{chopin2023connection} to derive an SMC algorithm which approximates the iterates~\eqref{eq:md_update_mmle}.
First, we focus on obtaining an approximation of the $\theta$-component of~\eqref{eq:md_update_mmle}.
Assume that we have available a cloud of $N$ weighted particles $\{X_n^i, W_n^i\}_{i=1}^N$ approximating $\mu_n$; in this case we can approximate the $\theta$ update with
\begin{align}
    \label{eq:theta_update}
    \theta_{n+1}^N &= (\nabla h)^{-1}\left(\nabla h(\theta_{n}^N) - \gamma_{n+1}\sum_{i=1}^N W_n^i \nabla_\theta U(\theta_n^N, X_n^i)\right).
\end{align}
If $h=\norm{\cdot}^2/2$, updates of the form~\eqref{eq:theta_update} are popular in the particle filtering literature \citep{poyiadjis2011particle}.

Under the assumption that the sequence $(\theta_n)_{n\geq 0}$ is fixed, the sequence 
$(\mu_n)_{n\geq 0}$ in~\eqref{eq:md_update_mmle} can be approximated through the SMC sampler described above.
The weights $w_n$ are
\begin{align*}
    w_n(x) & \propto \frac{\mu_n(x)}{\mu_{n-1}(x)}= \frac{\mu_{n-1}(x)^{(1-\gamma_{n})}p_{\theta_{n-1}}(x, y)^{\gamma_{n}}}{\mu_{n-1}(x)}=\left(\frac{p_{\theta_{n-1}}(x, y)}{\mu_{n-1}(x)}\right)^{\gamma_{n}}.
\end{align*}

Proceeding recursively, one finds that
\begin{align}
\label{eq:smc_bad}
    \mu_{n+1}(x) &\propto \mu_n(x)^{(1-\gamma_{n+1})}p_{\theta_n}(x, y)^{\gamma_{n+1}} \\
    &\propto \mu_0(x)^{\prod_{k=1}^{n+1}(1-\gamma_k)}p_{\theta_n}(x, y)^{\gamma_{n+1}}p_{\theta_{n-1}}(x, y)^{\gamma_{n}(1-\gamma_{n+1})}\dots p_{\theta_{0}}(x, y)^{\gamma_{1}\prod_{k=2}^{n+1}(1-\gamma_k)}\notag,
\end{align}
which gives the following expression for the weights
\begin{align}
\label{eq:md_weights}
    w_n(x; \theta_{0:n-1}) &= \left(\frac{p_{\theta_{n-1}}(x, y)}{\mu_0(x)^{\prod_{k=1}^{n-1}(1-\gamma_k)}p_{\theta_{n-2}}(x, y)^{\gamma_{n-1}}\dots p_{\theta_{0}}(x, y)^{\gamma_{1}\prod_{k=2}^{n-1}(1-\gamma_k)}}\right)^{\gamma_{n}},
\end{align}
with $\prod_{k=p}^q \cdot= 1$ if $p>q$, which can be computed in $\mathcal{O}(1)$ time in the number of particles.

An SMC approximation of the $\mu$-update in~\eqref{eq:md_update_mmle} is given by a weighted particle population $\{X_n^i, W_n^i\}_{i=1}^N$ approximating $\mu_n$ for each $n$. 
Combining this approximation with the $\theta$-update in~\eqref{eq:theta_update} we obtain the mirror descent algorithm for LVMs (MD-LVM) in Algorithm~\ref{alg:smc_lvm}.

\begin{algorithm}[th]
\begin{algorithmic}[1]
\STATE{\textit{Inputs:} sequence of step sizes $(\gamma_n)_{n\geq 1}$, Markov kernels $(M_n)_{n\geq 1}$, initial proposal $(\theta_0^N, \mu_0)$.}
\STATE{\textit{Initialise:} sample $X_0^i=\widetilde{X}_0^i\sim \mu_0$ and set $W_0^i=1/N$ for $i=1,\dots, N$.}
\FOR{$n\geq 1$}
\STATE{\textit{Update}: set $\theta_{n}^N = (\nabla h)^{-1}\left(\nabla h(\theta_{n-1}^N) - \gamma_{n}\sum_{i=1}^N W_{n-1}^i \nabla_\theta U(\theta_{n-1}^N, X_{n-1}^i)\right)$} 
\IF{$n>1$}
\STATE{\textit{Resample:} draw $ \{\widetilde{X}_{n-1}^i\}_{i=1}^N$ independently from $\{X_{n-1}^i, W_{n-1}^i\}_{i=1}^N$ and set $W_n^i =1/N$ for $i=1,\dots, N$.}
\ENDIF
\STATE{\textit{Propose:} draw $X_n^i\sim M_{n}( \widetilde{X}_{n-1}^i, \cdot;\theta_{0:n-2}^N)$ for $i=1,\dots, N$.}
\STATE{\textit{Reweight:} compute and normalise the weights $W_n^i \propto w_{n}(X_n^i;\theta^N_{0:n-1})$ in~\eqref{eq:md_weights} for $i=1,\dots, N$.}
\ENDFOR
\STATE{\textit{Output:} $(\theta_n^N, \{X_n^i,W_n^i\}_{i=1}^N)$}
\end{algorithmic}
\caption{Mirror descent for latent variable models (MD-LVM).}\label{alg:smc_lvm}
\end{algorithm}

\subsubsection{A practical algorithm for MMLE (SMCs-LVM)}
\label{sec:smc_lvm}
The SMC algorithm described above approximates the iterates~\eqref{eq:md_update_mmle}, However, its complexity increases linearly with $n$ as the weights~\eqref{eq:md_weights} involve an increasing number of terms and so do the target distributions $\mu_n$.
This makes the corresponding SMC scheme impractical if $n$ is large, as it will be the case in high-dimensional problems in which the learning rate $\gamma_n$ needs to be sufficiently small to avoid instabilities in the $\theta$-update.

To alleviate this issue we propose to swap the $\mu$-update in the iteration~\eqref{eq:md_update_mmle} with the tempering one~\eqref{eq:tempering} with $\pi=p_{\theta_n}$, yielding
\begin{align}
\label{eq:md_fast}
    \widetilde{\mu}_{n+1}(x)\propto \mu_0(x)^{\prod_{k=1}^{n+1}(1-\gamma_k)}p_{\theta_n}(x, y)^{1-\prod_{k=1}^{n+1}(1-\gamma_k)}.
\end{align}
The two iterations coincide for fixed $\pi$, but since in our case $p_\theta$ varies at each iteration, \eqref{eq:md_update_mmle} and~\eqref{eq:tempering} are not the same in general.

When using $\widetilde{\mu_n}$-invariant Markov kernels, the importance weights are given by
\begin{align}
\label{eq:md_weights_fast}
    \widetilde{w}_n(x; \theta_{n-2:n-1}) &\propto \frac{\widetilde{\mu}_n(x)}{\widetilde{\mu}_{n-1}(x)}= \frac{p_{\theta_{n-1}}(x, y)^{1-\prod_{k=1}^{n}(1-\gamma_k)}}{p_{\theta_{n-2}}(x, y)^{1-\prod_{k=1}^{n-1}(1-\gamma_k)}\mu_0(x)^{\gamma_{n}\prod_{k=1}^{n-1}(1-\gamma_k)}}.
\end{align}
The weights~\eqref{eq:md_weights_fast} only require $\theta_{n-1}, \theta_{n-2}$ to be computed and therefore have constant complexity in $n$; similarly, one can select the Markov kernels $\widetilde{M}_n$ 
to only depend on $\theta_{n-2}$. We name this algorithm sequential Monte Carlo sampler for LVMs (SMCs-LVM).

To motivate replacing $\mu_n$ with $\widetilde\mu_n$, observe that for $n=1$, $\mu_1\equiv \widetilde\mu_1$, and for all $n\geq 2$ (see Appendix~\ref{app:ratio} for a proof)
\begin{align}
\label{eq:ratio}
    \frac{\mu_{n}(x)}{\widetilde{\mu}_{n}(x)}&\propto \prod_{k=0}^{n-2}\left(\frac{p_{\theta_k}(x, y)}{p_{\theta_{n-1}}(x, y)}\right)^{\gamma_{k+1}\prod_{j=k+2}^n(1-\gamma_j)}\\
    &= \left(\frac{p_{\theta_{0}}(x, y)}{p_{\theta_{n-1}}(x, y)}\right)^{\gamma_1\prod_{j=2}^n(1-\gamma_j)}\dots \left(\frac{p_{\theta_{n-2}}(x, y)}{p_{\theta_{n-1}}(x, y)}\right)^{\gamma_{n-1}(1-\gamma_n)}.\notag
\end{align}
Under our smoothness assumptions, for small step-sizes $(\gamma_n)_{n\geq 1}$, we have $\theta_{n-1}\approx\theta_{n-2}$
and similarly for $\theta_{n-1}\approx\theta_{n-3}$. It follows that the last terms in~\eqref{eq:ratio} are close to 1. For the first terms in~\eqref{eq:ratio}, the discrepancy between $\theta_{n-1}$ and $\theta_k$ is large, but $\gamma_{k+1}\prod_{j=k+2}^n(1-\gamma_j)\approx 0$ for large $n$ since $\gamma_n\leq 1$. It follows that also the initial terms in~\eqref{eq:ratio} are close to 1. This intuition is empirically confirmed by Example~\ref{ex:toy} and the results in Appendix~\ref{app:ratio}.
\begin{example}
Consider the toy LVM given by $x|\theta  \sim  \mathcal{N}(\cdot;\theta\textsf{1}_{d_x}, \textsf{Id}_{d_x})$ and $ y|x \sim \mathcal{N}(\cdot;x, \textsf{Id}_{d_x})$ for $\theta = 1$, $d_x=50$ and one data point $y$.
We can explicitly compute the maximum likelihood estimator $\theta^\star = {d_x}^{-1}\sum_{i=1}^{d_x} y_i$ and the posterior distribution $p_{\theta}(x|y) = \mathcal{N}(x; (y+\theta)/2, 1/2\textsf{Id}_{d_x})$. 
Assumption~\ref{ass:convex} is satisfied with $h = \norm{\cdot}^2/2$ (see Appendix~\ref{app:ratio}).
Replacing~\eqref{eq:md_update_mmle} with~\eqref{eq:md_fast} leads to the same results (Figure~\ref{fig:toy}) but the savings in terms of computational cost are considerable: using~\eqref{eq:md_fast} instead of~\eqref{eq:smc_bad} is about 100 times faster.

\begin{figure}
	\centering
	\begin{tikzpicture}[every node/.append style={font=\normalsize}]
 \node (img1) {\includegraphics[width = 0.4\textwidth]{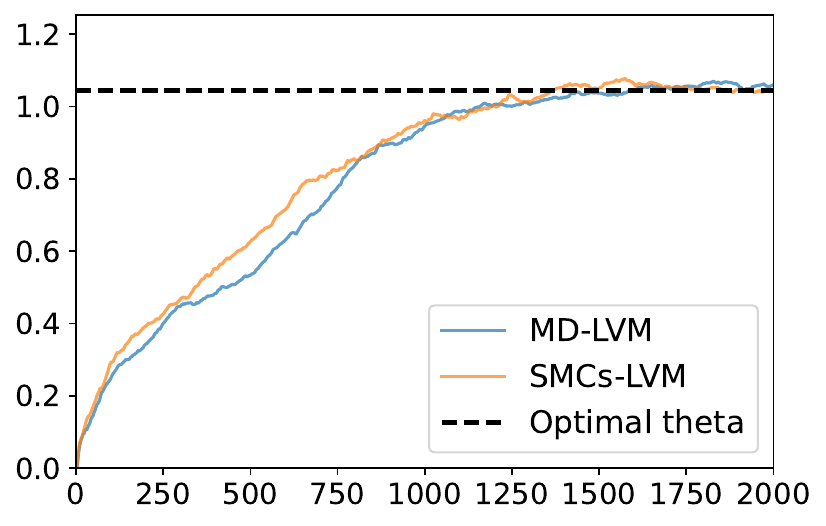}};
  \node[left=of img1, node distance = 0, rotate = 90, anchor = center, yshift = -0.8cm] {$\theta_n$};
  \node[right=of img1, node distance = 0, xshift = -0.5cm] (img2) {\includegraphics[width = 0.4\textwidth]{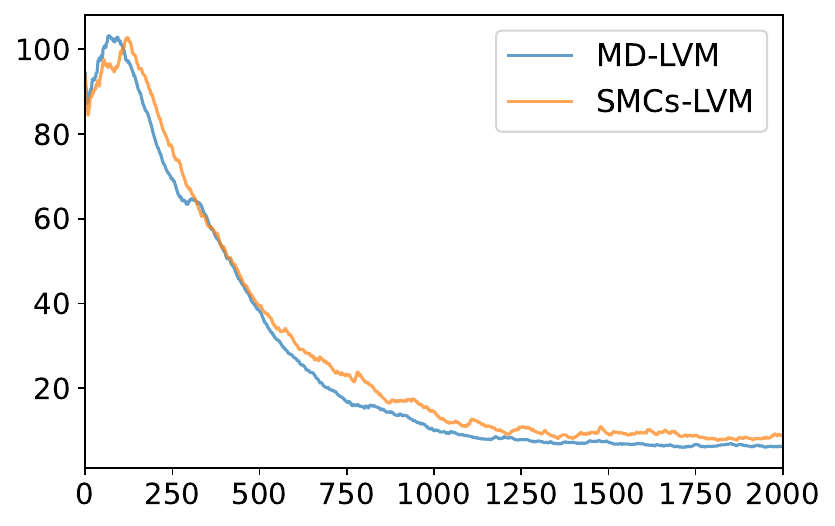}};
	\node[below=of img1, node distance = 0, yshift = 1.2cm] {$n$};
    \node[below=of img2, node distance = 0, yshift = 1.2cm] {$n$};
    \node[left=of img2, node distance = 0, rotate = 90, anchor = center, yshift = -0.8cm] {$\KL(\mu_n|p_{\theta_\star}(\cdot|y))$};
	\end{tikzpicture}
	\caption{Comparison of~\eqref{eq:smc_bad} and~\eqref{eq:md_fast} on a toy Gaussian model. Left: evolution of $\theta$-iterates. Right: evolution of KL divergence from the true posterior $p_{\theta_\star}(x|y)$. }
	\label{fig:toy}
\end{figure}

\end{example}
\subsubsection{Algorithmic setup}

MD-LVM and SMCs-LVM require the specification of the number of iterations $T$, the step sizes $(\gamma_n)_{n\geq 1}$, the initial proposal $\mu_0$, the Markov kernels $M_n, \widetilde{M}_n$, and of $h$.

For our experiments, the initial distribution $\mu_0$ is a standard Gaussian when $\calX\subseteq \real^{d_x}$ and the uniform distribution when $\calX$ is a discrete space. The value of $\theta_0$ is specified for each experiment. 
For the choice of $\mu_{n-1}$-invariant Markov kernels we refer to the wide literature on MCMC (see, e.g., \citet[Chapter 15]{chopin2020introduction}). A common choice, and the one we use in our experiments when  $\calX\subseteq \real^{d_x}$, is a random walk Metropolis kernel whose covariance can be tuned using the particle system \citep{dau2022waste}.

When $\theta\in \real^{d_\theta}$, if $\nabla_\theta U$ is Lipschitz continuous and convex , the natural choice is $h=\norm{\cdot}^2/2$ (i.e. gradient descent). When the domain of $\theta$ is a proper subset of $\real^{d_\theta}$, $h$ could be chosen to enforce the constraints  (e.g. $h(u) = \sum_{i=1}^{d_\theta}u_i\log u_i$ for the $\real^{d_\theta}$-simplex), see Section~\ref{ex:sbm} for an example. For non-Lipschitz $\nabla_\theta U$, $h$ should be selected so that Assumption~\ref{ass:convex} holds (see, e.g., \cite{lu2018relatively}).

The choice of step size $(\gamma_n)_{n\ge 1}$ and number of iterations carries the same difficulties as those encountered in selecting the step size for gradient descent (for the $\theta$-component) and for selecting the appropriate tempering schedule in SMC for the $\mu$-component. While we expect adaptive strategies similar to those employed in the SMC literature \citep{jasra2011inference} to carry over to this context, their use is less straightforward as the target distribution changes at each iteration.

When choosing $(\gamma_n)_{n\ge 1}$, one needs to take into account not only its use in the $\theta$-update~\eqref{eq:theta_update}, for which guidelines on choosing $(\gamma_n)_{n\ge 1}$ are given by Proposition~\ref{cor:convergence} and the vast literature on mirror descent \citep{lu2018relatively}, but also its use in the $\mu$-update.
In particular, the weights~\eqref{eq:md_weights_fast} can be unstable if $\theta_{n-2}, \theta_{n-1}$ are too far apart. While this is partially mitigated by an appropriate choice of $h$ (e.g. for constrained spaces), 
a pragmatic choice is to choose $\gamma_n$ small enough to guarantee that the update for $\theta$ and the mirror descent step for $\mu$ are stable, and then adapt the number of iterations $T$ so that the $\theta$-component has converged.  

\subsection{Convergence properties}

Algorithm~\ref{alg:smc_lvm} provides an approximation of the mirror descent iterates~\eqref{eq:md_update_mmle}. To assess the quality of these approximations we adapt results from the SMC literature (e.g., \cite{smc:theory:Del04}) to our context and combine them with additional results needed to control the effect of the varying $\theta$. 
Since the true sequence $(\theta_n)_{n\geq 0}$ is not known but approximated via~\eqref{eq:theta_update}, Algorithm~\ref{alg:smc_lvm} uses weight functions and Markov kernels which are random and approximate the true but unknown weight functions and kernels. In the case of a fixed $\theta$-sequence, the Markov kernels and weights coincide with those of a standard SMC algorithm and the results for standard SMC samplers (e.g. \cite{smc:theory:Del04}) hold.

We provide convergence bounds for both the $\mu$-iterates and $\theta$-iterates. 
As in standard SMC literature, in the case of the $\mu$-iterates we focus on the approximation error for measurable bounded test functions $\testfn:\calX\to\real$ with $\supnorm{\testfn}:= \sup_{x\in  \calX}\vert \testfn(x)\vert<\infty$, a set we denote by $\bounded$. 
We make the following stability assumptions on $M_n$ and $w_n$:
\begin{assumption}
\label{ass:kernel}
Let the dependence of $M_{n}$ on $\theta$ be explicit and define $M_{n, \theta_{0:n-2}}(x, \cdot):=M_n(x, \cdot;\theta_{0:n-2})$.
The Markov kernels $M_{n, \theta_{0:n-2}}$ are stable with respect to $(\theta_n)_{n\geq 0}$, that is, there exists a constant $\rho>0$ such that for all measurable bounded functions $\testfn\in \bounded$
\begin{align*}
    \vert M_{n,\theta_{0:n-2}}\testfn(x) - M_{n,\theta_{0:n-2}'}\testfn(x) \vert\leq \rho\supnorm{\testfn}\sum_{j=0}^{n-2}\norm{\theta_j-\theta_j'}
\end{align*}
for all $(\theta_{0:n-2}, \theta_{0:n-2}')\in(\real^{d_\theta})^{n-1}$ uniformly in $x\in\calX$, where we denote $M\testfn(x) = \int
M(x, dv) \testfn(v)$ for all $x\in\calX$, $\testfn\in \bounded$ and any Markov kernel $M$.
\end{assumption}

\begin{assumption}
\label{ass:weights}
The weights~\eqref{eq:md_weights} are bounded above, i.e. $\supnorm{w_n}<\infty$, and $\supnorm{\nabla_\theta U}<\infty$.
\end{assumption}

Assumption~\ref{ass:kernel} is a technical assumption which ensures that the kernels $M_n$ are well-behaved.
While strong, this assumption has been previously considered in the SMC literature to deal with adaptive kernels \citep{beskos2016convergence}.
The stability conditions in \cite{caprio2023calculus} relate expressions similar to that in Assumption~\ref{ass:kernel} to the invariant measure of the corresponding kernels. In our case, this would translate into a stability with respect to $\theta$ of the joint likelihood $p_\theta(x, y)$.

Assumption~\ref{ass:weights} requires the weights to be bounded above, a standard assumption in the SMC literature.
For simplicity, we also assume that $\supnorm{\nabla_\theta U}<\infty$, which implies that the weights are stable as shown in Appendix~\ref{app:proof}.
While this assumption is often not satisfied
, we point out that the results we obtain hold under weaker integrability assumptions (see, e.g. \cite{agapiou2017importance}) by further assuming that the weights $w_n$ are Lipschitz continuous in the sequence $(\theta_n)_{n\geq 0}$, at the cost of significantly complicating the analysis.

The following convergence result for Algorithm~\ref{alg:smc_lvm} is proven in Appendix~\ref{app:proof}.
\begin{prop}
\label{prop:lp} Under Assumption~\ref{ass:convex}--\ref{ass:weights}, if $h=\norm{\cdot}^2/2$
, for every time $n\geq 0$, every $p\geq 1$, $N\geq 1$ and $(\gamma_n)_{n\geq 1}$ such that $1\geq\gamma_n\geq \gamma_{n-1}\geq 0$ there exist finite non-negative constants $C_{p, n}, D_{p,n}$ such that for every measurable bounded function $\testfn\in \bounded$
\begin{align*}
\Exp\left[\left\lvert\sum_{i=1}^NW_n^i \testfn(X_n^i) -\int \testfn(x)\mu_n(x)dx\right\rvert^p\right]^{1/p} &\leq C_{p,n}\frac{\supnorm{\testfn}}{N^{1/2}},\\
\Exp\left[\Vert \theta_{n}^N-\theta_{n}\Vert^p\right]^{1/p} &\leq D_{p, n}\frac{\gamma_{n} }{N^{1/2}}.
\end{align*}
\end{prop}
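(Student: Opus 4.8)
\emph{Overall strategy.} The plan is to prove the two bounds together by induction on $n$, since the parameter recursion~\eqref{eq:theta_update} is driven by the particle approximation of $\mu_{n-1}$ while the particle system is driven by the approximate parameters $\theta^N_{0:n-1}$. The base case $n=0$ is immediate: $\theta_0^N=\theta_0$ is deterministic and $X_0^1,\dots,X_0^N$ are i.i.d.\ from $\mu_0$, so a Marcinkiewicz--Zygmund inequality gives the $N^{-1/2}$ rate with a constant depending only on $p$. Assuming both bounds at all times $\le n-1$, at step $n$ I would first prove the $\theta$-bound (which uses the $\mu$-bound at $n-1$) and then the $\mu$-bound (which uses the $\mu$-bound at $n-1$ and the $\theta$-bounds at times $\le n-1$ -- all available, since the weights~\eqref{eq:md_weights} and kernels used at step $n$ depend on $\theta^N$ only up to time $n-1$).

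\emph{Parameter iterate.} With $h=\norm{\cdot}^2/2$ we have $\nabla h=\mathrm{id}$, so
\[
\theta_n^N-\theta_n=(\theta_{n-1}^N-\theta_{n-1})-\gamma_n\Big(\sum_{i}W_{n-1}^i\nabla_\theta U(\theta_{n-1}^N,X_{n-1}^i)-\int\nabla_\theta U(\theta_{n-1},x)\mu_{n-1}(x)\,dx\Big).
\]
I would split the bracket as $\sum_i W_{n-1}^i\big[\nabla_\theta U(\theta_{n-1}^N,X_{n-1}^i)-\nabla_\theta U(\theta_{n-1},X_{n-1}^i)\big]$, whose norm is at most $L\norm{\theta_{n-1}^N-\theta_{n-1}}$ because Assumption~\ref{ass:convex} with $h=\norm{\cdot}^2/2$ (relative $L$-smoothness and convexity in the Euclidean geometry) makes $\nabla_\theta U(\cdot,x)$ $L$-Lipschitz uniformly in $x$; plus $\eta_{n-1}^N\big(\nabla_\theta U(\theta_{n-1},\cdot)\big)-\mu_{n-1}\big(\nabla_\theta U(\theta_{n-1},\cdot)\big)$, the particle error for the fixed, bounded (Assumption~\ref{ass:weights}) function $\nabla_\theta U(\theta_{n-1},\cdot)$, hence $\le C_{p,n-1}\supnorm{\nabla_\theta U}N^{-1/2}$ in $\mathbb{L}_p$ by the inductive hypothesis. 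Minkowski's inequality, the bound $\mathbb{E}[\norm{\theta_{n-1}^N-\theta_{n-1}}^p]^{1/p}\le D_{p,n-1}\gamma_{n-1}N^{-1/2}$, and $\gamma_{n-1}\le\gamma_n\le1$ then give $\mathbb{E}[\norm{\theta_n^N-\theta_n}^p]^{1/p}\le\gamma_n N^{-1/2}\big[(1+L)D_{p,n-1}+C_{p,n-1}\supnorm{\nabla_\theta U}\big]=:D_{p,n}\gamma_n N^{-1/2}$; the monotonicity $\gamma_{n-1}\le\gamma_n$ is precisely what turns the naturally appearing $\gamma_{n-1}$ into the stated $\gamma_n$.

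\emph{Particle approximation of $\mu_n$.} Write $w_n=w_n(\cdot;\theta_{0:n-1})$, $w_n^N=w_n(\cdot;\theta^N_{0:n-1})$, $M_{n,\theta}=M_n(\cdot,\cdot;\theta_{0:n-2})$, $\eta_n^N=\sum_i W_n^i\delta_{X_n^i}$; since $M_{n,\theta}$ is $\mu_{n-1}$-invariant, $\mu_n(\varphi)=\mu_{n-1}(w_n\varphi)/\mu_{n-1}(w_n)$. Conditioning on the $\sigma$-field $\mathcal F_{n-1}^N$ of the system up to time $n-1$ (which freezes $\eta_{n-1}^N$ and $\theta^N_{0:n-1}$ while the step-$n$ selection and mutation remain fresh), I would decompose the numerator of $\eta_n^N(\varphi)$ and, with $\varphi\equiv1$, its denominator:
\begin{multline*}
\tfrac1N\sum_i w_n^N\varphi(X_n^i)-\mu_{n-1}(w_n\varphi)
=\tfrac1N\sum_i(w_n^N-w_n)\varphi(X_n^i)
+\Big(\tfrac1N\sum_i w_n\varphi(X_n^i)-\eta_{n-1}^N\big(M_{n,\theta^N}(w_n\varphi)\big)\Big)\\
+\eta_{n-1}^N\big((M_{n,\theta^N}-M_{n,\theta})(w_n\varphi)\big)
+\big(\eta_{n-1}^N(w_n\varphi)-\mu_{n-1}(w_n\varphi)\big).
\end{multline*}
The second bracket is the step-$n$ selection-plus-mutation fluctuation about its conditional mean -- conditionally an average of i.i.d.\ centred variables bounded by $\supnorm{w_n}\supnorm{\varphi}$ -- hence $O(N^{-1/2})$ by Marcinkiewicz--Zygmund; the last bracket is the $\mu_{n-1}$-error at the bounded function $M_{n,\theta}(w_n\varphi)$, hence $\le C_{p,n-1}\supnorm{w_n}\supnorm{\varphi}N^{-1/2}$ by induction; the first and third brackets are the new ingredients, bounded in the next paragraph. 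Passing from the quotient $\eta_n^N(\varphi)$ to $\mu_n(\varphi)$ through $|a/b-a'/b'|\le|a-a'|/b+(|a'|/b')\,|b-b'|/b$, with the denominators concentrating (at fixed $n$) around $\mu_{n-1}(w_n)=:\mathcal Z_n>0$ and using $\supnorm{w_n}<\infty$ (Assumption~\ref{ass:weights}), and combining the four contributions via Minkowski, yields $\mathbb{E}[|\eta_n^N(\varphi)-\mu_n(\varphi)|^p]^{1/p}\le C_{p,n}\supnorm{\varphi}N^{-1/2}$, closing the induction.

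\emph{Main obstacle.} The genuinely new, and I expect hardest, terms are the first and third brackets, which price in the randomness of the weights and kernels. For the third, Assumption~\ref{ass:kernel} gives $|\eta_{n-1}^N((M_{n,\theta^N}-M_{n,\theta})(w_n\varphi))|\le\rho\supnorm{w_n}\supnorm{\varphi}\sum_{j=0}^{n-2}\norm{\theta_j^N-\theta_j}$; taking $\mathbb{L}_p$ norms and inserting the $\theta$-bounds (available for $j\le n-2$) gives $O(N^{-1/2})$. For the first, one needs $\supnorm{w_n^N-w_n}\le\mathrm{const}\cdot\sum_{j=0}^{n-1}\norm{\theta_j^N-\theta_j}$ -- the weight-stability property derived in Appendix~\ref{app:proof} from $\supnorm{\nabla_\theta U}<\infty$ (again using that, at a fixed horizon, the denominators in~\eqref{eq:md_weights} stay bounded below) -- after which the $\theta$-bounds close it. The delicate points I anticipate are: (i) choosing the conditioning so that $\eta_{n-1}^N$ and $\theta^N_{0:n-1}$ are frozen while the fresh step-$n$ operations are isolated; (ii) controlling every normalising constant with only an \emph{upper} bound on the weights, which is why one must fix the horizon $n$, so that each $\mathcal Z_k>0$ and each $C_{p,k},D_{p,k}$ is finite; and (iii) running the $\theta$- and $\mu$-inductions in lockstep in the order above, which keeps the argument non-circular.
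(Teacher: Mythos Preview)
Your proposal is correct and follows essentially the same route as the paper: induction on $n$, a Marcinkiewicz--Zygmund inequality for the conditionally i.i.d.\ fluctuations, the weight-stability consequence of Assumption~\ref{ass:weights} for the first bracket, Assumption~\ref{ass:kernel} for the third, Lipschitz continuity of $\nabla_\theta U$ for the $\theta$-recursion, and the same recursion $D_{p,n}=(1+L)D_{p,n-1}+\supnorm{\nabla_\theta U}\,C_{p,n-1}$. The only differences are cosmetic: the paper splits the resampling and mutation fluctuations into two separate lemmas whereas you merge them into your second bracket, and your displayed fourth bracket should read $\eta_{n-1}^N\big(M_{n,\theta}(w_n\varphi)\big)-\mu_{n-1}(w_n\varphi)$ (as your prose already says) for the telescoping sum to close.
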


The first result in Proposition~\ref{prop:lp} quantifies the maximum approximation error for the $\mu$-iterates, while the second one quantifies the maximum error in recovering the $\theta$-iterates. An equivalent convergence result can be obtained for the algorithm described in Section~\ref{sec:smc_lvm} and the iterates~\eqref{eq:md_fast}.
If $\calX= \real^{d_x}$, under 
additional assumptions on the regularity of $U$
, we can obtain the global error achieved by Algorithm~\ref{alg:smc_lvm}. The proof is given in Appendix~\ref{app:proof}.

\begin{cor}
\label{cor:parameter}
Assume that $\theta \mapsto p_\theta(\cdot|y)$ is twice differentiable and that $p_\theta(x, y)>0$ for all $(\theta, x)\in \real^{d_\theta}\times \real^{d_x}$. Under Assumption~\ref{ass:convex}--\ref{ass:weights} with $l, L>0$, if $h=\norm{\cdot}^2/2$
, we have
\begin{align*}
    \Exp[\norm{\theta_n^N-\theta^\star}^2 ]^{1/2}\leq \sqrt{\frac{2}{l}\frac{\KL(p_{\theta^\star}(\cdot|y)|\mu_0)+\norm{\theta^\star-\theta_0}^2}{\gamma_1} \prod_{k=1}^n(1-\gamma_k\min(l, 1))}+D_{2, n}\frac{\gamma_n}{N^{1/2}},
\end{align*}
where $D_{2, n}$ is given in Proposition~\ref{prop:lp}.
\end{cor}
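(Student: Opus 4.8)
The plan is to combine the deterministic convergence of mirror descent (Proposition~\ref{cor:convergence}) with the stochastic error bound on the $\theta$-iterates (Proposition~\ref{prop:lp}) via a triangle inequality. Write
\begin{align*}
\Exp[\norm{\theta_n^N-\theta^\star}^2]^{1/2}\leq \norm{\theta_n-\theta^\star} + \Exp[\norm{\theta_n^N-\theta_n}^2]^{1/2},
\end{align*}
where $\theta_n$ is the exact mirror descent iterate. The second term is bounded by $D_{2,n}\gamma_n N^{-1/2}$ directly from the second display of Proposition~\ref{prop:lp} (with $p=2$). So the work is entirely in controlling the first, deterministic term $\norm{\theta_n-\theta^\star}$ by the square-root expression in the statement.

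For the deterministic term, the key is to convert the functional suboptimality bound from Proposition~\ref{cor:convergence} into a bound on the parameter error. First I would use relative strong convexity. Since $U(\cdot,x)$ is $l$-relatively convex with respect to $h$ uniformly in $x$ and $h=\norm{\cdot}^2/2$, integrating against $\mu$ and adding the entropy term shows $\cF(\theta,\mu)$ is $l$-strongly convex in $\theta$ (in the ordinary Euclidean sense, using $B_h(\theta_2|\theta_1)\geq\norm{\theta_2-\theta_1}^2/2$). The minimiser of $\cF$ is $(\theta^\star, p_{\theta^\star}(\cdot|y))$ with value $-\log p_{\theta^\star}(y)$. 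A standard quadratic-growth consequence of strong convexity at the minimiser then gives
\begin{align*}
\tfrac{l}{2}\norm{\theta_n-\theta^\star}^2 \leq \cF(\theta_n,\mu_n) - \cF(\theta^\star, \mu_n) \leq \cF(\theta_n,\mu_n) + \log p_{\theta^\star}(y),
\end{align*}
the last inequality because $\cF(\theta^\star,\mu_n)\geq \min_\mu \cF(\theta^\star,\mu) = -\log p_{\theta^\star}(y)$. Now invoke Proposition~\ref{cor:convergence} to bound $\cF(\theta_n,\mu_n)+\log p_{\theta^\star}(y)$ by $\gamma_1^{-1}\prod_{k=1}^n(1-\gamma_k\min(l,1))[\KL(p_{\theta^\star}(\cdot|y)|\mu_0)+\norm{\theta^\star-\theta_0}^2/2]$. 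Rearranging and taking square roots yields $\norm{\theta_n-\theta^\star}$ bounded by the first term on the right-hand side of the corollary (the factor $\norm{\theta^\star-\theta_0}^2$ rather than half of it is absorbed harmlessly, or one tracks the constant exactly). Here twice differentiability of $\theta\mapsto p_\theta(\cdot|y)$ and positivity of $p_\theta(x,y)$ are what guarantee $\cF(\theta^\star,\cdot)$ has finite minimum $-\log p_{\theta^\star}(y)$ and that the gradient/strong-convexity manipulations are licit.

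The main obstacle I anticipate is the step linking strong convexity of $\cF$ in the $\theta$-slice to the quadratic lower bound $\tfrac{l}{2}\norm{\theta_n-\theta^\star}^2\leq\cF(\theta_n,\mu_n)-\cF(\theta^\star,\mu_n)$. One must be careful that strong convexity in $\theta$ holds \emph{for each fixed $\mu$}, that $\theta^\star$ minimises $\theta\mapsto\cF(\theta,\mu_n)$ only in the limit — so the clean inequality uses $\cF(\theta^\star,\mu_n)\geq -\log p_{\theta^\star}(y)$ as a slack rather than a stationarity condition at $\theta^\star$ for the slice $\cF(\cdot,\mu_n)$. This is exactly why the argument goes through $-\log p_{\theta^\star}(y)$ as an intermediate quantity rather than trying to compare directly to $\cF(\theta^\star,\mu_n)$ via a gradient inequality. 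Everything else — the triangle inequality, plugging in Proposition~\ref{prop:lp}, and the $\sqrt{2/l}$ bookkeeping — is routine.
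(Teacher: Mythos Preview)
There is a genuine gap at your quadratic-growth step. You write
\[
\tfrac{l}{2}\norm{\theta_n-\theta^\star}^2 \leq \cF(\theta_n,\mu_n) - \cF(\theta^\star, \mu_n),
\]
invoking strong convexity of the slice $\theta \mapsto \cF(\theta, \mu_n)$. But the inequality $f(\theta) - f(\theta^\star) \geq \tfrac{l}{2}\norm{\theta-\theta^\star}^2$ only follows from $l$-strong convexity when $\theta^\star$ minimises $f$; here $\theta^\star$ minimises $\cF$ jointly over $(\theta,\mu)$, not the slice $\cF(\cdot,\mu_n)$. The gradient $\nabla_\theta \cF(\theta^\star,\mu_n) = \int \nabla_\theta U(\theta^\star,x)\,\mu_n(dx)$ is in general nonzero, so the strong-convexity inequality produces an uncontrolled cross-term $\langle \nabla_\theta \cF(\theta^\star,\mu_n),\theta_n-\theta^\star\rangle$ that could be negative. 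Your ``main obstacle'' paragraph correctly flags the lack of stationarity, but the slack you introduce---$\cF(\theta^\star,\mu_n)\geq -\log p_{\theta^\star}(y)$---only justifies the \emph{second} inequality in your chain; the first remains unproved.

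The paper does not go through the slice value $\cF(\theta^\star,\mu_n)$ at all. It first notes (via a Pr\'ekopa--Leindler type argument) that the marginal negative log-likelihood $-\log p_\theta(y)$ is itself strongly convex, so $\theta^\star$ is its unique minimiser. It then invokes an external quadratic-growth result (Caprio et al., 2024, Theorems~2 and~4), for which the twice-differentiability and positivity hypotheses in the statement are needed, to obtain directly
\[
\cF(\theta_n,\mu_n)+\log p_{\theta^\star}(y)\ \geq\ \tfrac{l}{2}\bigl(\norm{\theta_n-\theta^\star}^2 + W_2(\mu_n,p_{\theta^\star}(\cdot|y))^2\bigr)\ \geq\ \tfrac{l}{2}\norm{\theta_n-\theta^\star}^2,
\]
and then combines with Proposition~\ref{cor:convergence} and Proposition~\ref{prop:lp} exactly as you do. A self-contained way to repair your argument is to use the identity $\cF(\theta,\mu)+\log p_{\theta^\star}(y) = \KL(\mu\,|\,p_\theta(\cdot|y)) + [\log p_{\theta^\star}(y)-\log p_\theta(y)]$ and apply strong convexity of $-\log p_\theta(y)$ at its genuine minimiser $\theta^\star$, dropping the nonnegative KL term. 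Either way, the quadratic lower bound must be tied to the minimiser of the \emph{marginal} log-likelihood (or of the full functional), not to the slice $\cF(\cdot,\mu_n)$.
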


In the case $\gamma_n\equiv \gamma$, Corollary~\ref{cor:parameter} gives $\Exp[\norm{\theta_n^N-\theta^\star}^2 ]^{1/2}=\mathcal{O}\left((1-\gamma)^{n/2}+\gamma N^{-1/2}\right)$. The first term decays exponentially fast in the number of iterations and accounts for the convergence of mirror descent to $(\theta^\star, p_{\theta^\star})$, while the second term corresponds to the Monte Carlo error and discretisation bias.
Comparing this result with \citet[Theorem 3.8]{akyildiz2023interacting} and \citet[Eq. (9)]{caprio2024error} we find that our algorithm achieves an equivalent convergence rate in terms of the key parameters $\gamma, N, n$.


\section{Numerical experiments}
\label{sec:ex}

We compare our methods with popular alternatives in the literature: the particle gradient descent algorithm (PGD; \cite{kuntz2023particle}) and the interacting particle Langevin algorithm (IPLA; \cite{akyildiz2023interacting}) when $\X=\real^{d_x}$ and sequential Monte Carlo for marginal maximum likelihood (SMC-MML; \cite{johansen2008particle}) and variants of expectation maximization (EM) when PGD and IPLA cannot be applied.
All experimental details and additional results are available in the Supplement.
\subsection{Toy examples}
\label{ex:toy}

\paragraph{Gaussian Mixture}

\begin{figure}
	\centering
	\begin{tikzpicture}[every node/.append style={font=\normalsize}]
		\node (img1) {\includegraphics[width = 0.4\textwidth]{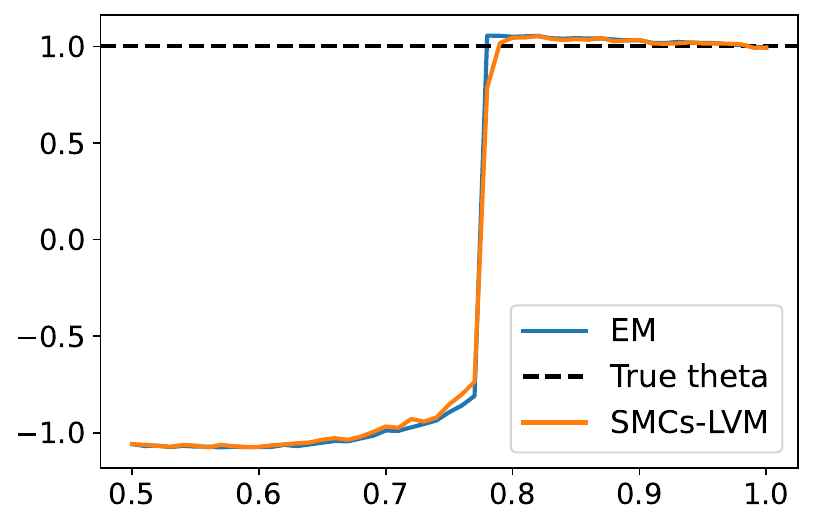}};
		\node[right=of img1, node distance = 0, xshift = -0.5cm] (img2) {\includegraphics[width = 0.4\textwidth]{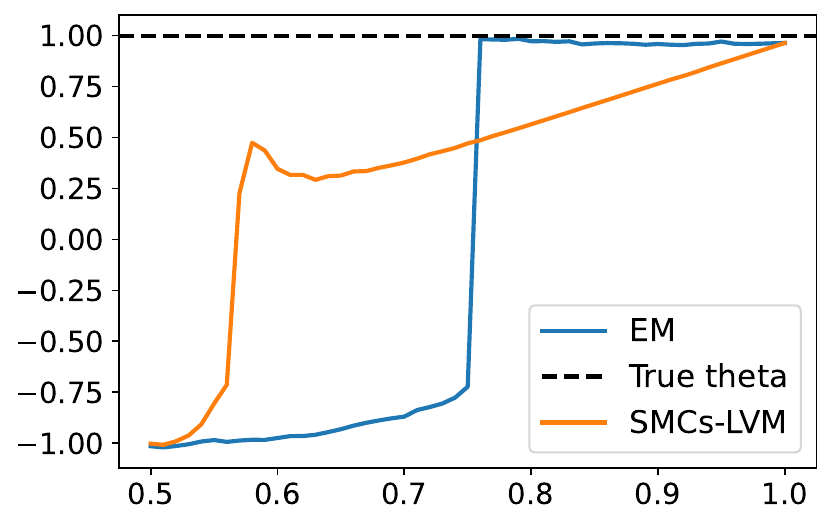}};
  \node[left=of img1, node distance = 0, rotate = 90, anchor = center, yshift = -0.8cm] {$\theta_T$};
    \node[left=of img2, node distance = 0, rotate = 90, anchor = center, yshift = -0.8cm] {$\theta_T$};
		\node[below=of img1, node distance = 0, yshift = 1.2cm] {$\alpha$};
		\node[below=of img2, node distance = 0, yshift = 1.2cm] {$\alpha$};
	\end{tikzpicture}
    \vskip -10pt
	\caption{Comparison of final $\theta$-iterate for different $\alpha$s for EM and SMCs-LVM for the Gaussian mixture model. Left: SMCs-LVM with uniform proposal. Right: SMCs-LVM with proposal $p(x)$.}
	\label{fig:em}
\end{figure}
Take the symmetric Gaussian mixture $
    p_\theta(y) = \alpha \mathcal{N}(y;\theta, 1)+(1-\alpha)\mathcal{N}(y;-\theta, 1)$,
where the latent variable $x$ corresponds to the allocation of each observation to one of the mixture components (see Appendix~\ref{app:mix} for the model specification and experimental setup).  
We simulate $1000$ data points from the model with $\theta=1$, consider $\alpha\in [0.5, 1]$ and select $\theta_0=-2$.
In the case of known $\alpha$, \citet[Theorem 1]{xu2018benefits} shows that for values of $\alpha$ close to 0.5 and starting point $\theta_0\leq -\theta$ the standard EM algorithm converges to a local maximum.

We compare EM with SMCs-LVM with two choices of $\widetilde{M}_n$: a random walk Metropolis with uniform proposal (Figure~\ref{fig:em} left) and a random walk Metropolis with proposal $p_\theta(x)=p(x)$ (Figure~\ref{fig:em} right).
EM struggles to converge to the global maximum for $\alpha\approx 0.5$, although it performs well for $\alpha>0.75$.
When using a uniform proposal the results of SMCs-LVM coincide with those of EM. When using $p(x)$ (which carries information on $\alpha$) SMCs-LVM outperforms EM for $0.5\leq\alpha<0.75$ but is less accurate for larger $\alpha$. EM generally converges faster than SMCs-LVM, but, when using the same number of iterations the runtime of SMCs-LVM is less than twice that of EM. 


\paragraph{Multimodal marginal likelihood}
To show the limitations of PGD and IPLA we consider a 
popular benchmark for multimodality from \cite{gaetan2003multiple}.
The model is $p_\theta(x) = \textrm{Gamma}(x;\alpha, \beta)$, $p_\theta(y|x) = \N(y;\theta, x^{-1})$
with $\alpha = 0.525, \beta = 0.025$. The observed data are $\{-20, 1, 2, 3\}$; 
$p_\theta(y)$ is multimodal and a global maximum is located at $1.997$.


For this example, $\nabla_\theta U(\theta, x)$ satisfies Assumption~\ref{ass:convex} only locally (see Appendix~\ref{app:multimodal}).
In addition, $\nabla_x U(\theta, x)$ is not Lipschitz continuous in $x$, this causes the ULA update employed in PGD and IPLA to be unstable as shown in Figure~\ref{fig:multimodal}.
As a consequence, PGD and IPLA fail to recover the MLE and the corresponding posterior and
the $\theta$-iterates converge to a value which is far from the global optimum.

\begin{figure}
	\centering
	\begin{tikzpicture}[every node/.append style={font=\normalsize}]
		\node(img4) {\includegraphics[width = 0.3\textwidth]{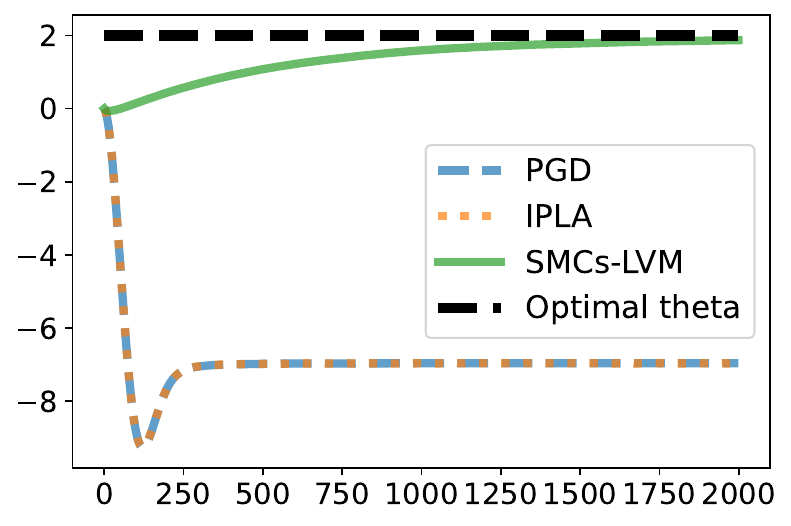}};
		\node[right=of img4, node distance = 0, xshift = -0.5cm] (img5) {\includegraphics[width = 0.3\textwidth]{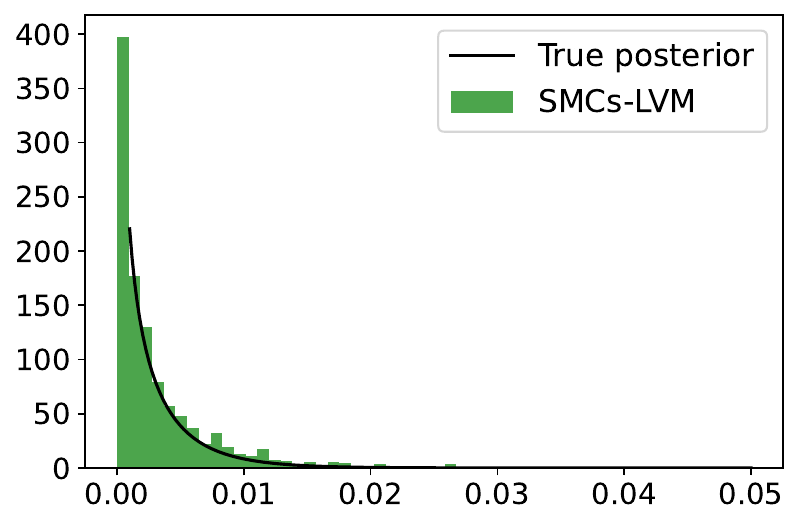}};
  \node[right=of img5, node distance = 0, xshift = -0.5cm] (img6) {\includegraphics[width = 0.3\textwidth]{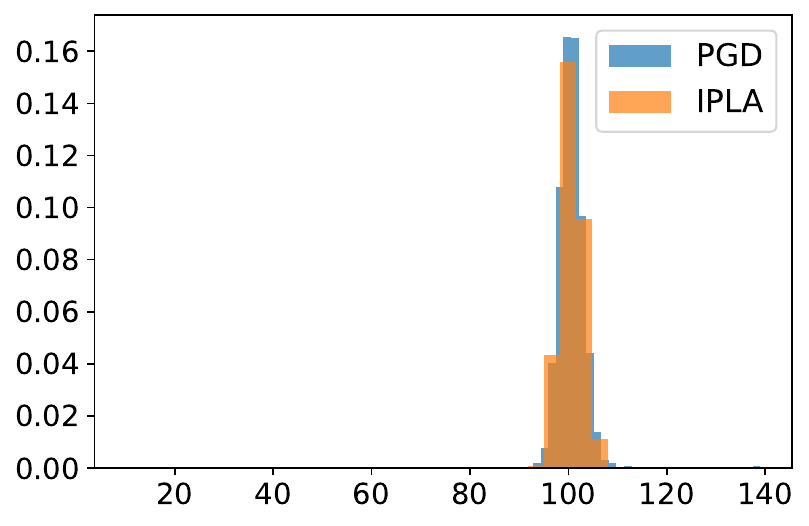}};
		\node[below=of img4, node distance = 0, yshift = 1.2cm] {$n$};
		\node[below=of img5, node distance = 0, yshift = 1.2cm] {$x$};
  		\node[below=of img6, node distance = 0, yshift = 1.2cm] {$x$};
        \node[left=of img4, node distance = 0, rotate = 90, anchor = center, yshift = -0.8cm] {$\theta_n^N$};
        \node[left=of img5, node distance = 0, rotate = 90, anchor = center, yshift = -0.8cm] {$p_{\theta_T^N}(x|y)$};
        \node[left=of img6, node distance = 0, rotate = 90, anchor = center, yshift = -0.8cm] {$p_{\theta_T^N}(x|y)$};
	\end{tikzpicture}
    \vskip -20pt
	\caption{Comparison of  PGD, IPLA and SMCs-LVM for the multimodal marginal likelihood example. Left: evolution of $\theta$-iterates. Middle: final posterior approximation for SMCs-LVM. Right: final posterior approximation for PGD and IPLA.}
	\label{fig:multimodal}
\end{figure}

We further compare SMCs-LVM with SMC-MML. Both methods are based on SMC and provide approximations of the posterior but SMC-MML uses a cloud of particles to approximate the MLE while our method uses a gradient step to converge to the MLE. We set $N=100, T=50$ for both algorithms, and $\gamma_n\equiv 0.05$ for SMCs-LVM so that convergence occurs in the same number of iterations.
SMCs-LVM is approximately 15 times faster than SMC-MML, and its mean squared error (over 100 replicates) is twice that of SMC-MML. This is likely due to the fact that SMC-MML averages over $N$ particles to obtain the estimate of $\theta^\star$, while SMCs-LVM uses only one sample.

\subsection{Bayesian logistic regression}
\label{ex:blr}

We compare SMCs-LVM with PGD and IPLA on a simple Bayesian regression task which satisfies Assumption~\ref{ass:convex} with $h=\norm{\cdot}^2/2$ (see Appendix~\ref{app:blr}) and 
for which both PGD and IPLA are stable.
The model is $p_\theta(x) = \mathcal{N}(x;\theta ,  \textsf{Id}_{d_x})$, $p_\theta(y|x) = \prod_{j=1}^{d_y}s(v_j^Tx)^{y_j}(1-s(v_j^Tx))^{1-y_j}$,
where $s(u):=e^u/(1+e^u)$ is the logistic function.
For this example, we set $d_x = d_\theta = 3$ and $\theta = (2, 3, 4)$. We simulate 900 data points as follows: we simulate synthetic $d_x$-dimensional covariates $v_j \sim \textsf{Unif}(-1, 1)^{\otimes d_x}$ for $j=1, \dots, 900$ and synthetic data $\{y_j\}_{j=1}^{900}$ from a Bernoulli random variable with parameter $s(v_j^Tx)$.

The update for $\theta$ is identical for PGD and SMCs-LVM while IPLA has an additional noise term; the update for $\mu$ is based on the unadjusted Langevin algorithm for PGD and IPLA, SMCs-LVM employs SMC.
We set $\gamma_n\equiv 0.001$ and $T=2000$, $\theta_0=(0, 0, 0)$, and $X_0$ is sampled from $\mathcal{N}(0, \textsf{Id})$.

We compute the variance of the MLE and its computational cost over 100 repetitions of each method (Table~\ref{tab:bayesian_lr}).
PGD and SMCs-LVM have similar accuracy, but the cost of the latter is about 8 times higher. In fact, the update~\eqref{eq:md_fast} requires evaluating the weights and performing one MCMC step while PGD and IPLA only perform one MCMC step.
IPLA returns estimates with higher variance because of the presence of the noise term in the $\theta$-update which would require smaller $\gamma$ to be reduced (see Figure~\ref{fig:bayesian_lr_comparison} in Appendix~\ref{app:blr}).
\begin{table}
\centering
\begin{tabular}{l|cc|cc|cc}
 & \multicolumn{2}{c}{$N=10$} & \multicolumn{2}{c}{$N = 50$}& \multicolumn{2}{c}{$N = 100$}\\
\hline\noalign{\smallskip}
Method & variance & runtime (s) & variance & runtime (s) & variance & runtime (s)\\
\hline\noalign{\smallskip}
PGD & $8.04\cdot10^{-5}$ &\textbf{0.78}& $2.01\cdot10^{-5}$ & 2.85& $6.40\cdot10^{-6}$& \textbf{7.48}\\
IPLA &1.08 &0.80&$1.69\cdot10^{-1}$&\textbf{2.70}&$8.48\cdot10^{-2}$& 7.57\\
SMCs-LVM &$\mathbf{1.90\cdot10^{-5}}$ &4.57&$\mathbf{3.54\cdot10^{-6}}$& 25.76&$\mathbf{1.98\cdot10^{-6}}$& 50.69 \\
\end{tabular}
\caption{Variance of estimates of the first component of $\theta$ for the Bayesian logistic regression model with $N=10, 50, 100$ and their computational times. $\gamma = 0.001, T=6000$ throughout all experiments.  The best values are in bold.  The behaviour for the remaining two components is equivalent and reported in Appendix~\ref{app:blr}.}
\label{tab:bayesian_lr}
\end{table}

\subsection{Stochastic block model}
\label{ex:sbm}
A stochastic block model (SBM) is a random graph model in which the presence of an edge is determined by the two latent variables associated with the nodes the edge connects, which indicate membership to a block. 
Given an undirected graph with $d_x$ nodes we describe the data generating process as follows: to each node is assigned a latent variable $x$ with categorical distribution $p_\theta(x) = \mathbb{P}(x=q) =p_q$ for $q=1, \dots, Q$, where $Q$ denotes the number of blocks.
Given two nodes $i, j$ the probability of observing an edge $y_{ij}$ connecting them depends on the block membership of $i, j$ and is given by $y_{ij}|x_i, x_j \sim \textrm{Bernoulli}(\nu_{x_ix_j})$,
so that $p_\theta(y|x) =\prod_{i, j=1}^{d_x} (1-\nu_{x_ix_j})^{1-y_{ij}}\nu_{x_ix_j}^{y_{ij}}$. 
The set of parameters 
is $\theta = \left((p_q)_{q=1}^Q, (\nu_{ql})_{q,l=1}^Q)\right)$.

As the latent variables are discrete, IPLA and PGD cannot be applied.
We compare SMCs-LVM with Stochastic Approximation EM (SAEM) through the mean squared error (MSE) for $\theta$ and the Adjusted Rand Index (ARI; \cite{hubert1985comparing}), which compares the posterior clustering of the nodes to the true block memberships
. Higher values of the ARI indicate better recovery of the latent block's membership.

The parameters of this model are all probabilities, to enforce this constraint we use the component-wise logarithmic barrier $h(t) = -\log(t-t^2)$. Assumption~\ref{ass:convex} is not satisfied for this $h$, but the model is convex (see Appendix~\ref{app:sbm}). We also consider the results obtained when using $h = \norm{\cdot}^2/2$.

We select the learning rate for SAEM to be $\gamma_n = 1/n$, with $n$
denoting the iteration number, which satisfies the conditions in \cite{delyon1999convergence}
to guarantee convergence, with this choice SAEM converges in $T=500$ iterations.
Since SAEM associates to each latent variable $x$ a single Markov chain, we set $N=d_x$ to put SMCs-LVM on equal footing. We initialise all components in $\theta$ at $0.3$ and set $\mu_0$ as well as the  proposal for the MCMC kernels to be uniform over the block memberships.
SMCs-LVM is more sensitive to the choice of $\theta_0$ than SAEM, therefore a pragmatic choice would be to initialise SMCs-LVM at the value of $\theta$ obtained after one iteration of SAEM \citep{polyak1992acceleration}.

\subsubsection{Synthetic dataset}
We consider the setup of \cite{kuhn2020properties} with $Q=2$, $p_1=0.6, p_2= 1-p_1 = 0.4, \nu_{11} = 0.25, \nu_{12}=\nu_{21} = 0.1$ and $\nu_{22}=0.2$, and generate one graph with $d_x = 100$ nodes from the corresponding model. To achieve convergence in $T=500$ iterations we set $\gamma_n\equiv 0.06$ for SMCs-LVM with the logarithmic barrier (LB) and $\gamma_n\equiv 0.01$ when $h=\norm{\cdot}^2/2$ to guarantee that the gradient descent (GD) update is stable.  

SMCs-LVM consistently outperforms SAEM in terms of ARI and MSE for $\nu_{ij}$ (Figure~\ref{fig:sbm}), the gain in ARI is of $30\%$ for LB and $60\%$ for GD. The runtime for SMCs-LVM is 2.5 times that of SAEM. 
The GD update provides more accurate results but requires smaller $\gamma_n$ which results in slower convergence.

\begin{figure}
	\centering
	\begin{tikzpicture}[every node/.append style={font=\normalsize}]
		\node (img1) {\includegraphics[width = 0.4\textwidth]{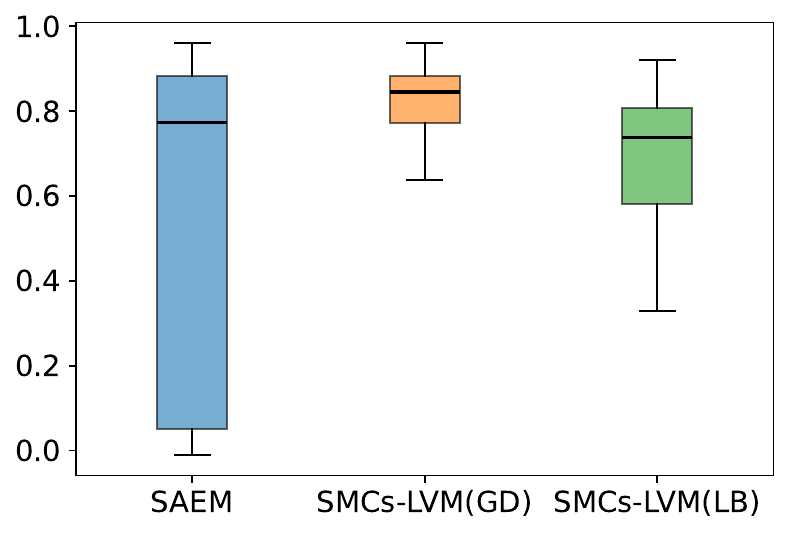}};
		\node[right=of img1, node distance = 0, xshift = -0.5cm] (img2) {\includegraphics[width = 0.4\textwidth]{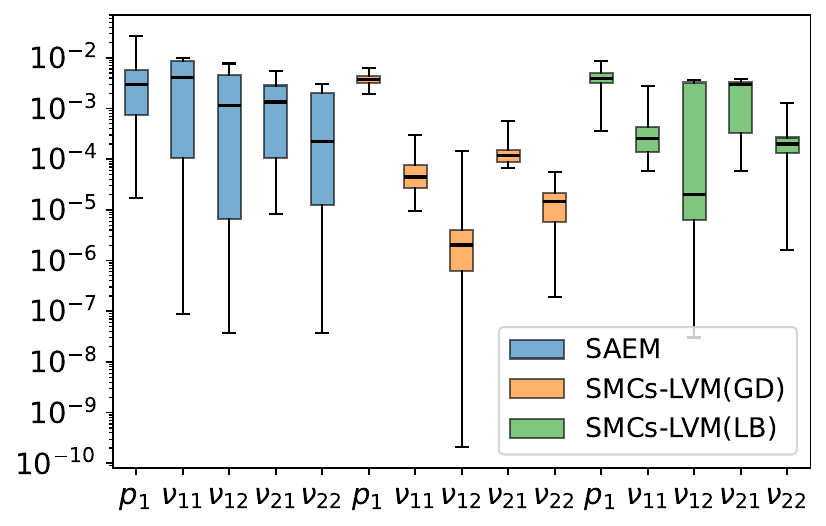}};
  \node[left=of img1, node distance = 0, rotate = 90, anchor = center, yshift = -0.8cm] {ARI};
    \node[left=of img2, node distance = 0, rotate = 90, anchor = center, yshift = -0.8cm] {MSE};
	\end{tikzpicture}
	\caption{Distribution of ARI and MSE for 50 repetitions of SAEM and SMCs-LVM with logarithmic barrier (LB) and gradient descent (GD) update for $\theta$ for the stochastic block model on synthetic data.}
	\label{fig:sbm}
\end{figure}

\subsubsection{Real dataset}

Consider the karate club network with $d_x = 34$ nodes \citep{zachary1977information}. The SBM with 2 blocks is known to separate high-degree nodes from low-degree ones when applied to this network \citep{karrer2011stochastic}. We fit this model with $Q=2$ for 50 times and compare the results obtained with SAEM and SMCs-LVM.
We test the speed of convergence of the two methods, and stop iterating when $\underset{i = 1, \dots 5}{\max} [\theta_n^N(i)-\theta_{n-1}^N(i)]^2< 10^{-7}$, with $\theta_n^N(i)$ denoting the $i$-th component of the parameter vector.
We set $\gamma_n\equiv 0.1$ for SMCs-LVM (both GD and LB). 
SMCs-LVM is about 4 times slower than SAEM but the average ARI is considerably higher (Table~\ref{tab:karate}).

\begin{table}
    \centering
    \begin{tabular}{c|ccc}
Method & $T$ & runtime (s) & ARI\\
    \hline\noalign{\smallskip}
     SAEM    & 99 & 1.06 & 0.77 \\
     SMCs-LVM(LB)    & 161 & 4.66 & 0.99\\
     SMCs-LVM(GD)    & 158 & 4.54 & 0.97
    \end{tabular}
    \caption{Average speed of convergence  and ARI for 50 repetitions of SAEM and SMCs-LVM with logarithmic barrier (LB) and gradient descent (GD) update for $\theta$ for the stochastic block model on the karate club network.}
    \label{tab:karate}
\end{table}
\section{Conclusions}
\label{sec:conclusion}

We introduced a sequential Monte Carlo implementation of a mirror descent approach to perform joint parameter inference and posterior estimation in latent variable models. The algorithm applies to discrete latent variables and only requires uniform relative convexity and smoothness in the $\theta$-component. Our experiments show that the algorithm is effective if these conditions are satisfied locally.

Our work is closely related to \cite{kuntz2023particle, akyildiz2023interacting}, but can be applied to LVMs whose log-likelihood is not differentiable in $x$ without restricting to those that are convex and lower-semicontinuous as required by \cite{encinar2024proximal}.
For LVMs in which the log-likelihood is convex and gradient-Lipschitz in both variables (Section~\ref{ex:blr}), SMCs-LVM is competitive with methods based on Langevin sampling.

When compared to EM and its variants, our approach suffers less from local maxima but, as most gradient methods, is more sensitive to initialisation of the parameters (Section~\ref{ex:toy} and~\ref{ex:sbm}). The posterior approximations provided by SMCs-LVM outperforms that of SAEM. In fact, while SAEM attempts to directly sample from the posterior at every iteration, SMCs-LVM uses a tempering approach, slowly bridging from an easy-to-sample-from distribution to the posterior.

Several extensions of the methods proposed here are possible: mirror descent allows natural extensions to constrained optimisation and non-Lipschitz settings by appropriate choice of the Bregman divergence $B_h$ in Assumption~\ref{ass:convex} \citep{lu2018relatively}. Furthermore, improved accuracy could be achieved by replacing the $\theta$-update by analytic maximisation whenever possible \citep[Appendix D]{kuntz2023particle}, or by considering more terms in~\eqref{eq:md_fast} to achieve a better trade-off between computational cost and accuracy. One option could be to set a fixed lag $L$ and only consider the most recent $L$ iterations, i.e. $p_{\theta_{n-L+1}}(x, y)$ to $p_{\theta_n}(x, y)$. Alternatively, one could discard all terms in~\eqref{eq:smc_bad} for which
$\gamma_k\prod_{j=k+1}^{n+1}(1-\gamma_j)<\varepsilon$ for some $\varepsilon>0$.
Our methods could also be extended by sampling several times from the Markov kernels $\widetilde{M}_n$ and reweighting all the generated samples in the spirit of \cite{dau2022waste}; note that this would not be feasible for SAEM, since for this method no reweighting step is performed. 

In summary, our proposed methods add to the list from which practitioners can choose to infer parameters and posterior in LVMs and can be applied when gradient-based sampling methods cannot, our experiments show that SMCs-LVM outperforms EM and variants in settings in which the posterior is hard to approximate by introducing a tempering approach~\eqref{eq:md_fast}.

\paragraph*{Acknowledgements}
The author wishes to thank Nicolas Chopin and Adam M. Johansen for helpful feedback on a preliminary draft.
\bibliographystyle{apalike}  
\bibliography{biblio_lvm_md.bib}

\clearpage
\appendix 
\section{Ingredients of MD for MMLE}
\label{app:ingredients_md}

\subsection{Derivative and Bregman Divergence}

\begin{proof}[Proof of Proposition~\ref{prop:derivative}]
\begin{enumerate}
\item
    Using the definition of derivative we have, for $z_i = (\theta_i, \mu_i)$, $i=1, 2$, and $\xi = z_2-z_1 = (\xi_\theta, \xi_\mu)$,
    \begin{align*}
        \cF(z_1+\epsilon  \xi) -\cF(z_1) &=\int\log  U(\theta_1+\epsilon \xi_\theta, x)[\mu_1+\epsilon\xi_\mu]( x)dx + \int \log\left( [\mu_1+\epsilon\xi_\mu](x)\right)[\mu_1+\epsilon\xi_\mu]( x)d x\\
        &-\int\log U(\theta_1, x)\mu_1( x)dx -\int \log\left( \mu_1(x)\right)\mu_1( x)d x\\
        &= \int [U(\theta_1+\epsilon \xi_\theta, x) -U(\theta_1, x)]\mu_1( x)dx+ \epsilon\int  U(\theta_1+\epsilon \xi_\theta, x)\xi_\mu( x)dx\\
        &+\int \log\left(1+\epsilon \frac{\xi_\mu(x)}{ \mu_1(x)}\right)\mu_1( x)d x + \epsilon\int \log\left( [\mu_1+\epsilon\xi_\mu](x)\right)\xi_\mu( x)d x.
    \end{align*}
    We then have that 
    \begin{align*}
        \lim_{\epsilon \rightarrow 0}\frac{1}{\epsilon}(\cF(z_1+\epsilon  \xi) -\cF(z_1)) & = \xi_\theta\int \nabla_\theta U(\theta_1, x)\mu_1( x)dx+ \int U(\theta_1, x)\xi_\mu( x)dx\\
        &+\int  \frac{\xi_\mu(x)}{ \mu_1(x)}\mu_1( x)d x + \int \log\left( \mu_1(x)\right)\xi_\mu( x)d x,
    \end{align*}
    where the equality follows from the Taylor expansion of the logarithm as $\epsilon\rightarrow0$.
    Therefore, we can write
    \begin{align*}
        \lim_{\epsilon \rightarrow 0}\frac{1}{\epsilon}(\cF(z_1+\epsilon  \xi) -\cF(z_1)) & = \left\langle\begin{matrix}
         \int \nabla_\theta U(\theta_1, x)\mu_1( x)dx\\
         \log\mu_1(x) + U(\theta_1, x)   +1
        \end{matrix}\ ,\ \begin{matrix}
            \xi_\theta\\
            \xi_\mu
        \end{matrix}
        \right\rangle.
    \end{align*}
    The result follows since $\nabla \cF$ is defined up to additive constants.

\item 
We have
\begin{align*}
    B_\phi(z_1|z_2) &= \KL(\mu_1|\mu_2)+B_h(\theta_1| \theta_2)\\
    &= \int \log (\mu_1(x))\mu_1(x)dx - \int \log (\mu_2(x))\mu_2(x)dx - \ssp{\log \mu_2, \mu_1- \mu_2}\\
    &+h(\theta_1) - h(\theta_2) - \ssp{\nabla h(\theta_2), \theta_1-\theta_2}\\
    &=\phi(\theta_1, \mu_1) - \phi(\theta_2, \mu_2) - \ssp{\nabla \phi(\theta_2, \mu_2), (\theta_1, \mu_1)-(\theta_2, \mu_2)}
\end{align*}
where $\phi(\theta, \mu) = \int \log (\mu(x))\mu(x)dx + h(\theta)$ and $\nabla \phi = (\nabla h, \log \mu)$.
\end{enumerate}
\end{proof}

\subsection{Proof of Proposition~\ref{cor:convergence}}
We first state a preliminary result, 
known as the "three-point inequality" or "Bregman proximal inequality". The result in $\real^d$ can be found in \citetapp[Lemma 3.1]{lu2018relatively} while that over $\cP(\calX)$ in \citetapp[Lemma 3]{aubin2022mirror}.

\begin{lemma}[Three-point inequality]
\label{lem:three-point}
Given $t\in\X$ and some proper convex functional $\cG:\X\rightarrow \real\cup\{+\infty\}$, if $\nabla\phi(t)$ exists, as well as $\bar{z}=\argmin_{z \in \X}\{\cG(z)+B_{\phi}(z | t)\}$, then for all $z\in \X \cap \dom(\phi) \cap \dom(\cG)$: 
\begin{equation*}
\cG(z)+ B_{\phi}(z| t) \ge \cG(\bar{z}) + B_{\phi}(\bar{z}| t) +  B_{\phi}(z| \bar{z}).
	\end{equation*}
\end{lemma}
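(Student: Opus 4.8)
The statement to be proven is Lemma~\ref{lem:three-point}, the three-point (Bregman proximal) inequality. Here is my plan.

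\textbf{Approach.} The proof is a direct computation exploiting the first-order optimality condition for the minimiser $\bar z$. Since $\bar z = \argmin_{z\in\X}\{\cG(z) + B_\phi(z\,|\,t)\}$, and the map $z\mapsto B_\phi(z\,|\,t) = \phi(z) - \phi(t) - \ssp{\nabla\phi(t), z-t}$ has derivative $\nabla\phi(z) - \nabla\phi(t)$, the variational inequality characterising the constrained minimum reads: for every $z\in\X\cap\dom(\phi)\cap\dom(\cG)$,
\[
\ssp{\nabla\cG(\bar z) + \nabla\phi(\bar z) - \nabla\phi(t),\ z - \bar z} \ge 0,
\]
interpreting the $\cG$-term via a subgradient if $\cG$ is merely convex (in our applications $\cG$ is the linearised objective $\cF(z_n) + \ssp{\nabla\cF(z_n), \cdot - z_n}$, which is affine, so this is just an equality/genuine gradient). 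Then I would rearrange the desired inequality into a form where this optimality condition plus convexity of $\cG$ closes the gap.

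\textbf{Key steps, in order.} First, write out the three Bregman terms explicitly using the definition \eqref{eq:breg_probas}:
\[
B_\phi(z\,|\,t) - B_\phi(\bar z\,|\,t) - B_\phi(z\,|\,\bar z)
= \ssp{\nabla\phi(\bar z) - \nabla\phi(t),\ z - \bar z},
\]
which is a clean algebraic identity (the $\phi(z)$ and $\phi(t)$ terms cancel). Second, invoke convexity of $\cG$ to get $\cG(z) - \cG(\bar z) \ge \ssp{\nabla\cG(\bar z),\, z - \bar z}$ (subgradient inequality). Third, add these two: the claimed inequality $\cG(z) + B_\phi(z\,|\,t) \ge \cG(\bar z) + B_\phi(\bar z\,|\,t) + B_\phi(z\,|\,\bar z)$ is equivalent to
\[
\ssp{\nabla\cG(\bar z) + \nabla\phi(\bar z) - \nabla\phi(t),\ z - \bar z} \ge 0,
\]
which is precisely the first-order optimality condition for $\bar z$. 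Fourth, justify that optimality condition: since $\bar z$ minimises a convex functional over $\X$ (and $\X$ is the whole space here, or one restricts to the effective domain), any directional derivative at $\bar z$ toward a feasible $z$ is nonnegative; computing that directional derivative gives exactly the displayed inner product.

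\textbf{Main obstacle.} The only delicate point is rigour in the infinite-dimensional setting $\X = \real^{d_\theta}\times\cP(\calX)$: one must ensure $\nabla\phi(\bar z)$ exists (here $\nabla\phi = (\nabla h, \log\mu)$, so one needs $\bar\mu > 0$ and $\log\bar\mu$ integrable against the relevant measures), that the directional-derivative argument is valid along the segment $[\bar z, z]$ (which stays in the domain by convexity of $\dom\phi\cap\dom\cG$), and that differentiation under the integral is legitimate — but all of this is exactly what is packaged into the hypotheses "$\nabla\phi(t)$ exists" and "$z\in\X\cap\dom(\phi)\cap\dom(\cG)$", and it is handled in the cited references \citetapp[Lemma 3.1]{lu2018relatively} and \citetapp[Lemma 3]{aubin2022mirror}. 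Since the lemma is quoted from those works, I would simply cite them for the proof rather than reproduce the functional-analytic details, noting that the product structure $B_\phi = B_h + \KL$ lets one apply the two references coordinatewise.
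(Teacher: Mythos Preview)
Your proposal is correct and in fact more detailed than the paper's own treatment: the paper does not prove Lemma~\ref{lem:three-point} at all but simply states it and cites \citetapp[Lemma~3.1]{lu2018relatively} for the $\real^d$ case and \citetapp[Lemma~3]{aubin2022mirror} for the $\cP(\calX)$ case. Your final suggestion---to cite those works rather than reproduce the functional-analytic details, applying them coordinatewise via the product structure $B_\phi = B_h + \KL$---is exactly what the paper does, and the proof sketch you give (Bregman three-term identity plus subgradient inequality plus first-order optimality) is the standard argument found in those references.
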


We can now prove Proposition~\ref{cor:convergence}.
Using the fact that the function $\theta \mapsto U(\theta, x)$ is relatively smooth w.r.t. $B_h$ we have, for all step sizes $\gamma_{n+1}\leq 1/L$
\begin{align}
\label{eq:md_conv1}
    \cF(z_{n+1}) &= \int U(\theta_{n+1}, x)\mu_{n+1}( x)d x + \int \log\left( \mu_{n+1}(x)\right)\mu_{n+1}(x)d x\\
    &\leq \int [U(\theta_n, x)+\ssp{\nabla_\theta    U(\theta_n, x), \theta_{n+1}-\theta_n}+\frac{1}{\gamma_{n+1}} B_h(\theta_{n+1}|\theta_n)]\mu_{n+1}( x)d x \notag\\
    &\qquad+ \int \log\left( \mu_{n+1}(x)\right)\mu_{n+1}(x)d x.\notag
\end{align}
Applying Lemma \ref{lem:three-point} to the convex function $\cG_n(\theta)=\gamma_{n+1} \ssp{\nabla_\theta    U(\theta_n, x), \theta-\theta_n}$, with $t=\theta_n$ and $\bar{z}=\theta_{n+1}$ and Bregman divergence $B_h$ yields
\begin{equation*}
    \ssp{\nabla_\theta    U(\theta_n, x), \theta_{n+1}-\theta_n} + \frac{1}{\gamma_{n+1}} B_h(\theta_{n+1}|\theta_n) \le \ssp{\nabla_\theta    U(\theta_n, x), \theta-\theta_n} + \frac{1}{\gamma_{n+1}} B_h(\theta| \theta_n) - \frac{1}{\gamma_{n+1}} B_h(\theta| \theta_{n+1}).
\end{equation*}
Fix $\theta$, then~\eqref{eq:md_conv1} becomes
\begin{align*}
    \cF(z_{n+1}) &\leq \int [U(\theta_n, x)+\ssp{\nabla_\theta    U(\theta_n, x), \theta-\theta_n} + \frac{1}{\gamma_{n+1}} B_h(\theta| \theta_n) - \frac{1}{\gamma_{n+1}} B_h(\theta| \theta_{n+1})]\mu_{n+1}( x)d x \\
    &\qquad+ \int \log\left( \mu_{n+1}(x)\right)\mu_{n+1}(x)d x.\notag
\end{align*}
Since $\theta \mapsto U(\theta, x)$ is $l$-strongly convex w.r.t. $B_h$, we also have:
\begin{equation*}
    \ssp{\nabla U(\theta_n, x),\theta-\theta_n}\le U(\theta, x)-U(\theta_n, x) - l B_h(\theta|\theta_n), 
\end{equation*}
and the above becomes
\begin{align}
\label{eq:F_decrease1}
    \cF(z_{n+1}) &\leq \int U(\theta, x)\mu_{n+1}( x)d x  + \left(\frac{1}{\gamma_{n+1}} - l\right)B_h(\theta| \theta_n) - \frac{1}{\gamma_{n+1}} B_h(\theta| \theta_{n+1})+ \int \log\left( \mu_{n+1}(x)\right)\mu_{n+1}(x)d x\\
    &\leq \left(\frac{1}{\gamma_{n+1}} - l\right)B_h(\theta| \theta_n) - \frac{1}{\gamma_{n+1}} B_h(\theta| \theta_{n+1})+\KL(\mu_{n+1}|p_\theta).\notag
\end{align}
Since the reverse KL with respect to any target is 1-relatively smooth with respect to the $\KL$ \citepapp{aubin2022mirror, chopin2023connection} we further have, for all $\gamma_{n+1}\leq 1$,
\begin{align*}
    \KL(\mu_{n+1}|p_\theta) &\leq \KL(\mu_{n}|p_\theta) + \ssp{ \log\frac{\mu_n}{p_\theta},\mu_{n+1}	-\mu_n}+\frac{1}{\gamma_{n+1}} \KL(\mu_{n+1}|\mu_n).
\end{align*}
Applying Lemma \ref{lem:three-point} to the convex function $\cG_n(\nu)=\gamma_{n+1} \ssp{\log\frac{\mu_n}{p_\theta},\nu-\mu_n}$, with $t=\mu_n$ and $\bar{z}=\mu_{n+1}$ yields
\begin{equation*}
    \ssp{\log\frac{\mu_n}{p_\theta},\mu_{n+1}	-\mu_n} + \frac{1}{\gamma_{n+1}} \KL(\mu_{n+1}|\mu_n) \le \ssp{ \log\frac{\mu_n}{p_\theta},\nu-\mu_n} + \frac{1}{\gamma_{n+1}} \KL(\nu| \mu_n) - \frac{1}{\gamma_{n+1}} \KL(\nu| \mu_{n+1}),
\end{equation*}
and thus
\begin{align}
\label{eq:conv_3point_new}
    \KL(\mu_{n+1}|p_\theta) &\leq \KL(\mu_{n}|p_\theta) + \ssp{ \log\frac{\mu_n}{p_\theta},\nu-\mu_n} + \frac{1}{\gamma_{n+1}} \KL(\nu| \mu_n) - \frac{1}{\gamma_{n+1}} \KL(\nu| \mu_{n+1}).
\end{align}
As the reverse KL with respect to any target is also $1$-relatively convex with respect to the $\KL$ \citepapp{aubin2022mirror, chopin2023connection}, we have
\begin{equation*}
    \ssp{\log\frac{\mu_n}{p_\theta},\nu-\mu_n}\le \KL(\nu|p_\theta)-\KL(\mu_{n}|p_\theta) - \KL(\nu|\mu_n) 
\end{equation*}
and \eqref{eq:conv_3point_new} becomes
\begin{equation}\label{eq:conv_strcvx_new}
\KL(\mu_{n+1}|p_\theta)\le \KL(\nu|p_\theta) + \left(\frac{1}{\gamma_{n+1}}-1\right) \KL(\nu|\mu_n) - \frac{1}{\gamma_{n+1}}\KL(\nu| \mu_{n+1}).
\end{equation}
Plugging the above into~\eqref{eq:F_decrease1} gives
\begin{align*}
    \cF(z_{n+1}) 
    &\leq \KL(\nu|p_\theta)+\left(\frac{1}{\gamma_{n+1}} - l\right)B_h(\theta| \theta_n) +\left(\frac{1}{\gamma_{n+1}}-1\right) \KL(\nu|\mu_n)\\
    &\qquad- \frac{1}{\gamma_{n+1}} \left[B_h(\theta| \theta_{n+1})+\KL(\nu| \mu_{n+1})\right].
\end{align*}
Denoting $z=(\theta, \nu)$ and recalling the definition of $B_\phi$ in Proposition~\ref{prop:derivative} we find
\begin{align*}
    \cF(z_{n+1}) 
    &\leq \cF(z)+\left(\frac{1}{\gamma_{n+1}} - \min(1, l)\right)B_\phi(z|z_{n})- \frac{1}{\gamma_{n+1}} B_\phi(z|z_{n+1}).
\end{align*}
This shows in particular, by substituting $z=z_n$ and since $B_{\phi}(z| z_{n+1})\ge 0$, that 
 \begin{align*}
     \cF(z_{n+1})\le \cF(z_n)-\frac{1}{\gamma_{n+1}} B_{\phi}(z_n| z_{n+1}),
 \end{align*}
i.e. $\cF$ is decreasing at each iteration.

Multiplying the previous equation by $(\gamma_{n+1}^{-1}-\min(1, l))^{-1}$, we get
\begin{align*}
    \left(\frac{1}{1- \gamma_{n+1} \min(1, l)}\right)[\cF(z_{n+1}) -\cF(z)]
    &\leq \frac{1}{\gamma_{n+1}}B_\phi(z|z_{n})- \frac{1}{\gamma_{n+1}}\left(\frac{1}{1- \gamma_{n+1} \min(1, l)}\right) B_\phi(z|z_{n+1}),
\end{align*}
and, proceeding as in \citetapp[Appendix A.2]{chopin2023connection} we obtain
\begin{equation}
\label{eq:md_conv}
    \cF(z_n) - \cF(z) \le \frac{C_n}{\gamma_1}  B_{\phi}(z| z_{0})
\end{equation}
where
\begin{align*}
    C_n^{-1} = \sum_{k=1}^n\frac{\gamma_k}{\gamma_1}\prod_{i=1}^k\frac{1}{1-\gamma_i \min(1, l)}.
\end{align*}

Following \citetapp[Appendix A.3]{chopin2023connection} we can then show that
\begin{align}
\label{eq:rate_bound1}
    C_n = \left(\sum_{k=1}^n\prod_{i=1}^k\frac{\gamma_k/\gamma_1}{1-\gamma_i\min(l, 1) }\right)^{-1}\le \prod_{k=1}^n(1-\gamma_k\min(l, 1))
\end{align}
by induction. Plugging $z=(\theta^\star, p_{\theta^\star}(\cdot|y))$ into~\eqref{eq:md_conv} we obtain the result.

\subsubsection{Proof of~\eqref{eq:rate_bound1}}
To see this we consider for $n \ge 1$, $\mathcal{P}(n): \quad \sum_{k=1}^{n}\frac{\gamma_k}{\gamma_1}\prod_{i=1}^k\frac{1}{1-\gamma_i\min(l, 1)} \geq \prod_{k=1}^n(1-\gamma_k)^{-1}$.
We trivially have that  $\mathcal{P}(1): \left(\frac{1}{1-\gamma_1\min(l, 1) }\right)^1 \geq (1-\gamma_1\min(l, 1))^{-1}$ is true. Then, assume $\mathcal{P}(n)$ holds.
We have 
\begin{align*}
\sum_{k=1}^{n+1} \frac{\gamma_k}{\gamma_1}\prod_{i=1}^k\frac{1}{1-\gamma_i\min(l, 1)} 
&= \sum_{k=1}^{n} \frac{\gamma_k}{\gamma_1}\prod_{i=1}^k\frac{1}{1-\gamma_i\min(l, 1)} + \frac{\gamma_{n+1}}{\gamma_1}\prod_{i=1}^{n+1}\frac{1}{1-\gamma_i\min(l, 1)} \\
&\geq \prod_{k=1}^n(1-\gamma_k\min(l, 1))^{-1} + \gamma_{n+1}\prod_{k=1}^{n+1}(1-\gamma_k\min(l, 1))^{-1}\\
&=\prod_{k=1}^n(1-\gamma_k\min(l, 1))^{-1} \left[1+ \gamma_{n+1}(1-\gamma_{n+1}\min(l, 1))^{-1}\right],
\end{align*}
since $\gamma_1 \leq 1$.
Observing that $1+ \gamma_{n+1}(1-\gamma_{n+1}\min(l, 1))^{-1}\geq (1-\gamma_{n+1}\min(l, 1))^{-1}$ we have the result

Hence \eqref{eq:rate_bound1} is true for all $n\ge 1$.

\section{On Replacing~\eqref{eq:md_update} with~\eqref{eq:md_fast}}
\label{app:ratio}
\subsection{Proof of~\eqref{eq:ratio}}
    Consider $n=1$, in this case $\mu_1\equiv \widetilde\mu_1$. For $n=2$ 
\begin{align*}
    \mu_2(x)&\propto \mu_0(x)^{(1-\gamma_1)(1-\gamma_2)}p_{\theta_0}(x, y)^{\gamma_1(1-\gamma_2)}p_{\theta_1}(x, y)^{\gamma_2}\\
    \widetilde\mu_2(x)&\propto \mu_0(x)^{(1-\gamma_1)(1-\gamma_2)}p_{\theta_1}(x, y)^{1-(1-\gamma_1)(1-\gamma_2)},
\end{align*}
and 
\begin{align*}
    \frac{\mu_{2}(x)}{\widetilde{\mu}_{2}(x)} &\propto \left(\frac{p_{\theta_0}(x, y)}{p_{\theta_1}(x, y)}\right)^{\gamma_1(1-\gamma_2)}.
\end{align*}
For $n\geq 1$ we have
\begin{align*}
    \widetilde{\mu}_{n+1}(x)&\propto \mu_0(x)^{\prod_{k=1}^{n+1}(1-\gamma_k)}p_{\theta_n}(x, y)^{1-\prod_{k=1}^{n+1}(1-\gamma_k)}\\
    &\propto\left(\frac{\widetilde{\mu}_{n}(x)}{p_{\theta_{n-1}}(x, y)^{1-\prod_{k=1}^{n}(1-\gamma_k)}}\right)^{1-\gamma_{n+1}}p_{\theta_n}(x, y)^{1-\prod_{k=1}^{n+1}(1-\gamma_k)}
\end{align*}
and
\begin{align*}
    \frac{\mu_{n+1}(x)}{\widetilde{\mu}_{n+1}(x)}\propto \left(\frac{\mu_n(x)}{\widetilde{\mu}_{n}(x)}\right)^{1-\gamma_{n+1}}\left(\frac{p_{\theta_{n-1}}(x, y)}{p_{\theta_n}(x, y)}\right)^{(1-\gamma_{n+1})(1-\prod_{k=1}^n(1-\gamma_k))}.
\end{align*}
Plugging $\frac{\mu_{n}(x)}{\widetilde{\mu}_{n}(x)}$ into above the result follows by induction using that $1-\prod_{k=1}^{n}(1-\gamma_k) = \sum_{k=1}^n\gamma_k \prod_{j=k+1}^n(1-\gamma_j)$.

\subsection{Toy Gaussian Model}
\label{app:toy}

\paragraph{Convexity and Lipschitz continuity}
To see that the toy LVM satisfies Assumption~\ref{ass:convex} consider
\begin{align*}
    p_\theta(x, y) \propto \prod_{i=1}^{d_x} \frac{1}{2\uppi}\exp\left(-\frac{(x_i-\theta)^2}{2} - \frac{(y_i-x_i)^2}{2}\right)
\end{align*}
so that
\begin{align*}
    U(\theta, x) = d_x\log(2\uppi)+\frac{1}{2}\sum_{i=1}^{d_x} (x_i-\theta)^2+(y_i-x_i)^2.
\end{align*}
Then $\nabla_\theta U(\theta, x) = d_x\theta-\sum_{i=1}^{d_x}x_i$ and
\begin{align*}
     U(\theta_2, x) -  U(\theta_1, x)- \ssp{\nabla_\theta    U(\theta_1, x), \theta_2-\theta_1} = \frac{d_x}{2}(\theta_2-\theta_1)^2,
\end{align*}
showing that $U$ is both relatively convex and relatively smooth w.r.t. the Euclidean norm with $l=L = d_x/2$.

\paragraph{Experimental set up} We set $\theta = 1$ and $d_x=50$ and generate one data point. The initial state is $\theta_0 = 0, \mu_0 = \mathcal{N}(0, \textsf{Id}_{d_x})$, we use $N=200$ particles, $T=2000$ and $\gamma_n \equiv 0.01$.

\paragraph{Additional Results}

To further confirm that for large $n$ the iterates~\eqref{eq:smc_bad} and~\eqref{eq:md_fast} are close we consider Wasserstein-1 distance between each 1 dimensional marginal of $\mu_n, \widetilde{\mu}_n$ (Figure~\ref{fig:toy2}).
As expected, as $n$ increases we have $W_1(\mu_n, \widetilde{\mu}_n)$. When comparing the posterior approximations of the 1D marginals we find that MD-LVM and SMCs-LVM provide very similar approximations.

\begin{figure}
	\centering
	\begin{tikzpicture}[every node/.append style={font=\normalsize}]
		\node(img4) {\includegraphics[width = 0.3\textwidth]{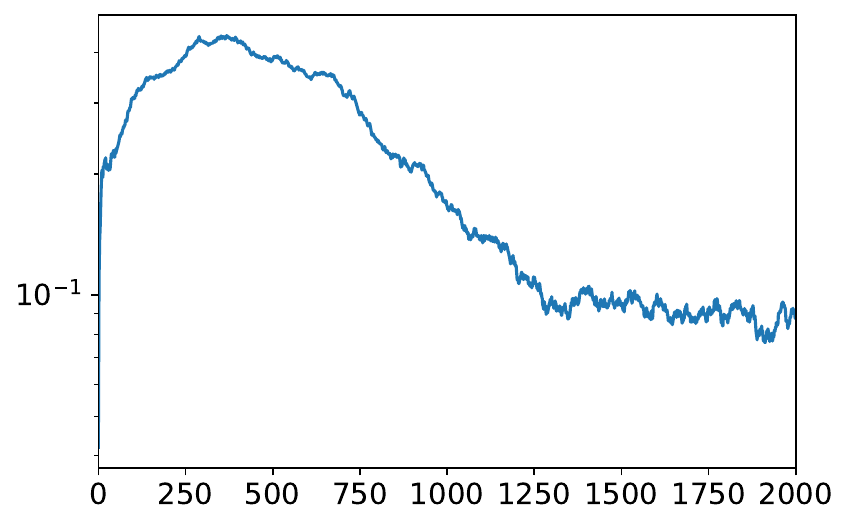}};
		\node[right=of img4, node distance = 0, xshift = -0.5cm] (img5) {\includegraphics[width = 0.3\textwidth]{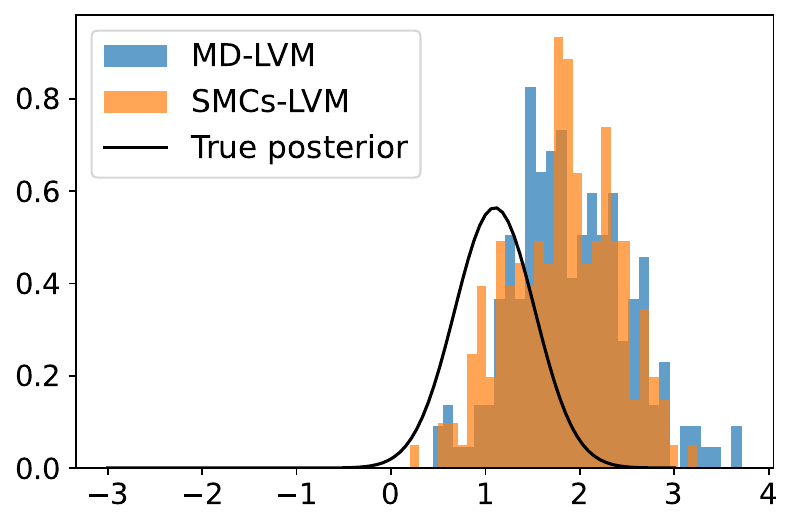}};
  \node[right=of img5, node distance = 0, xshift = -0.5cm] (img6) {\includegraphics[width = 0.3\textwidth]{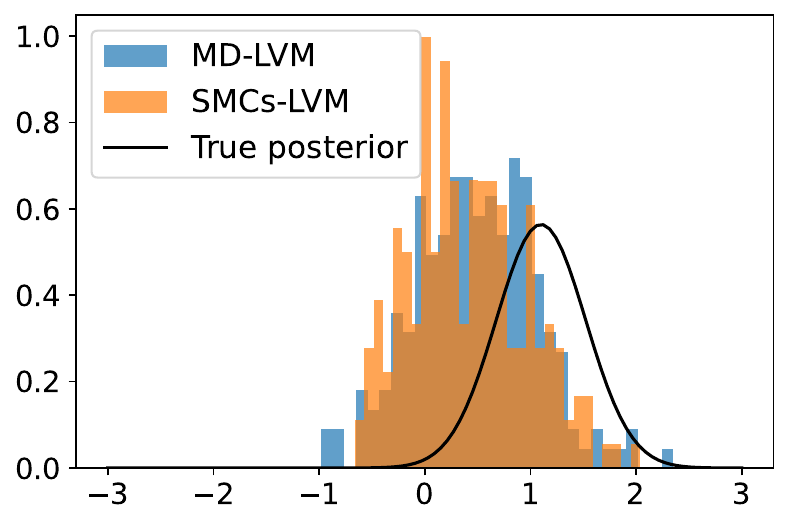}};
		\node[below=of img4, node distance = 0, yshift = 1.2cm] {$n$};
		\node[below=of img5, node distance = 0, yshift = 1.2cm] {$x$};
  		\node[below=of img6, node distance = 0, yshift = 1.2cm] {$x$};
        \node[left=of img4, node distance = 0, rotate = 90, anchor = center, yshift = -0.8cm] {$W_1(\mu_n, \widetilde{\mu}_n)$};
        \node[left=of img5, node distance = 0, rotate = 90, anchor = center, yshift = -0.8cm] {$p_{\theta_T^N}(x|y)$};
        \node[left=of img6, node distance = 0, rotate = 90, anchor = center, yshift = -0.8cm] {$p_{\theta_T^N}(x|y)$};
	\end{tikzpicture}
	\caption{Comparison of~\eqref{eq:smc_bad} and~\eqref{eq:md_fast} on a toy Gaussian model. Left: evolution of $W_1$ along iterations. Middle: final posterior approximation for marginal of component 1. Right: final posterior approximation for marginal of component 35.}
	\label{fig:toy2}
\end{figure}

\section{Proof of Section~\ref{sec:smc}}
\label{app:proof}
\subsection{Feynman-Kac model for SMC samplers}

\begin{algorithm}[th]
\begin{algorithmic}[1]
\STATE{\textit{Inputs:} sequences of distributions $(\mu_n)_{n= 0}^T$, Markov kernels $(M_n)_{n= 1}^T$, initial proposal $\mu_0$.}
\STATE{\textit{Initialise:} sample $\widetilde{X}_0^i\sim \mu_0$ and set $W_0^i=1/N$ for $i=1,\dots, N$.}
\FOR{$n=1,\dots, T$}
\IF{$n>1$}
\STATE{\textit{Resample:} draw $ \{\widetilde{X}_{n-1}^i\}_{i=1}^N$ independently from $\{X_{n-1}^i, W_{n-1}^i\}_{i=1}^N$ and set $W_n^i =1/N$ for $i=1,\dots, N$. \label{alg:reformulated:resampling}}
\ENDIF
\STATE{\textit{Propose:} draw $X_n^i\sim M_{n}(\widetilde{X}_{n-1}^i, \cdot)$ for $i=1,\dots, N$.\label{alg:reformulated:mutate}}
\STATE{\textit{Reweight:} compute and normalise the weights $W_n^i \propto w_{n}(X_{n}^i)$ in~\eqref{eq:smc_weight} for $i=1,\dots, N$.\label{alg:reformulated:reweight}}
\ENDFOR
\STATE{\textit{Output:} $\{X_n^i,W_n^i\}_{i=1}^N$}
\end{algorithmic}
\caption{SMC samplers \citep{del2006sequential}.}\label{alg:smc}
\end{algorithm}

We consider the general framework of Feynman-Kac measure flows, that is, a sequence of probability measures $( \hat{\eta}_n)_{n\geq0}$ of increasing dimension defined on Polish spaces $(E^n, \mathcal{E}^n)$, where $\mathcal{E}$ denotes the $\sigma$-field associated with $E$, which evolves as
\begin{equation}
\label{eq:smcflow}
\update(d x_{1:n}) \propto  G_n(x_{n-1}, x_{n})K_n(x_{n-1}, d x_n) \hat{\eta}_{n-1}(d x_{1:n-1}),
\end{equation}
for some Markov kernels $K_n: E\times \mathcal{E}\to [0,1]$ and non-negative functions $G_n:  E\times E\to \real$, and with $\hat{\eta}_0(d x_0) \propto G_0(x_0)K_0(d x_0)$.

Recursion~\eqref{eq:smcflow} can be decomposed into two steps. In the mutation step, a new state is proposed according to $K_n$
\begin{align}
\label{eq:predictive}
\eta_n(d x_{1:n})\propto\hat{\eta}_{n-1}(d x_{1:n-1}) K_n(x_{n-1}, d x_n);
\end{align}
in the selection step, the proposed state is weighted according to the potential function $\weight$
\begin{align}
\label{eq:update}
\hat{\eta}_n(d x_{1:n})\propto\eta_n(d x_{1:n})G_n(x_n).
\end{align}
To ease the exposition of the theoretical results of this section we introduce the Boltzmann-Gibbs operator associated with the weight function $\weight$
\begin{align}
    \label{eq:bg}
    \update(d x_{1:n}) = \bg(\predictive)(d x_{1:n}) = \frac{\predictive(d x_{1:n})\weight(x_n)}{\predictive(\weight)},
\end{align}
which weights $\predictive$ using $\weight$ and returns an appropriately normalised probability measure.

The SMC sampler in Algorithm~\ref{alg:smc} can be obtained by setting $K_0 \equiv \mu_0$, $G_0\equiv 1$, and
\begin{align*}
    K_n(x_{n-1}, d x_n) &= M_n( x_{n-1},d x_n )\\
    G_n(x_{n-1}, x_n) &= \frac{\mu_n(x_n)}{\mu_{n-1}(x_n)},
\end{align*}
as shown in \citetapp[Chapter 17]{chopin2020introduction}. The idealised versions of Algorithm~\ref{alg:smc_lvm}, in which the $(\theta_n)_{n\geq 0}$ is known and fixed can by obtained in the same say.

For convenience, we identify the three fundamental steps of Algorithm~\ref{alg:smc} as a  \emph{resampling} step (Line~\ref{alg:reformulated:resampling}), a \emph{mutation} step (Line~\ref{alg:reformulated:mutate}) and a \emph{reweighting} step (Line~\ref{alg:reformulated:reweight}). To each step, we associate a measure and its corresponding particle approximation: the mutated measure $\predictive$ in~\eqref{eq:predictive} is approximated by $\predictiveN := N^{-1} \sum_{i=1}^N \delta_{X_n^i}$ obtained after Line~\ref{alg:reformulated:mutate}, Line~\ref{alg:reformulated:reweight} provides a particle approximation of $\bg(\predictive)\equiv\update$ denoted by $\bg(\predictiveN)$,
after resampling we obtain another approximation of $\update$ in~\eqref{eq:update}, $\updateN := N^{-1} \sum_{i=1}^N \delta_{\tilde{X}_n^i}$.

\subsection{Feynman-Kac model for Algorithm~\ref{alg:smc_lvm} and SMC-LVMs}

First we observe that since the true sequence $(\theta_n)_{n\geq 0}$ is not known but approximated via~\eqref{eq:theta_update}, Algorithm~\ref{alg:smc_lvm} and SMC-LVMs in Section~\ref{sec:smc_lvm} use weight functions and Markov kernels which are random and approximate the true but unknown weight function and kernel. 
In particular, Algorithm~\ref{alg:smc_lvm} uses the approximate kernels which leave $\mu_n(\cdot; \theta_{0:n-2}^N)$ invariant and corresponding weight functions
\begin{align*}
    K_{n,N}(x_{n-1}, d x_n) &= M_n(x_{n-1}, d x_n; \theta_{0:n-2}^N)\\
    \weightN(x_n) &=w_n(x_n, \theta_{0:n-1}^N);
\end{align*}
which are approximations of the same quantities with the exact $\theta$-sequence $(\theta_n)_{n\geq 0}$.

For any distribution $\eta$ and any $\testfn\in\bounded$ we denote $\eta(\testfn):= \int \testfn(x)\eta(x)dx$, similarly for all empirical distributions $\eta^N:=N^{-1}\sum_{i=1}^N \delta_{X^i}$ we denote the corresponding average by $\eta^N(\testfn):= N^{-1}\sum_{i=1}^N \testfn(X^i)$.

\subsection{Stability of Weights}

We first show that Assumption~\ref{ass:weights} implies a stability result on the weights~\eqref{eq:md_weights}.

\begin{lemma}
    \label{lem:weight_stability}
    Under Assumption~\ref{ass:weights}, there exists a constant $\omega>0$ such that
    \begin{align}
    \label{eq:lipschitz}
    \vert w_n(x;\theta_{0:n-1}) - w_n(x;\theta_{0:n-1}')\vert \leq \omega\sum_{j=0}^{n-1}\norm{\theta_j-\theta_j'}
\end{align}
for all $(\theta_{0:n-1}, \theta_{0:n-1}')\in(\real^{d_\theta})^n$.
\end{lemma}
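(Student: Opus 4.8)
\textbf{Proof proposal for Lemma~\ref{lem:weight_stability}.}

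The plan is to bound the increments of $w_n$ in each $\theta_j$ separately and then combine them by the triangle inequality, exactly as one does for a function of several variables whose partial derivatives are controlled. Recall from~\eqref{eq:md_weights} that
\[
    w_n(x;\theta_{0:n-1}) = \left(\frac{p_{\theta_{n-1}}(x,y)}{\prod_{k=0}^{n-2} p_{\theta_k}(x,y)^{c_{k,n}}\,\mu_0(x)^{c_{-1,n}}}\right)^{\gamma_n},
\]
for suitable nonnegative exponents $c_{k,n}$ (products of $\gamma$'s and $(1-\gamma)$'s) that sum appropriately; equivalently, writing $U(\theta,x)=-\log p_\theta(x,y)$,
\[
    \log w_n(x;\theta_{0:n-1}) = \gamma_n\Bigl[-U(\theta_{n-1},x) + \sum_{k=0}^{n-2} c_{k,n} U(\theta_k,x) + c_{-1,n}\log\mu_0(x)\Bigr] + \text{const}(x).
\]
First I would differentiate $\log w_n$ with respect to each $\theta_j$: the gradient is $\gamma_n$ times a signed combination of $\nabla_\theta U(\theta_j,x)$ with coefficients bounded by $1$ (since $\gamma_k\le 1$ forces each $c_{k,n}\le 1$ and $\gamma_n\le 1$). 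By Assumption~\ref{ass:weights}, $\supnorm{\nabla_\theta U}<\infty$, so $\norm{\nabla_{\theta_j}\log w_n(x;\theta_{0:n-1})}\le \supnorm{\nabla_\theta U}$ uniformly in $x$ and in the $\theta$'s. Hence $\log w_n$ is Lipschitz in each $\theta_j$ with constant $\supnorm{\nabla_\theta U}$, and by summing over $j=0,\dots,n-1$ (mean value theorem along a coordinate-wise path) we get $\lvert \log w_n(x;\theta_{0:n-1}) - \log w_n(x;\theta_{0:n-1}')\rvert \le \supnorm{\nabla_\theta U}\sum_{j=0}^{n-1}\norm{\theta_j-\theta_j'}$.

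Next I would pass from the log-scale bound to the bound on $w_n$ itself. Using $\lvert e^{a}-e^{b}\rvert \le \max(e^a,e^b)\,\lvert a-b\rvert$ together with Assumption~\ref{ass:weights}, namely $\supnorm{w_n}<\infty$, we obtain
\[
    \lvert w_n(x;\theta_{0:n-1}) - w_n(x;\theta_{0:n-1}')\rvert \le \supnorm{w_n}\,\supnorm{\nabla_\theta U}\sum_{j=0}^{n-1}\norm{\theta_j-\theta_j'},
\]
which is~\eqref{eq:lipschitz} with $\omega = \supnorm{w_n}\supnorm{\nabla_\theta U}$ (or a time-uniform $\omega$ if one has a uniform-in-$n$ bound on $\supnorm{w_n}$; otherwise $\omega$ may carry an $n$-dependence, which is harmless for the subsequent finite-time analysis).

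The main obstacle is bookkeeping rather than anything deep: one must verify carefully that the exponents $c_{k,n}$ appearing in the unnormalised weight are all in $[0,1]$ — this uses $0\le\gamma_k\le 1$ and telescoping identities like $1-\prod_{k}(1-\gamma_k)=\sum_k\gamma_k\prod_{j>k}(1-\gamma_j)$ — and that there are exactly $n$ of them (so the sum $\sum_{j=0}^{n-1}$ is the right range). A secondary subtlety is that the normalising constant of $w_n$ in~\eqref{eq:md_weights} also depends on the $\theta$'s; but in the SMC algorithm the weights are used only after self-normalisation, so it suffices to establish Lipschitz continuity of the unnormalised weight (which is what~\eqref{eq:md_weights} writes), and the normalisation is handled downstream in the proof of Proposition~\ref{prop:lp}. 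I would state the lemma for the unnormalised $w_n$ as written and note this point.
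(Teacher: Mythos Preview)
Your proposal is correct and follows essentially the same route as the paper: compute the partial gradients of $\log w_n$ in each $\theta_j$, bound them via $\supnorm{\nabla_\theta U}<\infty$ and the fact that the tempering exponents lie in $[0,1]$, and then combine with $\supnorm{w_n}<\infty$ to pass from log-Lipschitz to Lipschitz. The only cosmetic difference is that the paper writes $\nabla_{\theta_{0:n-1}} w_n = w_n\,\nabla_{\theta_{0:n-1}}\log w_n$ and bounds the gradient of $w_n$ directly, whereas you bound the log-increment first and then apply $\lvert e^a-e^b\rvert\le\max(e^a,e^b)\lvert a-b\rvert$; these are equivalent.
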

\begin{proof}
    Consider 
    \begin{align*}
        \gamma_n^{-1} \log w_n(x;\theta_{0:n-1}) &= -U(\theta_{n-1}, x)+{\prod_{k=1}^{n-1}(1-\gamma_k)}\log \mu_0(x)\\
        &+\gamma_{n-1} U(\theta_{n-2},x)+\dots +\prod_{k=2}^{n-1}(1-\gamma_k)U(\theta_{0}, x)
    \end{align*}
    which has gradient
    \begin{align*}
        \gamma_n^{-1} \nabla_{\theta_{0:n-1}}\log w_n(x;\theta_{0:n-1}) =\begin{pmatrix}
            \prod_{k=2}^{n-1}(1-\gamma_k)\nabla_\theta U(\theta_{0}, x)\\
            \vdots\\
            \gamma_{n-1} \nabla_\theta U(\theta_{n-2},x)\\
            -\nabla_\theta U(\theta_{n-1}, x)
        \end{pmatrix}.
    \end{align*}
    Since $\supnorm{\nabla_\theta U}<\infty $ it follows that $\gamma_n^{-1}\supnorm{ \nabla_{\theta_{0:n-1}}\log w_n} <\infty$.
    Observing that $$\nabla_{\theta_{0:n-1}} w_n(x;\theta_{0:n-1}) = w_n(x;\theta_{0:n-1})\nabla_{\theta_{0:n-1}}\log w_n(x;\theta_{0:n-1})$$ and that, under Assumption~\ref{ass:weights} $\supnorm{w_n}<\infty$ we obtain $\supnorm{\nabla_{\theta_{0:n-1}} w_n}<\infty$ from which follows the Lipschitz continuity in~\eqref{eq:lipschitz}.
    
\end{proof}

\subsection{Proof of Proposition~\ref{prop:lp}}

We proceed by induction, taking $n=0$ as the base case.
At time $n=0$, the particles $(X_0^{i})_{i=1}^N$ are sampled i.i.d. from $\mu_0$, so that $G_0(x)=G_0^N(x)\equiv 1$, hence $\Psi_{G_0}(\eta_0)\equiv  \hat{\eta}_0\equiv \mu_0$  and  $\Exp\left[\testfn(X_0^{i})\right] = \Psi_{G_0}(\eta_0)(\testfn)$ for $i=1,\ldots,N$.
We can define the sequence of functions $\Delta_0^i: \calX \mapsto \mathbb{R}$ for $i=1,\ldots, N$
\begin{equation*}
\Delta_0^i(x) :=  \testfn(x) -  \Exp\left[\testfn(X_{0}^{i})\right]
\end{equation*}
so that,
\begin{align*}
\Psi_{G_0^N}(\eta_0^N)(\testfn) - \Psi_{G_0}(\eta_0)(\testfn) = \frac{1}{N}\sum_{i=1}^N \Delta_0^i(X_0^{i}),
\end{align*}
and apply Lemma~\ref{lemma:delmoral} below to get for every $p\geq 1$
\begin{align}
\label{eq:lp0_iid}
\Exp\left[\vert \Psi_{G_0^N}(\eta_0^N)(\testfn) - \Psi_{G_0}(\eta_0)(\testfn)\vert^p\right]^{1/p} &\leq b(p) ^{1/p} \frac{1}{\sqrt{N}} \left(\sum_{i=1}^N \left(\sup( \Delta_{0}^i) - \inf( \Delta_{0}^i)\right)^2\right)^{1/2} \\
&\leq b(p) ^{1/p} \frac{1}{\sqrt{N}} \left(\sum_{i=1}^N 4\left(\sup\vert \Delta_{0}^i\vert\right)^2\right)^{1/2} \notag\\
& \leq b(p) ^{1/p} \frac{1}{\sqrt{N}} \left(\sum_{i=1}^N 16\supnorm{\testfn}^2\right)^{1/2} \notag \\
& \leq 4b(p) ^{1/p} \frac{1}{\sqrt{N}}\supnorm{\testfn},\notag
\end{align}
with $C_{p, 0} = 4b(p)^{1/p}.$
Since $h=\norm{\cdot}^2/2$ and assuming the same initial value of the $\theta$-iterates is fixed, we have
\begin{align*}
    \Exp[\norm{\theta_1^N-\theta_1}^p]^{1/p}&\leq \gamma_1\Exp\left[\vert \Psi_{G_0^N}(\eta_0^N)(\nabla_\theta U(\theta_0^N, \cdot)) - \Psi_{G_0}(\eta_0)(\nabla_\theta U(\theta_0^N, \cdot))\vert^p\right]^{1/p}\\
    &\leq \gamma_1\Exp\left[\vert \Psi_{G_0^N}(\eta_0^N)(\nabla_\theta U(\theta_0^N, \cdot) -\nabla_\theta U(\theta_0, \cdot))\vert^p\right]^{1/p}\\
    &+\gamma_1\Exp\left[\vert \Psi_{G_0^N}(\eta_0^N)(\nabla_\theta U(\theta_0, \cdot))-\Psi_{G_0}(\eta_0)(\nabla_\theta U(\theta_0, \cdot))\vert^p\right]^{1/p}\\
    &\leq 4b(p) ^{1/p} \frac{\gamma_1}{\sqrt{N}}\supnorm{\nabla_\theta U},
\end{align*}
using~\eqref{eq:lp0_iid} and recalling that $\theta_0=\theta_0^N$.
Hence, $D_{p, 1} = 4b(p)^{1/p}\supnorm{\nabla_\theta U}$.

Then, assume that the result holds for all times up to time $n-1$ for some $n$: we will show it also holds at time $n$.
Using Lemma~\ref{lp:lemma3}, which controls the error of the reweighting step, we have
\begin{align*}
\Exp\left[\vert\bgN(\predictiveN)(\testfn) - \bg(\predictive)(\testfn)\vert^p\right]^{1/p} 
&\leq \frac{2\omega\supnorm{\testfn}}{ \predictive( \weight)}\sum_{j=0}^{n-1}\Exp\left[\Vert \theta_{j}^N-\theta_{j}\Vert^p\right]^{1/p}\\
&+\frac{\supnorm{\testfn}}{\predictive(\weight)}\Exp\left[\vert\predictiveN(G_n) - \predictive(G_n)\vert^p\right]^{1/p}\\
&+\frac{1}{\predictive(\weight)}\Exp\left[\vert\predictiveN(G_n\testfn) - \predictive(G_n\testfn)\vert^p\right]^{1/p}.
\end{align*}
Applying Lemma~\ref{lp:lemma1}, controlling the error introduced by the mutation step, to the last two term above we find
\begin{align*}
\Exp\left[\vert\bgN(\predictiveN)(\testfn) - \bg(\predictive)(\testfn)\vert^p\right]^{1/p} 
&\leq \frac{2\omega\supnorm{\testfn}}{ \predictive( \weight)}\sum_{j=0}^{n-1}\Exp\left[\Vert \theta_{j}^N-\theta_{j}\Vert^p\right]^{1/p}\\
&+\frac{2\supnorm{\testfn}\supnorm{G_n}}{ \predictive( \weight)}\left(\sum_{j=0}^{n-2}\Exp\left[\Vert \theta_{j}^N-\theta_{j}\Vert^p\right]^{1/p} +4\frac{b(p)^{1/p}}{N^{1/2}}\right)\\
&+\frac{\supnorm{\testfn}}{\predictive(\weight)}\Exp\left[\vert\hat{\eta}_{n-1}^NK_n(G_n) - \hat{\eta}_{n-1}K_n(G_n)\vert^p\right]^{1/p}\\
&+\frac{1}{\predictive(\weight)}\Exp\left[\vert\hat{\eta}_{n-1}^NK_n(G_n\testfn) - \hat{\eta}_{n-1}K_n(G_n\testfn)\vert^p\right]^{1/p}.
\end{align*}
Finally, Lemma~\ref{lp:lemma4} controls the error introduced by the resampling step and gives
\begin{align*}
&\Exp\left[\vert\bgN(\predictiveN)(\testfn) - \bg(\predictive)(\testfn)\vert^p\right]^{1/p}\\ 
&\qquad\qquad\leq \frac{2\omega\supnorm{\testfn}}{ \predictive( \weight)}\sum_{j=0}^{n-1}\Exp\left[\Vert \theta_{j}^N-\theta_{j}\Vert^p\right]^{1/p}\\
&\qquad\qquad+\frac{2\supnorm{\testfn}\supnorm{G_n}}{ \predictive( \weight)}\left(\sum_{j=0}^{n-2}\Exp\left[\Vert \theta_{j}^N-\theta_{j}\Vert^p\right]^{1/p} +8\frac{b(p)^{1/p}}{N^{1/2}}\right)\\
&\qquad\qquad+\frac{\supnorm{\testfn}}{\predictive(\weight)}\Exp\left[\vert\Psi_{G_{n-1}^N}(\eta_{n-1}^N)(K_n(G_n)) - \Psi_{G_{n-1}}(\eta_{n-1})(K_n(G_n))\vert^p\right]^{1/p}\\
&\qquad\qquad+\frac{1}{\predictive(\weight)}\Exp\left[\vert\Psi_{G_{n-1}^N}(\eta_{n-1}^N)(K_n(G_n\testfn)) - \Psi_{G_{n-1}}(\eta_{n-1})(K_n(G_n\testfn))\vert^p\right]^{1/p}.
\end{align*}
Recalling that $\theta_0^N=\theta_0$ and using the results for all times from 1 to $n-1$ and the fact that $\supnorm{K_n\testfn}\leq\supnorm{\testfn} $ for all $\testfn\in\bounded(\calX)$, we find
\begin{align*}
\Exp\left[\vert\bgN(\predictiveN)(\testfn) - \bg(\predictive)(\testfn)\vert^p\right]^{1/p} 
&\leq \frac{2\omega\supnorm{\testfn}}{ \predictive( \weight)}\frac{\sum_{j=1}^{n-1} D_{p, j}\gamma_j}{N^{1/2}}+\frac{2\supnorm{\testfn}\supnorm{G_n}}{ \predictive( \weight)}\frac{\sum_{j=1}^{n-2} D_{p, j}\gamma_j}{N^{1/2}} \\
&+\frac{16\supnorm{\testfn}\supnorm{G_n}}{ \predictive( \weight)}\frac{b(p)^{1/p}}{N^{1/2}}+\frac{2\supnorm{\testfn}\supnorm{G_n}}{\predictive(\weight)}\frac{C_{p,n-1}}{N^{1/2}}.
\end{align*}
It follows that
\begin{align*}
    C_{p, n} = \frac{2\omega\sum_{j=1}^{n-1} D_{p, j}\gamma_j}{ \predictive( \weight)}+\frac{2\supnorm{G_n}\sum_{j=1}^{n-2} D_{p, j}\gamma_j}{ \predictive( \weight)}+\frac{16b(p)^{1/p}\supnorm{G_n}}{ \predictive( \weight)}+\frac{2C_{p,n-1}\supnorm{G_n}}{\predictive(\weight)}.
\end{align*}
Proceeding similarly for $\theta$, we find, using Lemma~\ref{lp:lemma_theta},
\begin{align*}
    \Exp\left[\Vert \theta_{n}^N-\theta_{n}\Vert^p\right]^{1/p} &\leq (1+\gamma_nL)\Exp\left[\Vert\theta_{n-1}^N-\theta_{n-1}\Vert^p \right]^{1/p} \\
    &+ \gamma_n\Exp\left[\Vert \Psi_{G_{n-1}^N}(\eta_{n-1}^N)(\nabla_\theta U(\theta_{n-1}, \cdot)) - \Psi_{G_{n-1}}(\eta_{n-1})(\nabla_\theta U(\theta_{n-1}, \cdot))\Vert^p \right]^{1/p}\\
    &\leq (1+\gamma_nL) D_{p,n-1}\frac{\gamma_{n-1}}{N^{1/2}}+\supnorm{\nabla_\theta U}C_{p,n-1}\frac{\gamma_n}{N^{1/2}}\\
    &\leq D_{p, n}\frac{\gamma_n }{N^{1/2}},
\end{align*}
where we used the fact that $\gamma_{n-1}\leq \gamma_n\leq 1$ and
\begin{align*}
    D_{p,n}= (1+L) D_{p,n-1}+\supnorm{\nabla_\theta U}C_{p,n-1}.
\end{align*}
The result follows for all $n\in \mathbb{N}$ by induction.

\subsubsection{Auxiliary results for the proof of Proposition~\ref{prop:lp}}

As a preliminary we reproduce part of \citeapp[Lemma 7.3.3]{smc:theory:Del04}, a Marcinkiewicz-Zygmund-type inequality of which we will make extensive use.

\begin{lemma}[Del Moral, 2004] \label{lemma:delmoral}
Given a sequence of probability measures $(\mu_i)_{i \geq 1}$ on a given measurable space $(E,\mathcal{E})$ and a collection of independent random variables, one distributed according to each of those measures, $(X_i)_{i \geq 1}$, where $\forall i, X_i \sim \mu_i$, together with any sequence of measurable functions $(f_i)_{i \geq 1}$ such that $\mu_i(f_i) = 0$ for all $i \geq 1$, we define for any $N \in \mathbb{N}$,
$$m_N(X)(f) = \frac{1}{N} \sum_{i=1}^N f_i( X_i ) \ \textrm{ and } \ \sigma_N^2(f) = \frac{1}{N} \sum_{i=1}^N \left(\sup(f_i) - \inf(f_i) \right)^2.$$
If the $f_i$ have finite oscillations (i.e., $\sup(f_i)-\inf(f_i)<\infty \  \forall i \geq 1$) then we have:
$$\sqrt{N} \Exp\left[ \left\vert m_N(X)(f) \right\vert^p \right]^{1/p} \leq b(p)^{1/p} \sigma_N(h),$$
with, for any pair of integers $q,p$ such that $q \geq p \geq 1$, denoting $(q)_p=q!/(q-p)!$:
\begin{align}
    \label{eq:b(p)}
    b(2q) = (2q)_q 2^{-q} \ \textrm{ and } \  b(2q - 1) = \frac{(2q-1)_q}{\sqrt{q-\frac12}} 2^{-(q-\frac12)}. 
\end{align}
\end{lemma}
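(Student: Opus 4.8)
The plan is to reprove this statement (it is quoted as \citetapp[Lemma~7.3.3]{smc:theory:Del04}, but the argument is short) by the classical route: symmetrisation, Rademacher randomisation, the sharp even-moment Khintchine inequality, and interpolation for odd moments. Write $S_N:=\sum_{i=1}^N f_i(X_i)$ and $o_i:=\sup(f_i)-\inf(f_i)$. Since each $f_i$ is centred and takes values in an interval of length $o_i$ that must contain $0$, one has $|f_i(X_i)|\le o_i$ almost surely. Dividing by $N^{p/2}$ and taking $p$-th roots, the asserted inequality is equivalent to $\Exp[|S_N|^p]\le b(p)\,(\sum_{i=1}^N o_i^2)^{p/2}$, so I would just establish this, treating even and odd $p$ separately.

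\emph{Symmetrisation and even moments.} Let $(X_i')_i$ be an independent copy of $(X_i)_i$ and set $g_i:=f_i(X_i)-f_i(X_i')$, so the $g_i$ are independent, symmetric, and satisfy $|g_i|\le o_i$. Conditioning on $(X_i)_i$, using $\Exp[f_i(X_i')]=0$ and Jensen's inequality for $t\mapsto|t|^p$, I get $\Exp[|S_N|^p]\le\Exp[|\sum_i g_i|^p]$; and by symmetry $\sum_i g_i$ has the same law as $\sum_i\epsilon_i g_i$ with $(\epsilon_i)_i$ i.i.d.\ Rademacher, independent of $(g_i)_i$. For $p=2q$ I would condition on $(g_i)_i$ and apply the sharp even-moment Khintchine inequality $\Exp_\epsilon[(\sum_i\epsilon_i a_i)^{2q}]\le(2q-1)!!\,(\sum_i a_i^2)^q$ --- itself provable by a short induction on $N$ using the binomial expansion of $(\sum_{i<N}\epsilon_i a_i\pm a_N)^{2q}$ and the elementary inequality $2^k k!\le(2k)!$ --- and then use $\sum_i g_i^2\le\sum_i o_i^2$ deterministically. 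This gives $\Exp[|S_N|^{2q}]\le(2q-1)!!\,(\sum_i o_i^2)^q$, and one checks $(2q-1)!!=(2q)_q 2^{-q}=b(2q)$.

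\emph{Odd moments.} For $p=2q-1$, write $\tilde S_N:=\sum_i\epsilon_i g_i$ and apply Cauchy--Schwarz: $\Exp[|\tilde S_N|^{2q-1}]=\Exp[|\tilde S_N|^{q-1}\cdot|\tilde S_N|^{q}]\le\Exp[\tilde S_N^{2q-2}]^{1/2}\,\Exp[\tilde S_N^{2q}]^{1/2}$. Plugging in the even-moment bounds just obtained (for exponents $2q-2$ and $2q$) yields $\Exp[|S_N|^{2q-1}]\le\sqrt{b(2q-2)\,b(2q)}\,(\sum_i o_i^2)^{q-1/2}$, and an elementary factorial computation confirms $\sqrt{b(2q-2)\,b(2q)}=(2q-1)_q/(\sqrt{q-\tfrac12}\,2^{q-1/2})=b(2q-1)$. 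Combining the two cases gives $\Exp[|S_N|^p]\le b(p)(\sum_i o_i^2)^{p/2}$ for every integer $p\ge1$, which is the claim.

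The only genuinely delicate point is the bookkeeping of the constant in~\eqref{eq:b(p)}: verifying that the extremal (Gaussian) constant in the even-moment Khintchine bound is exactly $(2q)_q 2^{-q}$ and that its geometric mean over consecutive even exponents collapses to the stated odd-exponent value $(2q-1)_q/(\sqrt{q-\tfrac12}\,2^{q-1/2})$. Everything else --- the symmetrisation inequality, the Rademacher reduction, the $\cosh/\text{induction}$ proof of Khintchine, and Cauchy--Schwarz --- is routine.
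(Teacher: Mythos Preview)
The paper does not prove this lemma: it is stated without proof and attributed to \citeapp[Lemma~7.3.3]{smc:theory:Del04}. Your argument is the classical Marcinkiewicz--Zygmund route---symmetrisation, Rademacher randomisation, the sharp even-moment Khintchine inequality (with constant $(2q-1)!!=(2q)_q2^{-q}$), and Cauchy--Schwarz interpolation for odd exponents---and it is correct, including the factorial bookkeeping: the induction you sketch for Khintchine reduces to $2^k k!\le(2k)!$, and the identity $\sqrt{b(2q-2)\,b(2q)}=b(2q-1)$ indeed holds (with the convention $b(0)=1$ covering the $p=1$ edge case). This is essentially the proof given in Del~Moral's monograph, so there is no substantive difference in approach to discuss.
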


We also define the following $\sigma$-fields of which we will make frequent use:  $\mathcal{G}_0^N:= \sigma\left(\widetilde{X}_0^{i}: i\in\lbrace 1,\ldots,N\rbrace\right)$. We recursively define the $\sigma$-field generated by the weighted samples up to an including mutation at time $n$, $\mathcal{F}_{n}^N:= \sigma\left(X_n^i: i\in\lbrace 1,\ldots,N\rbrace\right)\vee \sfmutation$ and the $\sigma$-field generated by the particle system up to (and including) time $n$ before the mutation step at time $n+1$, $\mathcal{G}_n^N:= \sigma\left(\widetilde{X}_n^{i}: i\in\lbrace 1,\ldots,N\rbrace\right)\vee \mathcal{F}_{n}^N$.

We start by controlling the error in the $\theta$-iterates.
\begin{lemma}[$\theta$-update]
\label{lp:lemma_theta}
Under the conditions of Proposition~\ref{prop:lp}, we have that
\begin{align*}
    \Exp\left[\Vert \theta_{n}^N-\theta_{n}\Vert^p\right]^{1/p} &\leq (1+\gamma_nL)\Exp\left[\Vert\theta_{n-1}^N-\theta_{n-1}\Vert^p \right]^{1/p} \\
    &+ \gamma_n\Exp\left[\Vert \Psi_{G_{n-1}^N}(\eta_{n-1}^N)(\nabla_\theta U(\theta_{n-1}, \cdot)) - \Psi_{G_{n-1}}(\eta_{n-1})(\nabla_\theta U(\theta_{n-1}, \cdot))\Vert^p \right]^{1/p},
\end{align*}
for all $p\geq 1$.
\end{lemma}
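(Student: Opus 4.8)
The plan is to compare the two $\theta$-recursions directly with the triangle inequality and then pass to $\Exp[\,\cdot\,^p]^{1/p}$ norms via Minkowski's inequality. First I would use $h=\norm{\cdot}^2/2$, so that $\nabla h=\mathrm{id}$ and the update~\eqref{eq:theta_update} becomes $\theta_n^N=\theta_{n-1}^N-\gamma_n\,\Psi_{G_{n-1}^N}(\eta_{n-1}^N)(\nabla_\theta U(\theta_{n-1}^N,\cdot))$, where I identify the reweighted empirical measure $\hat\eta_{n-1}^N=\Psi_{G_{n-1}^N}(\eta_{n-1}^N)$ with the weighted cloud $\{X_{n-1}^i,W_{n-1}^i\}_{i=1}^N$ used in Algorithm~\ref{alg:smc_lvm}; the exact iteration~\eqref{eq:md_update_mmle} reads $\theta_n=\theta_{n-1}-\gamma_n\,\Psi_{G_{n-1}}(\eta_{n-1})(\nabla_\theta U(\theta_{n-1},\cdot))$ since $\hat\eta_{n-1}=\mu_{n-1}$.

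Next I would subtract and insert $\pm\,\gamma_n\,\Psi_{G_{n-1}^N}(\eta_{n-1}^N)(\nabla_\theta U(\theta_{n-1},\cdot))$, splitting $\theta_n^N-\theta_n$ into: the previous error $\theta_{n-1}^N-\theta_{n-1}$; a parameter-mismatch term $-\gamma_n\,\Psi_{G_{n-1}^N}(\eta_{n-1}^N)\bigl(\nabla_\theta U(\theta_{n-1}^N,\cdot)-\nabla_\theta U(\theta_{n-1},\cdot)\bigr)$; and a measure-mismatch term $-\gamma_n\bigl[\Psi_{G_{n-1}^N}(\eta_{n-1}^N)-\Psi_{G_{n-1}}(\eta_{n-1})\bigr](\nabla_\theta U(\theta_{n-1},\cdot))$. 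The measure-mismatch term is already in the form appearing in the statement and is left untouched (it will later be controlled by the $\mu$-estimate of Proposition~\ref{prop:lp}). For the parameter-mismatch term, since $\Psi_{G_{n-1}^N}(\eta_{n-1}^N)$ is a probability measure, Jensen's inequality pulls the Euclidean norm inside the integral, and I would then use that $L$-relative smoothness of $\theta\mapsto U(\theta,x)$ with respect to $h=\norm{\cdot}^2/2$, uniformly in $x$ (Assumption~\ref{ass:convex}, together with the accompanying relative convexity), is equivalent to $\nabla_\theta U(\cdot,x)$ being $L$-Lipschitz uniformly in $x$. Hence the integrand is at most $L\norm{\theta_{n-1}^N-\theta_{n-1}}$ for every $x$, so this term is bounded by $\gamma_n L\,\norm{\theta_{n-1}^N-\theta_{n-1}}$. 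This is a pathwise bound: conditional on the randomness up to time $n-1$ the quantities $\theta_{n-1}^N$ and $\theta_{n-1}$ are deterministic, so no measurability subtlety arises.

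Putting the three pieces together gives $\norm{\theta_n^N-\theta_n}\le(1+\gamma_nL)\norm{\theta_{n-1}^N-\theta_{n-1}}+\gamma_n\bigl\|\Psi_{G_{n-1}^N}(\eta_{n-1}^N)(\nabla_\theta U(\theta_{n-1},\cdot))-\Psi_{G_{n-1}}(\eta_{n-1})(\nabla_\theta U(\theta_{n-1},\cdot))\bigr\|$, and applying $\Exp[\,\cdot\,^p]^{1/p}$ to both sides with Minkowski's inequality yields the claim. I do not expect a genuine obstacle here: this is a one-step a-priori estimate. The only point needing care is that $\nabla_\theta U$ is $\real^{d_\theta}$-valued, so ``$\Psi_G(\eta)(\nabla_\theta U)$'' must be read componentwise; each coordinate is bounded (hence in $\bounded$) by Assumption~\ref{ass:weights}, and the equivalence of norms on $\real^{d_\theta}$ contributes only a dimensional constant that is absorbed into $D_{p,n}$ when this lemma is later combined with the scalar $\lp$-estimates of Appendix~\ref{app:proof}. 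Keeping this vector-valued bookkeeping consistent with the scalar test-function machinery is the only mildly delicate part.
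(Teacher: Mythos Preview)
Your proposal is correct and follows essentially the same route as the paper's proof: write out the two $\theta$-recursions with $h=\norm{\cdot}^2/2$, insert $\pm\gamma_n\Psi_{G_{n-1}^N}(\eta_{n-1}^N)(\nabla_\theta U(\theta_{n-1},\cdot))$, bound the parameter-mismatch term via the $L$-Lipschitz property of $\nabla_\theta U(\cdot,x)$ implied by Assumption~\ref{ass:convex}, and apply Minkowski. Your version is in fact slightly more careful than the paper's in flagging that relative smoothness plus convexity is what yields the gradient-Lipschitz bound, and in noting the vector-valued test-function bookkeeping.
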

\begin{proof}
Consider the $\theta$-update
\begin{align*}
    \theta_{n}^N = \theta_{n-1}^N - \gamma_{n}\sum_{i=1}^N W_{n-1}^i \nabla_\theta U(\theta_{n-1}^N, X_{n-1}^i) = \theta_{n-1}^N -\gamma_{n} \Psi_{G_{n-1}^N}(\eta_{n-1}^N)(\nabla_\theta U(\theta_{n-1}^N, \cdot)).
\end{align*}
Then, 
\begin{align*}
\Exp\left[\Vert \theta_{n}^N-\theta_{n}\Vert^p\right]^{1/p} &\leq \Exp\left[\Vert\theta_{n-1}^N-\theta_{n-1}\Vert^p \right]^{1/p}\\
&+\gamma_n\Exp\left[\Vert \Psi_{G_{n-1}^N}(\eta_{n-1}^N)(\nabla_\theta U(\theta_{n-1}^N, \cdot)) - \Psi_{G_{n-1}}(\eta_{n-1})(\nabla_\theta U(\theta_{n-1}, \cdot))\Vert^p \right]^{1/p}.
\end{align*}
Using the relative smoothness of $U$ in Assumption~\ref{ass:convex} we have
\begin{align*}
&\Exp\left[\Vert \Psi_{G_{n-1}^N}(\eta_{n-1}^N)(\nabla_\theta U(\theta_{n-1}^N, \cdot)) - \Psi_{G_{n-1}}(\eta_{n-1})(\nabla_\theta U(\theta_{n-1}, \cdot))\Vert^p \right]^{1/p} \\
&\qquad\qquad\leq \Exp\left[\Vert \Psi_{G_{n-1}^N}(\eta_{n-1}^N)(\nabla_\theta U(\theta_{n-1}^N, \cdot) -\nabla_\theta U(\theta_{n-1}, \cdot))\Vert^p \right]^{1/p}\\
&\qquad\qquad+\Exp\left[\Vert \Psi_{G_{n-1}^N}(\eta_{n-1}^N)(\nabla_\theta U(\theta_{n-1}, \cdot)) - \Psi_{G_{n-1}}(\eta_{n-1})(\nabla_\theta U(\theta_{n-1}, \cdot))\Vert^p \right]^{1/p}\\
&\qquad\qquad\leq L\Exp\left[\Vert\theta_{n-1}^N-\theta_{n-1}\Vert^p \right]^{1/p} \\
&\qquad\qquad+\Exp\left[\Vert \Psi_{G_{n-1}^N}(\eta_{n-1}^N)(\nabla_\theta U(\theta_{n-1}, \cdot)) - \Psi_{G_{n-1}}(\eta_{n-1})(\nabla_\theta U(\theta_{n-1}, \cdot))\Vert^p \right]^{1/p}.
\end{align*}
Combining the two results above we obtain
\begin{align*}
    \Exp\left[\Vert \theta_{n}^N-\theta_{n}\Vert^p\right]^{1/p} &\leq (1+\gamma_nL)\Exp\left[\Vert\theta_{n-1}^N-\theta_{n-1}\Vert^p \right]^{1/p} \\
    &+ \gamma_n\Exp\left[\Vert \Psi_{G_{n-1}^N}(\eta_{n-1}^N)(\nabla_\theta U(\theta_{n-1}, \cdot)) - \Psi_{G_{n-1}}(\eta_{n-1})(\nabla_\theta U(\theta_{n-1}, \cdot))\Vert^p \right]^{1/p}.
\end{align*}
\end{proof}

We now turn to the approximation error of the $\mu$-iterates.
Lemma~\ref{lp:lemma4} is a well-known result for standard SMC methods and we report it for completeness while Lemma~\ref{lp:lemma1} and~\ref{lp:lemma3} control the additional error introduced by the use of the approximate Markov kernels and weights. 

\begin{lemma}[Multinomial resampling]
\label{lp:lemma4}
Under the conditions of Proposition~\ref{prop:lp}, for any $\testfn\in \bounded$ and $p\geq 1$ we have
\begin{align*}
\Exp\left[\vert\hat{\eta}_{n-1}^N(\testfn)  - \hat{\eta}_{n-1}(\testfn)\vert^p\right]^{1/p} & \leq 4b(p)^{1/p}\frac{\supnorm{\testfn}}{N^{1/2} }++\Exp\left[\vert\Psi_{G_{n-1}^N}(\eta_{n-1}^N)(\testfn) - \hat{\eta}_{n-1}(\testfn)\vert^p\right]^{1/p}.
\end{align*}
\end{lemma}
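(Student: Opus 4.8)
The plan is to separate the fresh randomness injected by resampling from the error already present in the weighted empirical measure, and then control the former by a conditional application of Lemma~\ref{lemma:delmoral}. Write
\begin{align*}
\hat{\eta}_{n-1}^N(\testfn) - \hat{\eta}_{n-1}(\testfn)
&= \left[\hat{\eta}_{n-1}^N(\testfn) - \Psi_{G_{n-1}^N}(\eta_{n-1}^N)(\testfn)\right]
+ \left[\Psi_{G_{n-1}^N}(\eta_{n-1}^N)(\testfn) - \hat{\eta}_{n-1}(\testfn)\right],
\end{align*}
so that by Minkowski's inequality the left-hand side of the claim is at most $\Exp[\vert\hat{\eta}_{n-1}^N(\testfn) - \Psi_{G_{n-1}^N}(\eta_{n-1}^N)(\testfn)\vert^p]^{1/p}$ plus the term $\Exp[\vert\Psi_{G_{n-1}^N}(\eta_{n-1}^N)(\testfn) - \hat{\eta}_{n-1}(\testfn)\vert^p]^{1/p}$ that already appears in the statement. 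It therefore suffices to bound the first summand by $4b(p)^{1/p}\supnorm{\testfn}/N^{1/2}$.

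For that term I would condition on $\mathcal{F}_{n-1}^N$. By the definition of multinomial resampling, given $\mathcal{F}_{n-1}^N$ the particles $(\widetilde{X}_{n-1}^i)_{i=1}^N$ are independent and each is distributed according to the $\mathcal{F}_{n-1}^N$-measurable weighted empirical measure $\Psi_{G_{n-1}^N}(\eta_{n-1}^N)=\sum_{j=1}^N W_{n-1}^j\delta_{X_{n-1}^j}$; in particular $\Exp[\testfn(\widetilde{X}_{n-1}^i)\mid\mathcal{F}_{n-1}^N]=\Psi_{G_{n-1}^N}(\eta_{n-1}^N)(\testfn)$. Taking $f_i:=\testfn-\Psi_{G_{n-1}^N}(\eta_{n-1}^N)(\testfn)$, the $f_i$ are centred under these conditional laws and have finite oscillation with $(\sup f_i-\inf f_i)^2\le 4(\sup\vert f_i\vert)^2\le 16\supnorm{\testfn}^2$, exactly the estimate used in~\eqref{eq:lp0_iid}. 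Applying Lemma~\ref{lemma:delmoral} conditionally on $\mathcal{F}_{n-1}^N$ (the conditional Marcinkiewicz--Zygmund inequality, valid since conditionally the $\widetilde{X}_{n-1}^i$ are i.i.d.\ from a fixed, $\mathcal{F}_{n-1}^N$-measurable law) yields the deterministic bound
\begin{align*}
\Exp\left[\left\vert\hat{\eta}_{n-1}^N(\testfn) - \Psi_{G_{n-1}^N}(\eta_{n-1}^N)(\testfn)\right\vert^p \,\Big|\, \mathcal{F}_{n-1}^N\right]^{1/p}\le \frac{4b(p)^{1/p}\supnorm{\testfn}}{N^{1/2}}.
\end{align*}
Raising to the $p$-th power, taking expectations, using the tower property, and extracting the $p$-th root gives the same bound unconditionally; combined with the Minkowski splitting above this proves the lemma.

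The argument is essentially routine bookkeeping; the only point that needs care is the conditioning step — one must verify that, given $\mathcal{F}_{n-1}^N$, the resampled particles are genuinely i.i.d.\ draws from the $\mathcal{F}_{n-1}^N$-measurable measure $\Psi_{G_{n-1}^N}(\eta_{n-1}^N)$ so that Lemma~\ref{lemma:delmoral} applies verbatim, and that the conditional bound obtained is a constant (not random), so that the outer expectation can be discharged trivially. The factor $4b(p)^{1/p}$, rather than the sharper $2b(p)^{1/p}$, is inherited from the slightly loose oscillation estimate $(\sup f_i-\inf f_i)^2\le 16\supnorm{\testfn}^2$ that is used consistently throughout the appendix.
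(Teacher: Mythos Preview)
Your proposal is correct and follows essentially the same approach as the paper: the same Minkowski splitting into the resampling fluctuation and the pre-resampling error, the same conditioning on $\mathcal{F}_{n-1}^N$, the same centred functions $\Delta_n^i(x)=\testfn(x)-\Psi_{G_{n-1}^N}(\eta_{n-1}^N)(\testfn)$, and the same oscillation estimate leading to the constant $4b(p)^{1/p}$ via Lemma~\ref{lemma:delmoral}. Your explicit mention of the tower property to pass from the conditional to the unconditional bound is a step the paper leaves implicit, but otherwise the arguments are identical.
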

\begin{proof}
The proof follows that of \citetapp[Lemma 5]{smc:theory:CD02}.
Divide into two terms and apply Minkowski's inequality
\begin{align*}
\Exp\left[\vert\hat{\eta}_{n-1}^N(\testfn) - \hat{\eta}_{n-1}(\testfn)\vert^p\right]^{1/p} & \leq \Exp\left[\vert\hat{\eta}_{n-1}^N(\testfn) - \Psi_{G_{n-1}^N}(\eta_{n-1}^N)(\testfn)\vert^p\right]^{1/p}\\
&+\Exp\left[\vert\Psi_{G_{n-1}^N}(\eta_{n-1}^N)(\testfn) - \hat{\eta}_{n-1}(\testfn)\vert^p\right]^{1/p}.
\end{align*}
Denote by $\sfresampling$
the $\sigma$-field generated by the weighted samples up to (and including) time $n$, $\sfresampling:= \sigma\left(X_{n-1}^i: i\in\lbrace 1,\ldots,N\rbrace\right)\vee\mathcal{G}_{n-2}^N$ and consider the sequence of functions $\Delta_n^i : E \mapsto \mathbb{R}$, $
\Delta_n^i(x) :=  \testfn(x) -  \Exp\left[\testfn(\widetilde{X}_{n-1}^i) \mid \sfresampling\right]$,  for $i=1,\ldots, N$.
Conditionally on $\sfresampling$, $\Delta_n^i(\widetilde{X}_{n-1}^i)$ $i=1,\ldots, N$ are independent and have expectation equal to 0, moreover
\begin{align*}
\hat{\eta}_{n-1}^N(\testfn) - \Psi_{G_{n-1}^N}(\eta_{n-1}^N)(\testfn) &= \frac{1}{N} \sum_{i=1}^N \left( \testfn(\widetilde{X}_{n-1}^i) - \Exp\left[\testfn(\widetilde{X}_{n-1}^i) \mid \sfresampling\right]\right)= \frac{1}{N} \sum_{i=1}^N \Delta_n^i(\widetilde{X}_{n-1}^i).
\end{align*}
Using Lemma~\ref{lemma:delmoral}, we find
\begin{align*}
\sqrt{N}\Exp\left[\vert\hat{\eta}_{n-1}^N(\testfn) - \Psi_{G_{n-1}^N}(\eta_{n-1}^N)(\testfn)\vert^p \mid \sfresampling\right]^{1/p}
&\leq b(p) ^{1/p} \frac{1}{\sqrt{N}} \left(\sum_{i=1}^N \left(\sup( \Delta_{n}^i) - \inf( \Delta_{n}^i)\right)^2\right)^{1/2}  \\
&\leq b(p) ^{1/p} \frac{1}{\sqrt{N}} \left(\sum_{i=1}^N 4\left(\sup\vert \Delta_{n}^i\vert\right)^2\right)^{1/2} \notag \\
&  \leq b(p) ^{1/p} \frac{1}{\sqrt{N}} \left(\sum_{i=1}^N 16\supnorm{\testfn}^2\right)^{1/2} \notag \\
& \leq 4b(p) ^{1/p} \supnorm{\testfn},\notag
\end{align*}
where $b(p)$ is as in~\eqref{eq:b(p)}. 
Since $\hat{\eta}_{n-1}(\testfn)\equiv \Psi_{G_{n-1}}(\eta_n)(\testfn)$, we find
\begin{align*}
\Exp\left[\vert\hat{\eta}_{n-1}^N(\testfn)  - \hat{\eta}_{n-1}(\testfn)\vert^p\right]^{1/p} & \leq 4b(p)^{1/p}\frac{\supnorm{\testfn}}{N^{1/2} }++\Exp\left[\vert\Psi_{G_{n-1}^N}(\eta_{n-1}^N)(\testfn) - \hat{\eta}_{n-1}(\testfn)\vert^p\right]^{1/p}.
\end{align*}
\end{proof}

Here we show that the mutation step preserves the error bounds; in the case of a fixed $\theta$-sequence, $K_{n,N}$ coincides with $K_n$ and essentially this result can be found in \citetapp[Lemma 3]{smc:theory:CD02}.

\begin{lemma}[Mutation]
\label{lp:lemma1}
Under the conditions of Proposition~\ref{prop:lp}, for any $\testfn\in \bounded$, $p\geq 1$ we have
\begin{align*}
\Exp\left[\vert\predictiveN(\testfn) - \predictive(\testfn)\vert^p\right]^{1/p} &\leq 4b(p)^{1/p}\frac{\supnorm{\testfn}}{ N^{1/2}}+\supnorm{\testfn}\sum_{j=0}^{n-2}\Exp\left[ \norm{\theta_j^N-\theta_j}^p\right]^{1/p}\\
&+\Exp\left[\vert\hat{\eta}_{n-1}^N K_{n}(\testfn)- \hat{\eta}_{n-1} K_{n}(\testfn) \vert^p\right]^{1/p}.
\end{align*}
\end{lemma}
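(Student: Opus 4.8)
The plan is to follow the same template as the base case~\eqref{eq:lp0_iid} and Lemma~\ref{lp:lemma4}: insert two ``idealised'' intermediate objects between $\predictiveN(\testfn)$ and $\predictive(\testfn)$, one obtained by replacing the empirical average over the mutated particles by its conditional expectation, the other by replacing the \emph{random} kernel $K_{n,N}$ by the \emph{exact} kernel $K_n$. Recalling that $\predictive=\eta_n=\hat{\eta}_{n-1}K_n$ and that the particles satisfy $X_n^i\sim K_{n,N}(\widetilde{X}_{n-1}^i,\cdot)$, I would first write
\begin{align*}
\predictiveN(\testfn)-\predictive(\testfn) &= \bigl[\predictiveN(\testfn)-\hat{\eta}_{n-1}^N K_{n,N}(\testfn)\bigr] + \bigl[\hat{\eta}_{n-1}^N K_{n,N}(\testfn)-\hat{\eta}_{n-1}^N K_n(\testfn)\bigr] \\
&\quad + \bigl[\hat{\eta}_{n-1}^N K_n(\testfn)-\hat{\eta}_{n-1}K_n(\testfn)\bigr] =: A+B+C,
\end{align*}
and bound the $\lp$-norm of the left-hand side by Minkowski's inequality applied termwise. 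The terms $A$ and $C$ are the classical ingredients of the analysis of an SMC mutation step; $B$ is the genuinely new contribution, stemming from the fact that Algorithm~\ref{alg:smc_lvm} uses $K_{n,N}$ rather than $K_n$.

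For $A$, the Monte Carlo error of the mutation step, I would condition on $\mathcal{G}_{n-1}^N$. The crucial observation is that the parameter estimate $\theta_{0:n-2}^N$ driving $K_{n,N}$ is a deterministic function of particles generated strictly before time $n-1$, hence is $\mathcal{G}_{n-1}^N$-measurable; therefore, conditionally on $\mathcal{G}_{n-1}^N$, the variables $X_n^1,\dots,X_n^N$ are independent with $X_n^i\sim K_{n,N}(\widetilde{X}_{n-1}^i,\cdot)$, and the functions $x\mapsto \testfn(x)-K_{n,N}\testfn(\widetilde{X}_{n-1}^i)$ have vanishing conditional mean and oscillation at most $2\supnorm{\testfn}$. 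Applying Lemma~\ref{lemma:delmoral} conditionally on $\mathcal{G}_{n-1}^N$, then integrating over the conditioning via the tower property, would give $\Exp[\vert A\vert^p]^{1/p}\le 4b(p)^{1/p}\supnorm{\testfn}/N^{1/2}$, exactly as in~\eqref{eq:lp0_iid} and Lemma~\ref{lp:lemma4}.

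For $B$, the kernel mismatch, I would use that $\hat{\eta}_{n-1}^N K_{n,N}(\testfn)-\hat{\eta}_{n-1}^N K_n(\testfn)= N^{-1}\sum_{i=1}^N\bigl(M_{n,\theta_{0:n-2}^N}\testfn(\widetilde{X}_{n-1}^i)-M_{n,\theta_{0:n-2}}\testfn(\widetilde{X}_{n-1}^i)\bigr)$ and invoke the kernel-stability Assumption~\ref{ass:kernel}, which bounds each summand, uniformly in the particle location, by $\rho\supnorm{\testfn}\sum_{j=0}^{n-2}\norm{\theta_j^N-\theta_j}$; since this upper bound is independent of $i$ it also bounds $\vert B\vert$. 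Taking $\lp$-norms and pushing the finite sum outside with Minkowski then yields $\Exp[\vert B\vert^p]^{1/p}\le\supnorm{\testfn}\sum_{j=0}^{n-2}\Exp[\norm{\theta_j^N-\theta_j}^p]^{1/p}$ (absorbing the constant $\rho$ of Assumption~\ref{ass:kernel} into the bound). Finally $C$ is left untouched: since $\supnorm{K_n\testfn}\le\supnorm{\testfn}$ and $K_n\testfn\in\bounded$, it is already in the form of the last term in the statement. Summing the three estimates gives the claim.

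The step I expect to require the most care is the conditional-independence argument used for $A$: one must be precise about exactly which quantities are frozen by $\mathcal{G}_{n-1}^N$ — in particular that the randomness of $K_{n,N}$ enters only through $\theta_{0:n-2}^N$ and the resampled positions $\widetilde{X}_{n-1}^i$, all of which are $\mathcal{G}_{n-1}^N$-measurable — so that the conditional version of the Marcinkiewicz--Zygmund inequality of Lemma~\ref{lemma:delmoral} genuinely applies and the conditional means vanish identically. Everything else is routine triangle-inequality bookkeeping, and the argument specialises to the classical SMC mutation bound when $\theta_{0:n-2}^N=\theta_{0:n-2}$, in which case $B=0$.
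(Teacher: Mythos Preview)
Your proposal is correct and follows essentially the same route as the paper: the same three-term decomposition $A+B+C$, the same conditional application of Lemma~\ref{lemma:delmoral} for $A$, the same use of Assumption~\ref{ass:kernel} for $B$, and leaving $C$ untouched. Your remark about absorbing the constant $\rho$ matches what the paper does implicitly, and your discussion of the $\mathcal{G}_{n-1}^N$-measurability of $\theta_{0:n-2}^N$ is exactly the care point the paper glosses over.
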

\begin{proof}
Divide into three terms and apply Minkowski's inequality
\begin{align}
\label{eq:mutation_decomposition}
\Exp\left[\vert\predictiveN(\testfn) - \predictive(\testfn)\vert^p\right]^{1/p} & = \Exp\left[\vert\predictiveN(\testfn) - \hat{\eta}_{n-1} K_{n}(\testfn)\vert^p\right]^{1/p}\\
& \leq \Exp\left[\vert\predictiveN(\testfn) - \hat{\eta}_{n-1}^N K_{n,N}(\testfn)\vert^p\right]^{1/p}\notag\\
&+ \Exp\left[\vert\hat{\eta}_{n-1}^N K_{n,N}(\testfn)- \hat{\eta}^N_{n-1} K_{n}(\testfn) \vert^p\right]^{1/p}\notag\\
&+ \Exp\left[\vert\hat{\eta}_{n-1}^N K_{n}(\testfn)- \hat{\eta}_{n-1} K_{n}(\testfn) \vert^p\right]^{1/p}.\notag
\end{align}
Let $\sfmutation$ denote the $\sigma$-field generated by the particle system up to (and including) time $n-1$ before the mutation step at time $n$, $\sfmutation = \sigma\left(\widetilde{X}_p^i: i\in\lbrace 1,\ldots,N\rbrace\right)\vee \mathcal{F}_{n-1}^N$ and consider the sequence of functions $\Delta_{n}^i : E \mapsto \real$ for $i=1,\ldots, N$, $\Delta_{n}^i(x) := \testfn(x) -  \Exp\left[\testfn(X_n^i)\mid \sfmutation \right] = \testfn(x) - K_{n,N}\testfn(\widetilde{X}_{n-1}^{i})$.
Conditionally on $\sfmutation$, $\Delta_{n}^i(X^{i}_{n}),\ i=1,\ldots, N$ are independent and have expectation equal to 0, moreover
\begin{align*}
\predictiveN(\testfn) - \hat{\eta}_{n-1}^N K_{n,N}(\testfn) &= \frac{1}{N}\sum_{i=1}^N \left[ \testfn(X_{n}^{i}) -  K_{n,N}\testfn(\widetilde{X}_{n-1}^{i})\right] =\frac{1}{N} \sum_{i=1}^N \Delta_{n}^i(X_{n}^{i}).
\end{align*}
Conditioning on $\sfmutation$ and applying Lemma~\ref{lemma:delmoral} we have, for all $p\geq 1$,
\begin{align}
\label{eq:lp_mutation}
\sqrt{N}\Exp\left[\vert\predictiveN(\testfn) - \hat{\eta}_{n-1}^N K_{n,N}(\testfn)\vert^p\mid \sfmutation\right]^{1/p} &
 \leq 4b(p) ^{1/p} \supnorm{\testfn},
\end{align}
with $b(p)$ as in~\eqref{eq:b(p)}.

For the second term of the decomposition~\eqref{eq:mutation_decomposition} we use the stability of the kernels $K_{n,N}, K_n$ in Assumption~\ref{ass:kernel}
\begin{align*}
    &\vert\hat{\eta}_{n-1}^N K_{n,N}(\testfn)- \hat{\eta}^N_{n-1} K_{n}(\testfn) \vert \\
    &\qquad\qquad=\vert\frac{1}{N}\sum_{i=1}^N[K_{n,N}-K_{n}](\testfn)(\widetilde{X}_{n-1}^i) \vert\\
    &\qquad\qquad\leq \supnorm{\testfn}\sum_{j=0}^{n-2}\norm{\theta_j^N-\theta_j} .
\end{align*}

This result, Minkowski's inequality, \eqref{eq:lp_mutation} give
\begin{align*}
\Exp\left[\vert\predictiveN(\testfn) - \predictive(\testfn)\vert^p\right]^{1/p} &\leq 4b(p)^{1/p}\frac{\supnorm{\testfn}}{ N^{1/2}}+\supnorm{\testfn}\sum_{j=0}^{n-2}\Exp\left[ \norm{\theta_j^N-\theta_j}^p\right]^{1/p}\\
&+\Exp\left[\vert\hat{\eta}_{n-1}^N K_{n}(\testfn)- \hat{\eta}_{n-1} K_{n}(\testfn) \vert^p\right]^{1/p}.
\end{align*}
\end{proof}

Using the stability of the weight function in Assumption~\ref{ass:weights} and following \citetapp[Lemma 4]{smc:theory:CD02} we obtain an error bound for the approximate reweighting.
\begin{lemma}[Reweighting]
\label{lp:lemma3}
Under the conditions of Proposition~\ref{prop:lp}, for any $\testfn\in \bounded$ and $p\geq 1$ we have
\begin{align*}
\Exp\left[\vert\bgN(\predictiveN)(\testfn) - \bg(\predictive)(\testfn)\vert^p\right]^{1/p} 
&\leq \frac{2\supnorm{\testfn}}{ \predictive( \weight)}\sum_{j=0}^{n-1}\Exp\left[\Vert \theta_{j}^N-\theta_{j}\Vert^p\right]^{1/p}\\
&+\frac{\supnorm{\testfn}}{\predictive(\weight)}\Exp\left[\vert\predictiveN(G_n) - \predictive(G_n)\vert^p\right]^{1/p}\\
&+\frac{1}{\predictive(\weight)}\Exp\left[\vert\predictiveN(G_n\testfn) - \predictive(G_n\testfn)\vert^p\right]^{1/p}.
\end{align*}
\end{lemma}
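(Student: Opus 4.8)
The plan is to adapt the Boltzmann--Gibbs stability argument of \citetapp[Lemma~4]{smc:theory:CD02}, the only new feature being that here \emph{both} the measure and the potential are perturbed: $\predictiveN$ approximates $\predictive$ while $\weightN(x)=w_n(x;\theta_{0:n-1}^N)$ approximates $\weight(x)=w_n(x;\theta_{0:n-1})$. The starting point is the elementary identity, valid whenever $\predictiveN(\weightN)>0$, obtained by adding and subtracting $\predictive(\weight\testfn)/\predictive(\weight)$ in a way that leaves only the \emph{deterministic} normaliser in a denominator:
\begin{align*}
\bgN(\predictiveN)(\testfn) - \bg(\predictive)(\testfn) = \frac{1}{\predictive(\weight)}\Big[\big(\predictiveN(\weightN\testfn) - \predictive(\weight\testfn)\big) - \bgN(\predictiveN)(\testfn)\,\big(\predictiveN(\weightN) - \predictive(\weight)\big)\Big].
\end{align*}
Since $\bgN(\predictiveN)$ is a probability measure, $\vert\bgN(\predictiveN)(\testfn)\vert\le\supnorm{\testfn}$; taking absolute values, then the $L_p$ norm, and then Minkowski's inequality yields
\begin{align*}
\Exp\big[\vert\bgN(\predictiveN)(\testfn) - \bg(\predictive)(\testfn)\vert^p\big]^{1/p} &\le \frac{1}{\predictive(\weight)}\Big(\Exp\big[\vert\predictiveN(\weightN\testfn) - \predictive(\weight\testfn)\vert^p\big]^{1/p} \\
&\quad + \supnorm{\testfn}\,\Exp\big[\vert\predictiveN(\weightN) - \predictive(\weight)\vert^p\big]^{1/p}\Big).
\end{align*}

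Next I would treat each numerator term by inserting the intermediate quantity in which $\weightN$ is replaced by $\weight$ while keeping the empirical measure, e.g. $\predictiveN(\weightN\testfn) - \predictive(\weight\testfn) = \predictiveN\big((\weightN-\weight)\testfn\big) + \big(\predictiveN(\weight\testfn) - \predictive(\weight\testfn)\big)$, and likewise without $\testfn$. The ``measure'' summands $\predictiveN(\weight\testfn) - \predictive(\weight\testfn) = \predictiveN(G_n\testfn) - \predictive(G_n\testfn)$ and $\predictiveN(\weight) - \predictive(\weight) = \predictiveN(G_n) - \predictive(G_n)$ are exactly the two terms appearing on the right-hand side of the claim, so they are left untouched. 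For the ``weight-perturbation'' summands I would invoke Lemma~\ref{lem:weight_stability}: the bound $\vert\weightN(x)-\weight(x)\vert\le\omega\sum_{j=0}^{n-1}\norm{\theta_j^N-\theta_j}$ is uniform in $x\in\calX$, so $\supnorm{\weightN-\weight}\le\omega\sum_{j=0}^{n-1}\norm{\theta_j^N-\theta_j}$ holds pathwise and survives integration against the random measure $\predictiveN$, giving $\vert\predictiveN((\weightN-\weight)\testfn)\vert\le\omega\supnorm{\testfn}\sum_{j=0}^{n-1}\norm{\theta_j^N-\theta_j}$ and $\vert\predictiveN(\weightN-\weight)\vert\le\omega\sum_{j=0}^{n-1}\norm{\theta_j^N-\theta_j}$; applying the $L_p$ norm and Minkowski once more converts each into $\omega\supnorm{\testfn}\sum_{j=0}^{n-1}\Exp[\norm{\theta_j^N-\theta_j}^p]^{1/p}$. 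Collecting the four pieces and dividing by $\predictive(\weight)$ gives precisely the stated inequality (with the Lipschitz constant $\omega$ of Lemma~\ref{lem:weight_stability} appearing in the $\theta$-error term).

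I do not expect a serious obstacle; the two points needing care are (i) choosing the algebraic rearrangement above so that the only denominator is the deterministic quantity $\predictive(\weight)$ — a naive split would leave the random normaliser $\predictiveN(\weightN)$ in a denominator and force an a.s.\ lower bound on it — and (ii) using that the weight-stability bound of Lemma~\ref{lem:weight_stability} is uniform in $x$, so that it passes through the integral against the random empirical measure \emph{before} any expectation is taken, and hence only contributes $L_p$ norms of $\norm{\theta_j^N-\theta_j}$ rather than anything more delicate. The remaining terms $\Exp[\vert\predictiveN(G_n\testfn)-\predictive(G_n\testfn)\vert^p]^{1/p}$ and $\Exp[\vert\predictiveN(G_n)-\predictive(G_n)\vert^p]^{1/p}$ are then controlled by the mutation Lemma~\ref{lp:lemma1} within the induction of Proposition~\ref{prop:lp}.
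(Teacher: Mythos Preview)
Your proposal is correct and follows essentially the same route as the paper: eliminate the random normaliser $\predictiveN(\weightN)$ from the denominator, split each numerator into a ``measure'' part $\predictiveN(G_n\cdot)-\predictive(G_n\cdot)$ and a ``weight-perturbation'' part $\predictiveN((\weightN-\weight)\cdot)$, and control the latter pathwise via Lemma~\ref{lem:weight_stability}. The only cosmetic difference is the opening algebra---you use the one-line identity $a/b-c/d=(1/d)[(a-c)-(a/b)(b-d)]$, whereas the paper inserts the intermediate term $\predictiveN(\weightN\testfn)/\predictive(\weight)$ and applies the triangle inequality---but both land on exactly the same four terms.
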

\begin{proof}
Apply the definition of $\bg$ and $\bgN$ and consider the following decomposition
\begin{align*}
\vert\bgN(\predictiveN)(\testfn) - \bg(\predictiveN)(\testfn)\vert & = \left\lvert \frac{\predictiveN( \weightN\testfn)}{\predictiveN( \weightN)} - \frac{\predictive( \weight\testfn)}{\predictive( \weight)} \right\rvert\\
&\leq \left\lvert \frac{\predictiveN( \weightN\testfn)}{\predictiveN( \weightN)} - \frac{\predictiveN( \weightN\testfn)}{\predictive( \weight)} \right\rvert + \left\lvert \frac{\predictiveN( \weightN\testfn)}{\predictive( \weight)} - \frac{\predictive( \weight\testfn)}{\predictive( \weight)} \right\rvert.
\end{align*}
Then, for the first term
\begin{align*}
\left\lvert \frac{\predictiveN( \weightN\testfn)}{\predictiveN( \weightN)} - \frac{\predictiveN( \weightN \testfn)}{\predictive( \weight)} \right\rvert & = \left\lvert \frac{\predictiveN( \weightN\testfn)}{\predictiveN( \weightN)}\right\rvert \left\lvert \frac{\predictive(\weight) - \predictiveN(\weightN)}{\predictive( \weight)} \right\rvert\\
&\leq \frac{\supnorm{\testfn}}{\vert \predictive( \weight) \vert} \vert \predictive(\weight) - \predictiveN(\weightN)\vert\\
&\leq \frac{\supnorm{\testfn}}{\vert \predictive( \weight) \vert} \vert \predictive(\weight) - \predictiveN(\weight)\vert+\frac{\supnorm{\testfn}}{\vert \predictive( \weight) \vert} \vert \predictiveN(\weight) - \predictiveN(\weightN)\vert.
\end{align*}
For the second term
\begin{align*}
\left\lvert \frac{\predictiveN( \weightN\testfn)}{\predictive( \weight)} - \frac{\predictive( \weight\testfn)}{\predictive( \weight)} \right\rvert & = \frac{1}{\vert \predictive( \weight) \vert}  \vert \predictiveN( \weightN\testfn) - \predictive( \weight\testfn)\vert\\
&\leq \frac{1}{\vert \predictive( \weight) \vert}  \vert \predictiveN( \weightN\testfn) - \predictiveN( \weight\testfn)\vert+\frac{1}{\vert \predictive( \weight) \vert}  \vert \predictiveN( \weight\testfn) - \predictive( \weight\testfn)\vert.
\end{align*}
Using Assumption~\ref{ass:weights} and Lemma~\ref{lem:weight_stability} have
\begin{align*}
    \vert \predictiveN( \weightN\testfn) - \predictiveN( \weight\testfn)\vert &\leq \frac{1}{N}\sum_{i=1}^N\vert\testfn(X_n^i)[\weightN(X_n^i)-\weight(X_n^i)]\vert\\
    &\leq \supnorm{\varphi}\omega\sum_{j=0}^{n-1} \norm{\theta_j^N-\theta_j}.
\end{align*}
Combining the above with Minkowski's inequality, we have
\begin{align*}
\Exp\left[\vert\bgN(\predictiveN)(\testfn) - \bg(\predictive)(\testfn)\vert^p\right]^{1/p} 
&\leq \frac{2\omega\supnorm{\testfn}}{ \predictive( \weight)}\sum_{j=0}^{n-1}\Exp\left[\Vert \theta_{j}^N-\theta_{j}\Vert^p\right]^{1/p}\\
&+\frac{\supnorm{\testfn}}{\predictive(\weight)}\Exp\left[\vert\predictiveN(G_n) - \predictive(G_n)\vert^p\right]^{1/p}\\
&+\frac{1}{\predictive(\weight)}\Exp\left[\vert\predictiveN(G_n\testfn) - \predictive(G_n\testfn)\vert^p\right]^{1/p}.
\end{align*}

\end{proof}

\subsection{Proof of Corollary~\ref{cor:parameter}}

We start by observing that, under Assumption~\ref{ass:convex} with $l>0$, $U(\cdot, x)$ is a strongly convex function of $\theta$ uniformly in $x$ and thus by a form of the Prékopa-Leindler inequality for strong convexity \citeapp[Theorem 3.8]{saumard2014log} the marginal likelihood $p_\theta(y) = \int_{\calX} \exp\left(-U(\theta, x)\right) dx$ is strongly log-concave and thus admits a unique maximiser $\theta^\star$ and the corresponding posterior $p_\theta^\star(\cdot|y)$ is also unique.

Then we can decompose
\begin{align*}
    \Exp[\norm{\theta_n^N-\theta^\star}^2]^{1/2} \leq \Exp[\norm{\theta_n-\theta^\star}^2]^{1/2}+\Exp[\norm{\theta_n-\theta_n^N}^2]^{1/2},
\end{align*}
where we can use Proposition~\ref{prop:lp} to bound the second term.

For the first term we exploit the inequalities established in \citeapp{caprio2024error}. Assumption~\ref{ass:convex} with $L, l>0$ implies Assumption 3 and 4 therein. In addition, Assumption~\ref{ass:convex} and \citeapp[Remark 1]{akyildiz2023interacting} implies that $\theta \mapsto p_\theta(y)$ is differentiable and so is $\theta \mapsto p_\theta(\cdot |y)$.

Then, since we assumed that $p_\theta(\cdot, y)>0$ for all $(\theta, x)\in \real^{d_\theta}\times \calX$ and that $\theta \mapsto p_\theta(\cdot |y)$ is twice differentiable we can apply \citetapp[Theorem 4, Theorem 2]{caprio2024error} which give the following bound 
\begin{align*}
  \mathcal{F}(\theta_n, \mu_n) - \log p_\theta^\star(y) &\geq \frac{l}{2}\left(\norm{\theta_n-\theta^\star}^2+ W_2(\mu_n, p_\theta^\star(\cdot|y))^2\right) \\
  &\geq \frac{l}{2}\norm{\theta_n-\theta^\star}^2,
\end{align*}
where $W_2$ denotes the Wasserstein distance between $\mu_n$ and the posterior. Using Corollary~\ref{cor:convergence} we finally have
\begin{align*}
    \norm{\theta_n-\theta^\star}^2 \leq \frac{2}{l}\frac{\KL(p_{\theta^\star}(\cdot|y)|\mu_0)+\norm{\theta^\star-\theta_0}^2}{\gamma_1} \prod_{k=1}^n(1-\gamma_k\min(l, 1))
\end{align*}
Combining this with Proposition~\ref{prop:lp} with $p=2$ gives the result.
\section{Additional details for the numerical experiments}



\subsection{Gaussian Mixture}
\label{app:mix}
\paragraph{Model specification}
For this model we have
\begin{align*}
    p_\theta(x)=p(x) = \begin{cases}
        1\qquad\qquad\qquad \text{w.p. }\alpha\\
        -1\qquad\qquad \text{w.p. }1-\alpha
    \end{cases},\qquad\qquad p_\theta(y|x) = \mathcal{N}(y; x\cdot\theta, 1).
\end{align*}
It follows that 
\begin{align*}
    U(\theta, x) = -\log \alpha + 0.5(y- x\theta)^2+0.5\log 2\uppi.
\end{align*}
\paragraph{Convexity and Lipschitz continuity}

It is easy to check that for both $x=1$ and $x=-1$ we have
\begin{align*}
    \nabla_\theta U(\theta, x) = -(y-x\theta)x
\end{align*}
from which we obtain
\begin{align*}
     U(\theta_2, x) -  U(\theta_1, x)- \ssp{\nabla_\theta    U(\theta_1, x), \theta_2-\theta_1} = x^2(\theta_2-\theta_1)^2,
\end{align*}
showing that $U$ is only locally smooth w.r.t. the Euclidean distance as $L=x^2$ but convex since $l=0$.

\paragraph{Experimental set up and further numerical results}
We simulate $1000$ data points from the model with $\theta=1$, consider $\alpha\in [0.5, 1]$ and select $\theta_0=-2$. For SMCs-LVM we set $\mu_0$ to be uniform over $\{-1,1\}$.
The initial distribution for the latent variables is given by an equal probability of allocation to each of the two components, we select $N=1000$, $\gamma_n\equiv 0.05$ and iterate for $T=300$ steps.

\subsection{Multimodal example}
\label{app:multimodal}
\paragraph{Model specification}
For this model we have
\begin{align*}
    U(\theta, x) = 0.475\log x+0.025 x +0.5x(y-\theta)^2
\end{align*}
and $p(y|\theta)$ is a t-distribution with location parameter $\theta$ and 0.05 degrees of freedom, it follows that the posterior $p_\theta(x|y)$ is a Gamma distribution with parameters $\alpha = 0.525+1, \beta = 0.025+(y-\theta)^2/2$. 

\paragraph{Convexity and Lipschitz continuity}
We further have that
\begin{align*}
     U(\theta_2, x) -  U(\theta_1, x)- \ssp{\nabla_\theta    U(\theta_1, x), \theta_2-\theta_1} = \frac{x}{2}(\theta_2-\theta_1)^2,
\end{align*}
showing that $U$ is only locally smooth w.r.t. the Euclidean distance as $L=x/2$ but convex since $l=0$ (as $x>0$ in this case).
However, $\nabla_x U(\theta, x) = 0.475/x+0.025+0.5(y-\theta)^2$ is not Lipschitz continuous w.r.t. $x$, this causes the ULA update employed in PGD and IPLA to be unstable as shown in Figure~\ref{fig:multimodal}.

\paragraph{Experimental set up}
We set $\theta_0=0$ and $\mu_0(x)=\textrm{Gamma}(x;1, 1)$.
Since $\nabla_x U$ is not Lipschitz continuous, to ensure that PGD and IPLA do not explode we pick $\gamma_n\equiv 0.001$ (larger values of $\gamma_n$ could be used for SMCs-LVM) and iterate for $T=2000$ steps, we fix $N=1000$.

\paragraph{SMC-MML}
\citeapp{johansen2008particle}  uses ideas borrowed from simulated annealing (see, e.g., \citeapp{van1987simulated}) to sample from $\pi^\beta(\theta) \propto \exp(-\beta K(\theta))$, where $K(\theta) =-\log\int_{\real^{d_x}} e^{-U(\theta, x)}d x$ is the marginal log-likelihood, and relies on the fact that as $\beta\to \infty$ the distribution $\pi^\beta$ concentrates on the maximisers of $K$.
 
\citeapp{johansen2008particle} introduces $\beta$ auxiliary copies of the latent variable and considers the extended target distribution $p_\beta(\theta, x_{1:\beta}) \propto\prod_{i=1}^\beta p_\theta(x_i, y)$, which admits $\pi^\beta(\theta)$ as marginal, and builds an SMC sampler targeting $p_\beta$.
The resulting method, named SMC-MML, provides an approximation of the posterior $p_\theta(x|y)$ too.

\subsection{Bayesian logistic regression}
\label{app:blr}

\paragraph{Convexity and Lipschitz continuity}

For this example to negative log-likelihood is given by
\begin{align*}
    U(\theta, x) = (d_x/2)\log(2\uppi) - \sum_{j=1}^{d_y}\left(y_j\log(s(v_j^Tx))+(1-y_j)\log(s(-v_j^Tx))\right)+\frac{\norm{x-\theta}^2}{2},
\end{align*}
from which we obtain
\begin{align*}
    \nabla_\theta  U(\theta, x) &= -( x-\theta).
\end{align*}
If follows that
\begin{align*}
     U(\theta_2, x) -  U(\theta_1, x)- \ssp{\nabla_\theta    U(\theta_1, x), \theta_2-\theta_1} = \frac{1}{2}\norm{\theta_2-\theta_1}^2,
\end{align*}
showing that $U$ is both relatively convex and relatively smooth w.r.t. the Euclidean norm with $l=L=1/2$.
\paragraph{Further numerical results}

Figure~\ref{fig:bayesian_lr_comparison} shows the result of one run of all algorithms when $N=100$, $\gamma_n\equiv 0.001$ and $T=6000$, all algorithms are initialised at $\theta_0=(0, 0, 0)$ and $X_0$ is sampled from a standard normal. We compare the estimated MLE with the true parameter $\theta = (2, 3, 4)$; all algorithms are in agreement altough IPLA returns much noisier results.

\begin{figure}
	\centering
	\begin{tikzpicture}[every node/.append style={font=\normalsize}]
		\node (img1) {\includegraphics[width = 0.4\textwidth]{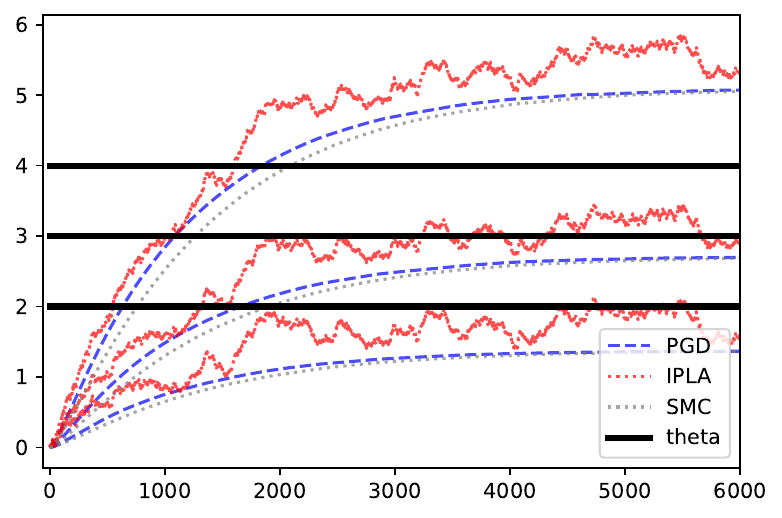}};
		\node[right=of img1, node distance = 0, xshift = -0.5cm] (img2) {\includegraphics[width = 0.4\textwidth]{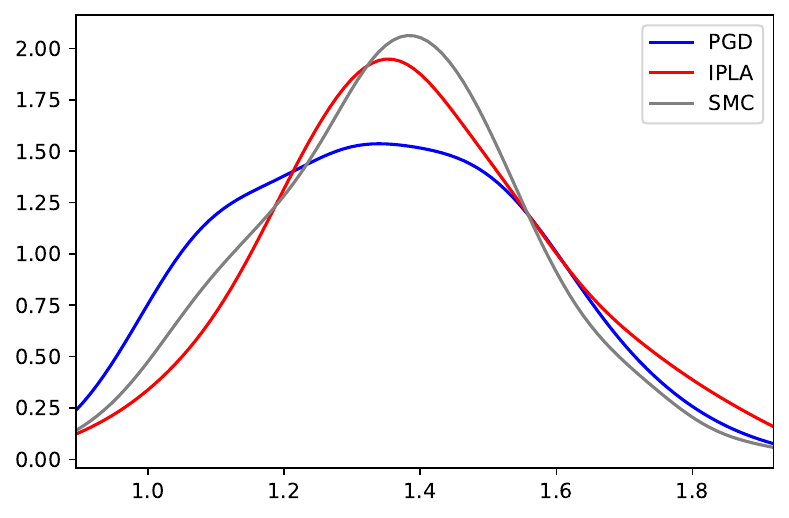}};
  \node[left=of img1, node distance = 0, rotate = 90, anchor = center, yshift = -0.8cm] {$\theta_n$};
    \node[left=of img2, node distance = 0, rotate = 90, anchor = center, yshift = -0.8cm] {$p_{\theta_T^N}(x|y)$};
		\node[below=of img1, node distance = 0, yshift = 1.2cm] {$n$};
		\node[below=of img2, node distance = 0, yshift = 1.2cm] {$x$};
	\end{tikzpicture}
	\caption{Comparison of $\theta$-iterates and first component  of the approximate posterior at the final time for PGD, IPLA and SMCs-LVM with $N=100$, $\gamma = 0.001$ and $T=6000$.}
	\label{fig:bayesian_lr_comparison}
\end{figure}

\begin{table}
\centering
\begin{tabular}{l|cc|cc|cc}
 & \multicolumn{2}{c}{$N=10$} & \multicolumn{2}{c}{$N = 50$}& \multicolumn{2}{c}{$N = 100$}\\
\hline\noalign{\smallskip}
Method & variance & runtime (s) & variance & runtime (s) & variance & runtime (s)\\
\hline\noalign{\smallskip}
PGD & $2.60\cdot10^{-4}$ &\textbf{0.78}& $6.43\cdot10^{-5}$ & 2.85& $2.59\cdot10^{-5}$& \textbf{7.48}\\
IPLA &1.12 &0.80&$1.74\cdot10^{-1}$&\textbf{2.70}& $8.80\cdot10^{-2}$& 7.57\\
SMCs-LVM &$\mathbf{3.20\cdot10^{-5}}$ &4.57&$\mathbf{6.01\cdot10^{-6}}$& 25.76 &$\mathbf{2.54\cdot10^{-6}}$& 50.69\\
\end{tabular}
\caption{Variance of estimates of the first component of $\theta$ for the Bayesian logistic regression model with $N=10, 50, 100$ and their computational times. $\gamma = 0.001, T=6000$ throughout all experiments.  The best values are in bold.}
\label{tab:bayesian_lr2}
\end{table}

\begin{table}
\centering
\begin{tabular}{l|cc|cc|cc}
 & \multicolumn{2}{c}{$N=10$} & \multicolumn{2}{c}{$N = 50$}& \multicolumn{2}{c}{$N = 100$}\\
\hline\noalign{\smallskip}
Method & variance & runtime (s) & variance & runtime (s) & variance & runtime (s)\\
\hline\noalign{\smallskip}
PGD & $1.65\cdot10^{-4}$ &\textbf{0.78}& $3.95\cdot10^{-5}$ & 2.85& $1.29\cdot10^{-5}$& \textbf{7.48}\\
IPLA &1.11 &0.80&$1.71\cdot10^{-1}$&\textbf{2.70}& $8.65\cdot10^{-2}$& 7.57\\
SMCs-LVM &$\mathbf{2.46\cdot10^{-5}}$ &4.57&$\mathbf{4.08\cdot10^{-6}}$& 25.76 &$1.68\cdot10^{-6}$ & 50.69\\
\end{tabular}
\caption{Variance of estimates of the first component of $\theta$ for the Bayesian logistic regression model with $N=10, 50, 100$ and their computational times. $\gamma = 0.001, T=6000$ throughout all experiments.  The best values are in bold. }
\label{tab:bayesian_lr3}
\end{table}

\subsection{Stochastic block model}
\label{app:sbm}
\paragraph{Model specification}

Recall that for an undirected graph with $d_x$ nodes we have $p_\theta(x) = \mathbb{P}(x=q) =p_q$ for $q=1, \dots, Q$. The number of edges is drawn from
$y_{ij}|x_i, x_j \sim \textrm{Bernoulli}(\nu_{x_ix_j})$,
so that $p_\theta(y|x) =\prod_{i, j=1}^{d_x} (1-\nu_{x_ix_j})^{1-y_{ij}}\nu_{x_ix_j}^{y_{ij}}$.
b
The joint negative log-likelihood for this model is
\begin{align*}
    U(\theta, x) &= -\sum_{q=1}^Q\sum_{i=1}^{d_x} \mathbf{1}\{x_i=q\}\log p_q\\ 
    &- \sum_{q, l=1}^Q \sum_{i=1}^{d_x}\sum_{j\neq i}\mathbf{1}\{x_i=q, x_j=l\}\left(y_{ij}\log \nu_{q,l} + (1-y_{ij})\log (1-\nu_{q,l}) \right).
\end{align*}

\paragraph{Convexity and Lipschitz continuity}

The gradients are given by
\begin{align*}
    \nabla_{p_q}U(\theta, x) &= -\sum_{i=1}^{d_x}\mathbf{1}\{x_i=q\}\frac{1}{ p_q}\\
    \nabla_{\nu_{q, l}}U(\theta, x) &= - \sum_{i=1}^{d_x}\sum_{j\neq i}\mathbf{1}\{x_i=q, x_j=l\}\left(\frac{y_{ij}}{\nu_{q,l}} - \frac{1-y_{ij}}{1-\nu_{q,l}} \right).
\end{align*}
If follows that
\begin{align*}
     &U(\theta_2, x) -  U(\theta_1, x)- \ssp{\nabla_\theta    U(\theta_1, x), \theta_2-\theta_1} \\
     &\qquad=\sum_{q=1}^Q\left(\log p_q^{(1)}-\log p_q^{(2)}+\frac{p_q^{(2)}-p_q^{(1)}}{ p_q^{(1)}}\right)\sum_{i=1}^{d_x} \mathbf{1}\{x_i=q\}\\
     &\qquad+ \sum_{q, l=1}^Q \left(\log \nu_{q,l}^{(1)}-\log \nu_{q,l}^{(2)}+\frac{\nu_{q,l}^{(2)}-\nu_{q,l}^{(1)}}{ \nu_{q,l}^{(1)}}\right)\sum_{i=1}^{d_x}\sum_{j\neq i}\mathbf{1}\{x_i=q, x_j=l\}y_{ij}\\
     &\qquad+ \sum_{q, l=1}^Q \left(\log (1-\nu_{q,l}^{(1)})-\log (1-\nu_{q,l}^{(2)})+\frac{\nu_{q,l}^{(2)}-\nu_{q,l}^{(1)}}{ 1-\nu_{q,l}^{(1)}}\right)\sum_{i=1}^{d_x}\sum_{j\neq i}\mathbf{1}\{x_i=q, x_j=l\} (1-y_{ij})
\end{align*}
where we set $\theta_i = \left((p_q^{(i)})_{q=1}^Q, (\nu_{ql}^{(i)})_{q,l=1}^Q)\right)$.

Since $t\mapsto -\log t$ and $t\mapsto -\log (1-t)$ do not have Lipschitz continuous gradients on $[0, 1]$ we conclude that the relative smoothness required by Assumption~\ref{ass:convex} is not satisfied with $h=\norm{\cdot}^2/2$.
However, due to the convexity of $t\mapsto -\log t$ and $t\mapsto -\log (1-t)$ we have that $U(\theta, x)$ is convex (but not strongly) relatively to $h=\norm{\cdot}^2/2$.

As all the components of $\theta$ are constrained to $[0, 1]$ we enforce the constraint using component-wise logarithmic barriers $h(t)= -\log(1-t)-\log t$.

It follows that $\nabla h(t) = 1/(1-t)-1/t$ and  $(\nabla h)^{-1}(t) = (t-2+\sqrt{t^2+4})/(2t)$ and the update for each component of  $\theta$ becomes
\begin{align*}
    \theta_{n+1}(i) &= (\nabla h)^{-1}\left(\frac{1}{1-\theta_n(i)}-\frac{1}{\theta_n(i)} - \gamma_{n+1}\int \nabla_\theta U(\theta_n, x)\mu_n(x)dx\right).
\end{align*}

\bibliographystyleapp{apalike}  
\bibliographyapp{biblio_lvm_md.bib}
\end{document}